\newtheorem{fact}[theorem]{Fact}
\newcommand{\dc}{\mathop{\downarrow}\nolimits}
\newcommand{\upc}{\mathop{\uparrow}\nolimits}
\newcommand{\subword}{\preccurlyeq}
\newcommand{\equivdef}{\stackrel{\mbox{\begin{scriptsize}def\end{scriptsize}}}{\Leftrightarrow}}
\newcommand{\step}[1]{\xrightarrow{\!\!#1\!\!}}
\crefname{conjecture}{conjecture}{conjectures}
\Crefname{conjecture}{Conjecture}{Conjectures}
\title{On the State Complexity of Square Root and Substitution for Subword-Closed Languages}
\titlerunning{State Complexity of Square Root and Substitution}
\author{Jérôme Guyot\inst{1}}
\authorrunning{J. Guyot}
\institute{DER Informatique, Univ. Paris-Saclay, ENS Paris-Saclay, Gif-sur-Yvette, France\\ \email{jerome.guyot@ens-paris-saclay.fr}}
\date{}
\begin{document}

\maketitle

\begin{abstract}

This paper investigates the state complexities of subword-closed and superword-closed languages, comparing them to regular languages. We focus on the square root operator and the substitution operator. We establish an exponential lower bound for the $n$-th roots of superword-closed languages. For subword-closed languages we analyze in detail a specific instance of the square root problem for which a quadratic complexity is proven. For the substitution operator, we show an exponential lower bound for the general substitution. In the case of singular substitution, we show a quadratic upper bound when the languages are subword closed and based on disjoint alphabets. We conjecture a quadratic upper bound in the general case of singular substitution with subword-closed languages and prove it in the case where the substitution applies on a directed language.

\end{abstract} 

\newpage
\section*{Introduction}

\subsubsection*{State complexity.}

The number of states of the canonical automaton recognizing a regular language
$L$ is known as its state complexity, denoted $\kappa(L)$. It is a common
measure of the complexity of regular languages. Finite state automaton are often
used as data structure: the size of the automata thus becomes an important
parameter in the  complexity analysis of some algorithms. \\ 

For an operation or a function $f$ on regular languages,  the natural question
is  ``\emph{how does $\kappa(f(L))$ relate to $\kappa(L)$?}'' This leads to the
definition of \emph{the state complexity of $f$} as the function
$\phi_f:\mathbb{N} \to \mathbb{N}$ such that $\phi_f(n)$ is the maximum state complexity of
$f(L)$ with $L$ having state complexity at most $n$. This notion can be extended
to functions having multiple arguments, for example the state complexity of
intersection would be given by $\phi_\cap(n_1,n_2)$. This area of automata theory already has a rich literature and the recent survey
\cite{DBLP:journals/corr/GaoMRY15} describes the known results for a wide range
of operations and classes of languages. As it can be difficult to find the exact
complexity of some $f(L)$ or to give a uniform formula for the function $\phi_f$, the goal is often to obtain bounds on the complexity of $f(L)$ and on $\phi_f$. This induces a classification of the operations on regular languages and finite automata based on the growth of $\phi_f$. 

\subsubsection*{State complexity of subregular classes.}

It is often interesting to measure the state complexity of a function $f$ when
we restrict its argument to a subregular class. As some applications only focus
on a subclass of automata it becomes natural to study state
complexity on this restricted domain. For example, computational
linguistics use automata to encode lexicons that are always finite languages:
they are considered in  \cite{state_comp_finite_lang} while their
complement, cofinite languages, are considered in~\cite{bassino2010}. 
Linguistics are also interested in locally-testable languages~\cite{heinz2009},
and other areas like genomics or databases or pattern matching have their own
subclasses of interest.

The study of state complexity restricted to such subclasses has recently become quite
active after Brzozowski et al.\ initiated a systematic study of state complexity on
various fundamental classes of subregular
languages~\cite{brzozowski2012c,brzozowski2014,quotient_closed,ideals}.

\subsubsection*{Subword-closed and superword-closed languages.}

In the context of computer-aided verification, several algorithms for program
verification use well-quasi-ordered data domains~\cite{finkel,Abdulla2000AlgorithmicAO,Bertrand2013} and in particular, using
Higman's lemma, words ordered by the subword, or subsequence, ordering.\footnote{
        Here the terminology is not fixed. Some authors use ``subword'' for a factor, and use ``scattered
        subword'' for what we call a subword
        (we follow~\cite{sakarovitch83}).
        Superword-closed languages
        are sometimes called ``shuffle ideals'' (\cite{heam}) and even 
        ``all-sided ideals'' (\cite{ideals}). In~\cite{wqo} ``ideals'' 
        denote the directed subword-closed languages, while superword-closed 
        languages are called ``filters''. 
}
These algorithms handle subword-
and superword-closed languages.  

The state complexity of subword- and superword-closed languages has
not been analyzed extensively.\footnote{
For these languages the most studied question is to obtain them by taking subword- or superword-closure of arbitrary regular languages~\cite{heam,gruber2007b,gruber2009,scattered,philippe} and even of nonregular languages~\cite{bachmeier}. 
}
The main results are due to Brzozowski et al.\  who considered
subword-closed languages in~\cite{quotient_closed} and superword-closed
languages in~\cite{ideals}:  these works actually consider several subregular classes at once and only
focus on  the main operations:
 boolean combinations (with $\oplus$ denoting the symmetric difference), concatenation, iteration and mirror (denoted $L^R$). 
However, there exist other interesting operations to
consider as they also preserve the subword/superword closedness such as the shuffle. 

The following tables give the known bounds on the state complexity of a few operations. 

\begin{center}
State complexity for subword-closed languages
\begin{tabular}{|c|c|c|c|}
\hline
Operation & Upper Bound & Tightness requirement & References \\
\hline
$L \cap K$ &  $mn - (m+n-2)$ & $|\Sigma| \geq 2$ & \cite{quotient_closed} Theorem 2 \\

$L \cup K$ & $mn$ & $|\Sigma| \geq 4$ & \cite{quotient_closed} Theorem 2 \\

$L \setminus K$  & $mn-(n-1)$ & $|\Sigma| \geq 4$ & \cite{quotient_closed} Theorem 2 \\

$L \oplus K$ & $mn$ & $|\Sigma| \geq 2$ & \cite{quotient_closed} Theorem 2 \\

$L \cdot K$ & $m+n-1$ & $|\Sigma| \geq 2$ & \cite{quotient_closed} Theorem 3 \\

$L^*$ (and $L^+$) & 2 & $|\Sigma| \geq 2$ & \cite{quotient_closed} Theorem 4 \\

$L^R$ & $2^{n-2}+1$ & $|\Sigma| \geq 2n$ & \cite{quotient_closed} Theorem 5 \\

$L^k$ & $k(n-1) + 1$ & $|\Sigma| \geq 2 $ & \cite{hospod} Theorem 9 \\

\hline

\end{tabular}
\end{center}

\hfill \\

\begin{center}
State complexity for superword-closed languages
\begin{tabular}{|c|c|c|c|}
\hline
Operation & Upper Bound & Tightness requirement & References \\
\hline
$L \cap K$ & $mn$ &  $|\Sigma| \geq 2 $ & \cite{ideals} Theorem 7 \\

$L \cup K$ & $mn - (m+n-2)$ &  $|\Sigma| \geq 2 $ & \cite{JALC-2010-071} Theorem 5 \\

$L \setminus K$ & $mn-(m-1)$ &  $|\Sigma| \geq 2 $ &\cite{JALC-2010-071} Theorem 5 \\

$L \oplus K$ & $mn$ &  $|\Sigma| \geq 2 $ & \cite{ideals} Theorem 7 \\

$L \cdot K$ & $m+n-1$ &  $|\Sigma| \geq 1 $ & \cite{ideals} Theorem 9 \\

$L^*$ (and  $L^+$) & $n+1$ &  $|\Sigma| \geq 2 $ & \cite{ideals} Theorem 10 \\

$L^R$ & $2^{n-2}+1$ & $|\Sigma| \geq 2n - 4$ & \cite{ideals} Theorem 11 \\

$L^k$ & $k(n-1) + 1$  & $|\Sigma| \geq 1 $ & our \Cref{borne_L_k} \\

\hline

\end{tabular}
\end{center}
\hfill \\

\subsubsection*{Our contribution.}

We are interested in completing the picture and consider the state complexity of
other operations on subword-closed or superword-closed languages.

In the following sections, we focus on two operators: the
$n^{th}$-root operators and the substitution operator. For the root operators,
we show that they have  exponential complexity even when restricted to
 superword-closed languages. For the
subword closed languages, when $L = \dc (\{ w \})$ is the downward closure of a word $w$ it seems that
there is a quadratic upper bound, we do not know if this extends to the general
case of subword-closed languages. For the substitution operator $L^{a\leftarrow K}$, we show a
quadratic upper bound when $L$ and $K$ are subword-closed and based on disjoint
alphabets and conjecture a quadratic upper bound when $L$ and $K$ are subword-closed. Finally we proved a quadratic upper bound in the case where $L$ is directed  (without any hypothesis on the alphabets). \\

This work contributes to understanding the state complexities of subregular languages. It was done in the context of an initiation to research project at ENS Paris-Saclay. I warmly thank Philippe Schnoebelen for his valuable help and dedication to the project. I also thank Maelle Gautrin and Simon Corbard for initiating the research on the substitution operator.

\section{Preliminaries}

In this work we assume that the reader is familiar with finite automaton and regular languages. We assume a finite alphabet $\Sigma=\{a,b,\ldots\}$, use $\epsilon$ to denote the empty word, and write concatenation either $u \cdot v$ or $uv$. We write $\Sigma(u)$ for the set of letters occurring in a word $u$, and $|u|_a$ for the number of occurrences of letter $a$ in $u$.\\  

As we are going to work with subword closed languages that we call here downward closed languages, we must define the notion of subwords. 

\begin{definition}\label{subword_def}

Let $x,y \in \Sigma^*$, $x$ is a subword of $y$, denoted $x \subword y$, if and only if $y$ can be written as $y = u_1 x_1 \cdots u_n x_n u_{n+1}$
for some factorization  $x = x_1 \cdots x_n$ of $x$ and.
  some words $u_0,\ldots,u_n$, $u_i \in \Sigma^*$.

\end{definition}

\begin{example}

$\mathtt{abba}$ is a subword of $\mathtt{accbebda}$. 

\end{example}

Once we have the notion of subwords, we can take the subwords of a language $L$, this is called the downward closure of $L$. 

\begin{definition}\label{down_closure_def}

Let $L \subseteq \Sigma^*$, $\dc (L) = \{ x \in \Sigma^* \ | \ \exists \ y \in L, \ x \subword y \}$ is the \emph{downward closure} of $L$. $L$ is said downward closed if and only if $\dc (L) = L$.

\end{definition}

\begin{example}

$ \dc(\mathtt{a}^m)= \{ \mathtt{a}^i \ | \ 0 \leq i \leq m \} $ and  $ \{ \epsilon, \mathtt{a}, \mathtt{b}, \mathtt{ab} \} = \dc (ab) $ are downward closed.

\end{example}

There is a symmetric notion to downward closed languages. The idea is not to take the subwords of words of $L$ but to take the words such that the words of $L$ are subwords of them : superwords of $L$. This gives the upward closure of $L$. 

\begin{definition}\label{up_closure_def}

Let $L \subseteq \Sigma^*$, $\upc (L) = \{ x \in \Sigma^* \ | \ \exists \ y \in L, \ y \subword x \}$ denotes the \emph{upward closure} of $L$. $L$ is said upward closed if and only if $\upc (L) = L$. 

\end{definition}

\begin{example}

With $\Sigma = \{\mathtt{a}\}$, $ \upc(\mathtt{a}^m)= \mathtt{a}^m\mathtt{a}^* $ and  $\mathtt{a}^m\mathtt{a}^*$ is upward closed.

\end{example}

\begin{remark}

When taking the upward closure we need to specify the alphabet we are using. This is seen in the equality $\upc(L) = L \shuffle \Sigma^*$ linking superwords and shuffle.\footnote{
We do not recall the definition of the shuffle, denoted $L\shuffle L'$, of  languages. It can be found, e.g., in \cite{ideals} or \cite{heam}.
}
The alphabet will always be clear from the context when we talk of upward closed languages or upward closures.

\end{remark}

\begin{remark}

The downward and upward closures verify the Kuratowski closure axioms :

Let $L,K \subseteq \Sigma^* $ 

\begin{itemize}
\item $\dc (\emptyset) = \emptyset $ and $\upc (\emptyset) = \emptyset $.
\item $L \subseteq \dc (L)$ and $L \subseteq \upc (L)$.
\item $\ \dc (\dc (L)) = \dc (L) $ and $\upc (\upc (L)) = \upc (L)$. 
\item $\dc(L \cup K) = \dc(L) \cup \dc(K)$ and $\upc(L \cup K) = \upc(L) \cup \upc(K)$.

\end{itemize}

\end{remark}

The reason why we present the downward closedness and the upward closedness as dual notions comes from the following fact :

\begin{fact}
$L$ is downward closed if and only if its complement $\Bar{L}$, i.e., $\Sigma^*\setminus L$, is upward closed. Equivalently, $L$ is upward closed if and only if $\Bar{L}$ is downward closed.
\end{fact}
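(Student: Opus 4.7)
The plan is a direct contrapositive argument based on the definition of the subword ordering, which is symmetric in the roles of sub- and super-word.

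First I would prove the forward direction of the first equivalence: assuming $L$ is downward closed, I want to show $\bar{L}$ is upward closed. Pick any $x \in \bar{L}$ and any $y$ with $x \subword y$; the goal is $y \in \bar{L}$. If instead $y \in L$, then since $L = \dc(L)$ and $x \subword y$, we would have $x \in L$, contradicting $x \in \bar{L}$. Hence $y \in \bar{L}$, so $\bar{L}$ is upward closed.

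Next I would prove the converse: assuming $\bar{L}$ is upward closed, show $L$ is downward closed. Pick $y \in L$ and $x \subword y$; the goal is $x \in L$. If $x \in \bar{L}$, then by $\bar{L} = \upc(\bar{L})$ and $x \subword y$ we would get $y \in \bar{L}$, contradicting $y \in L$. Hence $x \in L$.

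Finally, the second equivalence of the statement is obtained for free by applying the first equivalence to $\bar{L}$ in place of $L$ and using $\overline{\bar{L}} = L$. The only subtle point to flag is that the argument relies on the alphabet $\Sigma$ being fixed throughout, so that complementation, $\dc$ and $\upc$ are all taken with respect to the same $\Sigma^*$; I do not expect any real obstacle, as this fact is essentially a restatement of the fact that $\subword$ is a binary relation used symmetrically in the two closure definitions.
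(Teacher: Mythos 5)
Your argument is correct: the paper states this as a Fact without proof, and your two contrapositive directions plus the double-complement observation are exactly the standard (and intended) justification, with the correct caveat that $\dc$, $\upc$ and complementation are all taken relative to the same fixed $\Sigma^*$. Nothing is missing.
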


Another important fact is Haines Theorem: if $L\subseteq \Sigma^*$ is downward closed or upward closed then it is regular.

Let us now define  the state complexity of a regular language. 

\begin{definition}\label{state_complexity_def}

Let $L \subseteq \Sigma^*$ be a regular language. We write $sc(L)$ for  the number of states of the canonical automaton recognizing $L$ and call it
 the \emph{state complexity} of $L$.

\end{definition}

\begin{remark}

The minimal DFA is just the canonical DFA with unproductive transitions and states removed.

\end{remark}

However, reasoning on automaton is not always the easiest way to write formal proofs. For this we will often prefer using the formalism of left-quotients that we will just call \emph{quotients}. For a language $L$ we let $\mathcal{R}(L)$ denote the set of quotients of $L$ and we write
$\kappa(L)$ for   $|\mathcal{R}(L)|$, i.e., the number of different quotients of $L$.

\begin{theorem}[\cite{JALC-2010-071}]
For $L$ a regular language, $sc(L)=\kappa(L)$.
\end{theorem}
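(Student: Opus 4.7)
The plan is to identify the canonical automaton of $L$ with the so-called \emph{quotient automaton} $\mathcal{A}_L$, whose states are exactly the distinct left quotients of $L$, and then show that this automaton is both a DFA recognizing $L$ and minimal.

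First I would construct $\mathcal{A}_L = (\mathcal{R}(L), \Sigma, \delta, q_0, F)$ by setting $q_0 := L = \epsilon^{-1}L$, $F := \{u^{-1}L \in \mathcal{R}(L) \mid \epsilon \in u^{-1}L\} = \{u^{-1}L \mid u \in L\}$, and $\delta(u^{-1}L, a) := a^{-1}(u^{-1}L) = (ua)^{-1}L$. The key observation needed here is that this transition function is well-defined: if $u^{-1}L = v^{-1}L$ as sets, then $(ua)^{-1}L = (va)^{-1}L$, which is immediate since $a^{-1}$ commutes with the equality of languages. Then a straightforward induction on $|w|$ shows $\widehat\delta(L, w) = w^{-1}L$, and hence $w \in L \iff \epsilon \in w^{-1}L \iff \widehat\delta(L,w) \in F$, so $\mathcal{A}_L$ recognizes $L$. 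By Haines' theorem (or just the assumption that $L$ is regular), $\mathcal{R}(L)$ is finite, so $\mathcal{A}_L$ is a genuine DFA with $\kappa(L)$ states.

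Next I would check that $\mathcal{A}_L$ is minimal. Reachability is built-in: every state $u^{-1}L$ is reached by reading $u$ from $q_0$. For distinguishability, if $u^{-1}L \neq v^{-1}L$ then some word $w$ witnesses $u^{-1}L \neq v^{-1}L$, meaning $uw \in L$ while $vw \notin L$ (or vice versa); hence these two states are Myhill--Nerode inequivalent. Thus $\mathcal{A}_L$ has no mergeable states and is the minimal (canonical) DFA of $L$. By definition $sc(L)$ is its number of states, giving $sc(L) = |\mathcal{R}(L)| = \kappa(L)$.

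The only mild subtlety is being careful about what ``canonical automaton'' means in \Cref{state_complexity_def}: the remark in the excerpt indicates that the canonical DFA is allowed to carry unproductive states/transitions, while the minimal DFA is obtained by trimming. I would therefore be slightly more precise and note that $\mathcal{A}_L$ as constructed corresponds to the canonical (total, deterministic) DFA, and that all its states are reachable and pairwise distinguishable by the argument above, so neither trimming nor merging reduces its size. This is where I expect the only real pitfall to lie; the rest of the argument is purely routine bookkeeping about quotients.
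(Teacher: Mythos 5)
Your proof is correct: it is the standard quotient-automaton (Nerode--Myhill/Brzozowski) argument, and you correctly handle the one convention issue that matters here, namely that the canonical DFA is total, so the empty quotient counts as a state on both sides of the equality. The paper gives no proof of this statement and simply cites the literature, so your argument coincides with the classical proof that the citation stands for.
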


\begin{remark}
Historically, this theorem was proven by Nerode and Myhill in 1957.

The notation $\mathcal{R}(L)$ should be read as $\mathcal{R}_{\Sigma}(L)$ since the set of quotients depends on the alphabet. Again we leave the alphabet implicit when there is no ambiguity. 

\end{remark}

\begin{example}

$ \kappa(\{\mathtt{a}^m\})= m+2 $ as the quotients are $\{ \mathtt{a}^i \ | \ 0 \leq i \leq m \}$ and $\emptyset$.

\end{example}

Let us introduce the notion of state complexity through a less trivial example. For this we prefer the notation $L/x$ (over  $x^{-1}L$) for a left quotient. This operation has a counterpart in the derivatives of regular expressions, which we  denote similarly with $e/x$. 
 
\begin{lemma}\label{state_complexity_word}
    For any word  $w$, one has  $\kappa(\dc (w)) = |w| + 2$.
\end{lemma}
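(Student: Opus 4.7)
The plan is to compute all the left-quotients of $\dc(w)$ explicitly and show there are exactly $|w|+2$ of them. Write $w = a_1 a_2 \cdots a_n$ with $n = |w|$, and for each $i \in \{0,\ldots,n\}$ let $w_i$ denote the suffix $a_{i+1} \cdots a_n$ (so $w_0 = w$ and $w_n = \epsilon$).

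The key step is identifying $\dc(w)/x$ for every input $x \in \Sigma^*$. I would introduce a \emph{leftmost embedding} of $x$ into $w$: greedily scan $w$ from left to right, matching each letter of $x$ as soon as possible; this either fails (meaning $x \not\subword w$) or succeeds at some position $i$, leaving the suffix $w_i$ unread. I would then prove the identity
\[
 \dc(w)/x \;=\;
 \begin{cases}
  \dc(w_i) & \text{if the leftmost embedding of $x$ in $w$ succeeds and leaves suffix $w_i$,}\\
  \emptyset & \text{if $x \not\subword w$.}
 \end{cases}
\]
The left-to-right inclusion uses the fact that any embedding of $xy$ in $w$ can be modified so that $x$ uses the leftmost possible positions, after which $y$ must embed into the remaining suffix $w_i$. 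The right-to-left inclusion is immediate: given $y \subword w_i$, concatenating with a leftmost embedding of $x$ yields $xy \subword w$. This shows the set of quotients is included in $\{\dc(w_0),\dc(w_1),\ldots,\dc(w_n),\emptyset\}$, giving at most $n+2$ quotients.

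For the lower bound, I would exhibit all $n+2$ quotients as genuinely distinct. Each suffix $w_i$ is itself realised as $\dc(w)/(a_1 \cdots a_i)$ (the leftmost embedding of the prefix is itself), and $\emptyset$ is realised by any word $x \not\subword w$, e.g.\ $x = w \cdot a_1$. To see they are pairwise distinct, observe that $\emptyset \neq \dc(w_i)$ since $\epsilon \in \dc(w_i)$, and for $i<j$ the suffix $w_i$ is strictly longer than $w_j$, so $w_i \in \dc(w_i)$ but $w_i \notin \dc(w_j)$ as no word can be a subword of a strictly shorter word. Hence $\kappa(\dc(w)) = n+2 = |w|+2$.

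The only point requiring genuine care is the leftmost-embedding argument establishing the formula for $\dc(w)/x$; everything else is immediate. There is no real obstacle here, mainly a need to phrase the greedy matching precisely enough that the identity $\dc(w)/x = \dc(w_i)$ is evident in both directions.
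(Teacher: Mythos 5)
Your proof is correct and follows essentially the same route as the paper: the ``leftmost embedding'' of $x$ is exactly the paper's choice of the smallest prefix $w_1$ of $w$ with $x \subword w_1$, yielding the identity $\dc(w)/x = \dc(w_2)$ (or $\emptyset$ when $x \not\subword w$). You are in fact slightly more explicit than the paper in checking that the $n+2$ candidate quotients are attained and pairwise distinct, which the paper leaves implicit.
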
 

\begin{proof}
We show that    the quotients of $\dc (w)$ are exactly the empty set and the  $\dc (v)$ for all $v$ that are suffix of $w$, thus $\kappa(\dc (w)) = |w| + 2$. \\
 
    Let $x \in \Sigma^*$, if $x$ is not a subword of $w$ then $\dc (w) / x = \emptyset$. If $x \subword w$, let $w_1$ be the smallest prefix of $w$ such that $x \subword w_1$ and write $w = w_1 w_2$, then $\dc (w) /x = \dc (w_2) $. In fact, for any subword $y$ of $w_2$, $xy \subword w$. And for $y$ such that $xy \subword w$, then we can factorize $w$ in $w = w_1 w_2$ such that $x \subword w_1$ and $y \subword w_2$. 
\qed
\end{proof}

To understand better how we compute quotients of downward closed languages we introduce the following lemma. This is an essential property that will be often used when studying the structure of quotients. 

\begin{lemma}

Let $L$ be a downward closed language and $x,y$ two words with $x \subword y$. Then $L/y \subseteq L/x$. 	

\end{lemma}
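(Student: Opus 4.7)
The plan is to unfold the definition of left quotient and use closure under subwords directly. Recall that $L/y = \{w \in \Sigma^* \mid yw \in L\}$, so to show $L/y \subseteq L/x$, I would take an arbitrary $w \in L/y$ and aim to prove $xw \in L$.

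The key intermediate observation is that $\subword$ is preserved under concatenation on the right: from $x \subword y$ one obtains $xw \subword yw$ for any $w \in \Sigma^*$. This is immediate from \Cref{subword_def}: if $y = u_0 x_1 u_1 \cdots x_n u_n$ witnesses $x = x_1 \cdots x_n \subword y$, then $y w = u_0 x_1 u_1 \cdots x_n (u_n w)$ witnesses $xw \subword yw$, simply by absorbing $w$ into the final block $u_n$.

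Combining these two ingredients gives the result in one step. Since $w \in L/y$, we have $yw \in L$; since $xw \subword yw$ and $L$ is downward closed, we conclude $xw \in L$, i.e., $w \in L/x$. This holds for every $w \in L/y$, so $L/y \subseteq L/x$ as required.

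There is no real obstacle here: the statement is essentially the monotonicity of quotient with respect to the subword order, and the only non-trivial step is the syntactic verification that $x \subword y$ lifts to $xw \subword yw$, which I would spell out in a single sentence by extending the factorization witnessing $x \subword y$.
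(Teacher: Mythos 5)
Your proof is correct and follows essentially the same route as the paper: take $w \in L/y$, note $xw \subword yw \in L$, and apply downward closedness to get $xw \in L$. The only difference is that you explicitly verify the monotonicity step $x \subword y \Rightarrow xw \subword yw$ by extending the factorization, which the paper's proof states without justification.
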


\begin{proof}
Let $w \in L/y$, then we get by definition $yw \in L$. However, as $x \subword y$, we also have $xw \subword yw$. As $L$ is downward closed, this implies $xw \in L$, hence $w \in L/x$. 
\qed
\end{proof}

\begin{remark}
$\emptyset$ is a quotient of $L$ if and only if $L \neq \Sigma^*$. 

\end{remark}

When it is too hard to exactly find the set of quotients, we can use dividing sets to obtain lower bounds on the state complexity of the language. 

\begin{definition}
	Let $L$ a language and $\mathcal{F}$ a set of words, $\mathcal{F}$ is a dividing set for $L$ iff $\forall \ x\neq y \in \mathcal{F}, L/x \neq L/y $. 
\end{definition}

\begin{remark}
	By definition of a dividing set we have that if $\mathcal{F}$ is a dividing set of $L$ then $\kappa(L) \geq |\mathcal{F}|$. 
\end{remark}

\section{Root operators}

In this section we will be working on the root operators, that are defined as the $n^{th}$-root and the union of those roots. Formally:

\begin{definition} Let $L \subset \Sigma^*$, and $k \in \mathbb{N}_{>0}$, we let
\begin{xalignat*}{2}
\sqrt[k]{L} &= \{ x \ | \ x^k \in L \}\;,
&
\sqrt[*]{L} &= \bigcup_{k \in \mathbb{N}_{>0}} \sqrt[k]{L}\:.
\end{xalignat*}
When $k=2$ we may just write $\sqrt{L}$.
\end{definition}

As recalled in the introduction, the root operators preserve  regularity and  downward/upward closedness. It is known that the state complexity of these operations are exponential in the general case of regular languages~\cite{krawetz2005}. We will prove exponential lower bounds in the case of upward closed languages. For downward closed languages, we will only focus on the square root and study the case of $L = \dc (w)$ for $w$ a word to conjecture a quadratic upper bound.

\subsection{Upward closed languages}\label{sqrt_upward_close}

Let us first show that the root operator preserves  upward closedness. 

\begin{proposition}

	Let $L$ be a upward closed language, then $\sqrt[k]{L}$ and $\sqrt[*]{L}$ are upward closed for $k \in \mathbb{N}_{>0}$. 

\end{proposition}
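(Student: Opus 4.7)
The plan is to reduce both statements to the single basic fact that the subword relation is compatible with concatenation, together with the upward closedness of $L$.

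First I would treat $\sqrt[k]{L}$ for a fixed $k \in \mathbb{N}_{>0}$. Pick an arbitrary $x \in \sqrt[k]{L}$, so by definition $x^k \in L$, and pick any $y \in \Sigma^*$ with $x \subword y$; the goal is to prove $y \in \sqrt[k]{L}$, i.e.\ $y^k \in L$. The key step is the observation that $x \subword y$ implies $x^k \subword y^k$. This follows from the general monotonicity of $\subword$ under concatenation: if $x_1 \subword y_1$ and $x_2 \subword y_2$ then $x_1 x_2 \subword y_1 y_2$, which is immediate from \Cref{subword_def} (concatenating the two factorizations witnessing $x_1 \subword y_1$ and $x_2 \subword y_2$ yields a factorization witnessing $x_1 x_2 \subword y_1 y_2$). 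Iterating this $k-1$ times from $x \subword y$ gives $x^k \subword y^k$. Since $L$ is upward closed and contains $x^k$, it contains $y^k$, hence $y \in \sqrt[k]{L}$.

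For $\sqrt[*]{L}$, the shortest route is to invoke that the class of upward closed languages (over a fixed alphabet $\Sigma$) is closed under arbitrary unions: this is essentially the dual of the Kuratowski-style property $\upc(L \cup K) = \upc(L) \cup \upc(K)$ recalled in the preliminaries, extended to infinite unions — if every $L_i$ equals $\upc(L_i)$, then $\bigcup_i L_i \subseteq \upc\bigl(\bigcup_i L_i\bigr) = \bigcup_i \upc(L_i) = \bigcup_i L_i$. Applying this to the family $\{\sqrt[k]{L}\}_{k \in \mathbb{N}_{>0}}$, already shown upward closed, yields that $\sqrt[*]{L} = \bigcup_{k} \sqrt[k]{L}$ is upward closed.

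There is no real obstacle here: the whole argument rests on monotonicity of $\subword$ under concatenation, which is a one-line observation. The only place where one should be slightly careful is not to confuse the upward closure operator $\upc$ (which depends on $\Sigma$) with the property of being upward closed; since $\sqrt[k]{L} \subseteq \Sigma^*$ inherits the same alphabet as $L$, this causes no issue.
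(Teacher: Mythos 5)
Your proposal is correct and follows essentially the same route as the paper: the core step in both is that $x \subword y$ implies $x^k \subword y^k$ (monotonicity of $\subword$ under concatenation), after which upward closedness of $L$ gives $y^k \in L$. For $\sqrt[*]{L}$, invoking closure of upward closed languages under unions is just a repackaging of the paper's argument, which picks the witnessing $k$ and reuses the fixed-$k$ case.
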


\begin{proof}
	Let $k \in \mathbb{N}$, let $x \subword y \in \sqrt[k]{L}$. Then $x^k \in L $ and $x^k \subword y^k $. As $L$ is upward closed, $y^k \in L$. Thus $y \in \sqrt[k]{L}$.
	
	If $x \subword y \in \sqrt[*]{L}$, then there exists $k \in \mathbb{N}$ such that $x \subword y \in \sqrt[k]{L}$ and we conclude as before.
\qed
\end{proof}

Let $\Sigma_n$ be an alphabet having at least $n$ distinct letters $a_1,... \ ,a_n$. We denote by $V_n$ the $n$ letter word of length $n$ : $ V_n = \mathtt{a}_1 \cdots \mathtt{a}_n $. For example, $V_4 = \mathtt{abcd}$.

\begin{proposition}\label{prop-2-n-quotients}
    $\kappa(\sqrt[k]{\upc (V_n)}) \geq 2^n$ when $k\in \mathbb{N}_{>0}$ and $\kappa(\sqrt[*]{\upc (V_n)}) \geq 2^n$.
    
\end{proposition}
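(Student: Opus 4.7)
The plan is to exhibit a dividing set of size $2^n$ for $\sqrt[k]{\upc(V_n)}$ (for each $k\geq 2$) and for $\sqrt[*]{\upc(V_n)}$; by the remark following the definition of a dividing set this gives $\kappa \geq 2^n$ in both cases. For each subset $S=\{i_1<\cdots<i_s\}\subseteq\{1,\dots,n\}$, I take $y_S = a_{i_1}a_{i_2}\cdots a_{i_s}$ (with $y_\emptyset=\epsilon$), so the family $\{y_S\}_S$ has exactly $2^n$ pairwise distinct elements.

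To show that $S\neq T$ implies $\sqrt[k]{\upc(V_n)}/y_S \neq \sqrt[k]{\upc(V_n)}/y_T$, I pick $j\in S\triangle T$, say $j\in S\setminus T$, and use the distinguishing word $z_j = a_1\cdots a_{j-1}a_{j+1}\cdots a_n$, i.e.\ $V_n$ with $a_j$ removed. For the positive side I exhibit $V_n$ as a subword of $(y_S z_j)^k = y_S z_j y_S z_j\cdots$ via the factorisation $V_n = (a_1\cdots a_{j-1})\cdot a_j\cdot(a_{j+1}\cdots a_n)$: the prefix block is read from the first copy of $z_j$, the middle letter $a_j$ from the second copy of $y_S$ (which contains $a_j$ because $j\in S$), and the suffix block from the second copy of $z_j$. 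This uses two pairs $y_S z_j$, so it works whenever $k\geq 2$, yielding $z_j\in \sqrt[k]{\upc(V_n)}/y_S$. For the negative side, $j\notin T$ means $a_j$ is absent from $y_T$, and $a_j$ is absent from $z_j$ by construction; hence $(y_T z_j)^k$ contains no occurrence of $a_j$, and $V_n \not\subword (y_T z_j)^k$, giving $z_j\notin\sqrt[k]{\upc(V_n)}/y_T$.

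The same pair $(y_S, z_j)$ also settles the $\sqrt[*]$ case: the positive direction upgrades through $\sqrt[2]{\upc(V_n)}\subseteq\sqrt[*]{\upc(V_n)}$, and the negative direction is uniform in the exponent since $(y_T z_j)^{k'}$ avoids $a_j$ for every $k'\geq 1$, hence $z_j\notin\sqrt[*]{\upc(V_n)}/y_T$. The one subtlety is that the factorisation genuinely uses two copies of $y_S z_j$, so the argument really needs $k\geq 2$; for $k=1$ one has $\sqrt[1]{\upc(V_n)}=\upc(V_n)$, of state complexity only $n+1$, so the content of the bound $2^n$ concerns the $k\geq 2$ range together with $\sqrt[*]$. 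The main obstacle is thus spotting the right family and the trick of \emph{reserving a later copy of $y_S$ to supply the unique missing letter $a_j$}; once that idea is in hand, the verification reduces to the direct inspection above.
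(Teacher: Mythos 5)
Your argument is correct and is essentially the paper's own proof: the dividing set is the family of all $2^n$ subwords of $V_n$ (your $y_S$), and two distinct subwords are separated by the word $z_j$ obtained from $V_n$ by deleting a letter $a_j$ present in one but not the other, with exactly the same embedding of $V_n$ into $(y_S z_j)^k$. Your caveat about needing $k\geq 2$ is in fact a small correction to the paper, whose proof claims $V_n \subword (vx)^k$ ``for any $k\geq 1$'': at $k=1$ both the embedding and the stated bound fail, since $\sqrt[1]{\upc(V_n)}=\upc(V_n)$ has state complexity only $n+1$.
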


\begin{proof}

    Let $v,w$ be two subwords of $V_n$ such that $v \neq w$. 
    Then there is a letter of $V_n$ that is either in $v$ and not in $w$ or in $w$ and not in $v$. Without loss of generality, let us assume the letter $a$ is in $v$ but not in $w$.

    Then let $x$ be $V_n$ with $a$ removed. 
    Thus, for any $k \geq 1$, $V_n$ is a subword of $(vx)^k$ but not of $(wx)^k$. Thus, $vx \in \sqrt[k]{\upc (V_n)}$ but $wx \notin \sqrt[k]{\upc (V_n)}$. Hence, $\sqrt[k]{\upc (V_n)}/v \neq \sqrt[k]{\upc (V_n)}/w$ since one contains $x$ but not the other. Finally, as there are $2^n$ distinct subwords of $V_n$, there has to be at least $2^n$ distinct quotients in $\mathcal{R}(\sqrt[k]{\upc (V_n)})$. \\
    
    Now for $\kappa(\sqrt[*]{\upc (V_n)})$, as $wx \notin \sqrt[k]{\upc (V_n)}$ for all $k \in \mathbb{N}_{>0}$, $\mathcal{R}(\sqrt[*]{\upc (V_n)})$ contains at least $2^n$ quotients. 
\qed    
\end{proof}

This shows a lower bound in $2^n$, however experimentally for the square root, we have values close to $3^{n-1}$. In the appendix, we improve \Cref{prop-2-n-quotients} and give in \Cref{better_dividing_set_sqrt_down} a lower bound in $\mathcal{O}( 2.41^n )$. Furthermore, we give in \Cref{cut_shuffle} a direct characterization of $\sqrt{\upc(V_n)}$ by describing its minimal elements (i.e. its generators).

\subsection{Downward closed languages}

Let us first show that the root operator preserves downward closedness. 

\begin{proposition}

	Let $L$ be a downward closed language, then $ \forall k \in \mathbb{N}_{>0}, \ \sqrt[k]{L}$ and $\sqrt[*]{L}$ are downward closed.

\end{proposition}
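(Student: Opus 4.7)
The plan is to mirror the proof of the analogous proposition for upward-closed languages, with the subword relation reversed. The key fact is that $\subword$ is compatible with concatenation: if $y_1 \subword x_1$ and $y_2 \subword x_2$, then $y_1 y_2 \subword x_1 x_2$. This follows immediately from \Cref{subword_def} by concatenating the two factorizations. By a straightforward induction on $k$, this yields $y \subword x \Rightarrow y^k \subword x^k$ for all $k \in \mathbb{N}_{>0}$.

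With this in hand, I would fix $k \in \mathbb{N}_{>0}$ and take $y \subword x$ with $x \in \sqrt[k]{L}$. By definition, $x^k \in L$. The compatibility observation gives $y^k \subword x^k$, and since $L$ is downward closed, we conclude $y^k \in L$, i.e., $y \in \sqrt[k]{L}$. Hence $\sqrt[k]{L}$ is downward closed.

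For $\sqrt[*]{L}$, suppose $y \subword x \in \sqrt[*]{L}$. Then $x \in \sqrt[k]{L}$ for some $k \in \mathbb{N}_{>0}$, so by the previous case $y \in \sqrt[k]{L} \subseteq \sqrt[*]{L}$. Thus $\sqrt[*]{L}$ is downward closed.

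There is no real obstacle here: the proof is entirely symmetric to the upward-closed version and relies only on the monotonicity of the $k$-th power with respect to the subword ordering, which itself is an elementary consequence of the definition of $\subword$.
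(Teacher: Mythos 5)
Your proof is correct and follows essentially the same route as the paper's: fix $k$, use $y \subword x \Rightarrow y^k \subword x^k$ together with downward closedness of $L$, and reduce the $\sqrt[*]{L}$ case to some fixed $k$. The only difference is that you explicitly justify the monotonicity of the $k$-th power under $\subword$, which the paper takes as immediate.
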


\begin{proof}

	Let $k \in \mathbb{N}_{>0}$, let $y \subword x \in \sqrt[k]{L}$. Then $y^k \subword x^k \in L$. As $L$ is downward closed, $y^k \in L$. Thus $y \in \sqrt[k]{L}$. \\
	
	If $y \subword x \in \sqrt[*]{L}$, then there exists $k \in \mathbb{N}$ such that $y \subword x \in \sqrt[k]{L}$ and we conclude as before. 
\qed
\end{proof}

Let $V_n=a_1a_2\cdots a_n$ be like in \Cref{sqrt_upward_close} and define $W_n=V_n^3$. For example $W_2=\mathtt{ababab}$. To understand better what could be the square root of a downward closure, we study the example of $\sqrt{\dc(W_n)}$. \\

Recall that the conjugates of $u$ are all the words of the form $v'v$ for some factorization $vv'$ of $u$. E.g., the conjugates of $\mathtt{babar}$ are $\{\mathtt{babar,abarb,barba,arbab,rbaba\}}$.

\begin{proposition}\label{subword_conjugate}
$x\in \sqrt{\dc (W_n)}$ iff $x$ is a subword of a conjugate of $V_n$. 
\end{proposition}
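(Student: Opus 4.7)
The plan is to prove the two directions separately, with the backward one being easier. For $(\Leftarrow)$, given $x \subword u_2 u_1$ where $V_n = u_1 u_2$, I would observe that $x^2 \subword (u_2 u_1)^2 = u_2 \cdot (u_1 u_2) \cdot u_1 = u_2 V_n u_1$; since $V_n^3 = u_1 u_2 u_1 u_2 u_1 u_2$ contains $u_2 u_1 u_2 u_1$ as a factor starting at position $|u_1|+1$, we have $u_2 V_n u_1 \subword V_n^3$ and hence $x^2 \subword V_n^3$, i.e., $x \in \sqrt{\dc(W_n)}$.

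For the forward direction, my first step is to note that in $V_n^3$ each letter $a_i$ occurs exactly three times, so $2 |x|_{a_i} = |x^2|_{a_i} \leq 3$ forces $|x|_{a_i} \leq 1$: the letters of $x$ are pairwise distinct. Write $x = a_{i_1} \cdots a_{i_k}$ with distinct indices. The key combinatorial ingredient is a descent-counting argument. Partition the $3n$ positions of $V_n^3$ into three blocks of length $n$; inside a block the letters appear in strictly increasing index order, so in any subword embedding two consecutive letters occupying positions in the same block must have strictly increasing indices. Descents (consecutive pairs in the embedded sequence whose indices decrease) can therefore only occur at the two block boundaries. Applied to the index sequence $i_1, \ldots, i_k, i_1, \ldots, i_k$ of $x^2$, and letting $d$ be the number of descents internal to $x$, the total descent count is $2d + [i_k > i_1] \leq 2$.

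From here a short case analysis finishes the proof. The case $k = 1$ is trivial; if $d = 0$ and $i_k > i_1$ then $x$ is strictly increasing and $x \subword V_n$. The substantive case is $d = 1$ and $i_k < i_1$, in which $x$ has exactly one descent at some position $j$; writing $x_1 = a_{i_1} \cdots a_{i_j}$ and $x_2 = a_{i_{j+1}} \cdots a_{i_k}$, the combined inequalities force $i_{j+1} < \cdots < i_k < i_1 < \cdots < i_j$, so $x_2 x_1$ is strictly increasing. Then $x_2 x_1 \subword V_n$ yields a factorization $V_n = u_1 u_2$ with $x_2 \subword u_1$ and $x_1 \subword u_2$, and hence $x = x_1 x_2 \subword u_2 u_1$, a conjugate of $V_n$.

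The main obstacle is really the descent-counting bookkeeping, in particular noticing that the seam between the two copies of $x$ contributes an extra descent precisely when $i_k > i_1$; once this is clean, the case analysis is forced and extracting the factorization of $V_n$ from $x_2 x_1 \subword V_n$ in the cyclic case is straightforward.
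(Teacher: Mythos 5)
Your proof is correct, but the forward direction follows a genuinely different route from the paper's. For $(\Leftarrow)$ you do essentially what the paper does ($xx \subword u_2u_1u_2u_1$, a factor of $V_n^3$). For $(\Rightarrow)$, the paper pivots on the \emph{first letter} $a$ of $x$: it factors $V_n = vav'$ with no $a$ in $v$ and splits into two positional cases --- either the first copy of $x$ embeds in the prefix $vav'v$ of $W_n$, or the second copy embeds in the remaining suffix --- and in each case the absence of $a$ from $v$ (resp.\ $v'v$) immediately forces $x$ to sit inside a conjugate. You instead first use the multiplicity bound $2|x|_{a_i}\le 3$ to get that the letters of $x$ are pairwise distinct, and then run a descent-counting argument over the three monotone blocks of $V_n^3$: the block index of the embedding is non-decreasing, so at most two consecutive pairs of $x^2$ can be descents, giving $2d+[i_k>i_1]\le 2$ and hence $d\le 1$ with the cyclic order $i_{j+1}<\cdots<i_k<i_1<\cdots<i_j$, from which the conjugate factorization of $V_n$ is read off. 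Your argument is more global and makes the cyclic structure of $x$ explicit (and the "descents versus number of periods" count would adapt to powers other than $3$), at the price of slightly more bookkeeping; the paper's is shorter and more local, needing only that the distinguished letter occurs once per period. Two small points you should make explicit: the empty word (and generally $k\le 1$) is a subword of any conjugate, and the fact that two distinct descents cannot consume the same block boundary (immediate since the embedding positions strictly increase, so the block index strictly increases at each descent and can do so at most twice).
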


\begin{proof}
    Let us proceed by double implications, \\ 

    Let us start with the easy case, let $x$ be a subword of a conjugate $v'v$ of $V_n = vv'$. Then $xx$ is a subword of $v'vv'v$ which is a subword of $vv'vv'vv' = W_n$. Thus $x \in \sqrt{\dc(W_n)}$. \newline 

    Now for the $\implies$ direction, let $x \in \sqrt{\dc(W_n)}$, if $x = \epsilon$ then it is a subword of any conjugate. Otherwise, let us isolate the first letter of $x = ay$ and factorize $V_n = vav'$ with no $a$ occurring in $v$: we get $xx = ayay \subword vav'vav'vav'$. There are two cases : either the first $x$ embeds in the $vav'v$ prefix, or the
second $x$ embeds in the $v'vav$ suffix. In the 1st case $x \subword vav'v$ entails $x \subword av'v$ since $v$ has no $a$ and we are done since $av'v$ is a conjugate of $V_n$. In the second case $x\subword v'vav$ entails $x \subword av$ since $v'v$ has no $a$ and we are done again.    
\qed
\end{proof}

\begin{proposition}
    $\kappa(\sqrt{\dc (W_n)})= n^2 -n+3 $.  
\end{proposition}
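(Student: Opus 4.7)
The plan is to enumerate all quotients of $L = \sqrt{\dc(W_n)}$ explicitly and then count the distinct ones. By \Cref{subword_conjugate}, $L$ is exactly the set of subwords of the $n$ conjugates $C_i = a_i a_{i+1} \cdots a_{i-1}$ of $V_n$ (with indices cyclic modulo $n$). Two quotients can be singled out at once: $L/\epsilon = L$, and $L/x = \emptyset$ whenever $x$ is not a subword of any conjugate. The remainder of the work is to describe $L/x$ for nonempty $x$ that is a subword of some conjugate.

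The key claim I plan to establish is: for such an $x$, writing $a_{j_1}$ for its first letter and $a_{j_k}$ for its last, one has
\[
    L/x \;=\; \dc\bigl(a_{j_k+1}\,a_{j_k+2}\,\cdots\,a_{j_1-1}\bigr),
\]
where indices are taken in $\{1,\ldots,n\}$ modulo $n$, so the right-hand word has length $(j_1 - j_k - 1) \bmod n$. For $\supseteq$, I would verify that $x$ embeds as a subword of $C_{j_1}$ (a direct check, since the indices $j_1,\ldots,j_k$ appear in the cyclic order dictated by $C_{j_1}$) and that the suffix of $C_{j_1}$ after the last letter of this embedding is exactly $a_{j_k+1}\cdots a_{j_1-1}$. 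For $\subseteq$, I would argue that whenever $xy$ sits inside another conjugate $C_i$, the corresponding suffix of $C_i$ after the embedded $x$ is always a prefix of $a_{j_k+1}\cdots a_{j_1-1}$, so $y$ already lies in that downward closure. In particular, $L/x$ depends only on the pair $(j_1, j_k)$.

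Counting is then straightforward. A nonempty downward closure $\dc(w)$ of a single word determines $w$ uniquely as its maximum element, so distinct pairs $(j_1, j_k)$ yield distinct quotients except when both produce the empty word, which happens precisely when $j_1 \equiv j_k + 1 \pmod{n}$: these are $n$ pairs, all collapsing to $\{\epsilon\} = \dc(\epsilon)$. The remaining $n^2 - n$ pairs give $n^2 - n$ pairwise distinct non-trivial quotients. Adding $\dc(\epsilon)$, the language $L$ itself (which, for $n \geq 2$, is not of the form $\dc(w)$ because it has the $n$ distinct conjugates of $V_n$ as maximal elements), and $\emptyset$ yields the total $(n^2 - n) + 1 + 1 + 1 = n^2 - n + 3$. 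The trivial edge case $n = 1$ is verified directly.

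The main difficulty is the case analysis behind the key claim: the set of conjugates in which $x$ embeds, and the position of its last letter in each, depends on whether the index sequence $j_1,\ldots,j_k$ is strictly increasing (no wraparound, allowing $i \in \{j_k+1,\ldots,n\} \cup \{1,\ldots,j_1\}$) or contains exactly one descent (one wraparound, forcing $i \in \{j_k+1,\ldots,j_1\}$). In both scenarios one must check that the maximum possible suffix after $x$ is attained at $i = j_1$, so that the uniform description above goes through.
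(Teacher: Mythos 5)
Your proposal is correct and follows essentially the same route as the paper: both rely on \Cref{subword_conjugate}, show that for a nonempty $x$ embeddable in a conjugate the quotient $L/x$ is the downward closure of the cyclic interval between the last and first letters of $x$ (so it depends only on that pair of indices), and then count the $n^2-n$ resulting quotients together with $L$, $\{\epsilon\}$ and $\emptyset$. Your bookkeeping of the degenerate pairs (those with $j_k\equiv j_1-1$, which collapse to $\{\epsilon\}$, versus single letters with $j_1=j_k$) is handled a bit more explicitly than in the paper, but the argument is the same in substance.
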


\begin{proof}

	Using \Cref{subword_conjugate}, we have that $x\in \sqrt{\dc (W_n)}$ iff $x$ is a subword of a conjugate of $V_n$. Let us compute $\kappa(\sqrt{\dc (W_n)})$ by listing and counting its quotients. \\
	
	Let us write $L = \sqrt{\dc (W_n)}$. First, if $x = \epsilon$, $L/x = L$. If $x$ contains two times the letter $a$, then $L/x = \emptyset$. And if $x$ is a conjugate of $V_n$, $L/x = \epsilon$.\\
	
	There only remain the cases where $x$ is a strict non-empty
subword of a conjugate of $V_n$. We claim that all these $x$ yield different $L/x$ quotients. Let us write $x = a_{i_1}...a_{i_m}$ with $1 \leq m \leq n$. Let us denote by $v[i,j]$ the factor of $V_n$ starting by $a_i$ and finishing by $a_j$, $1 \leq i\leq j \leq n$. We get that the conjugates of $V_n$ are of the form $v[i,n] v[1,i-1]$. Thus, by \Cref{subword_conjugate} we have $L = \sum_{1 \leq i \leq n} \dc(v[i,n] v[1,i-1]) $. Then we get that if $i_1 < i_m < n$, 
\begin{align*}
	L/x &= \bigcup_{1 \leq i \leq n} \dc(v_i^n v_1^{i-1})/(a_{i_1}...a_{i_m}) \\
		&= \bigcup_{1 \leq i \leq i_1} \dc( v_{i_{m+1}}^n v_1^{i-1} )\\
		&= \dc(v_{i_{m+1}}^n v_1^{i_1-1})
\end{align*}
	Furthermore,  using the same equations, if $1 \leq i_m \leq i_1 -1$, $L/x = \dc v_{i_{m+1}}^{i_1-1}$. Thus we get a different quotient, different from $L$, $\{ \epsilon \}$ or $\emptyset$, for each pair $(i_1,i_m) \in \{1,...,n\} \times \{ i+1 [n],...,i+(n-1) [n] \}$ where $x[n]$ denotes $x$ mod $n$. Thus, we end up with $n^2 -n +3$ different quotients.
\qed  
\end{proof}	

We implemented an algorithm that given a language returns the minimal DFA of the square root of its downward closure. In practice as it can be quite slow we used it on words of length smaller than $12$. Based on the results, it seems that $W_n$ gives the largest state complexity for words of length $3n$. In the following, we try to understand this by studying the largest state complexity of $\sqrt{\dc (w)}$ for $w$ word of length $n$.

\begin{definition}
    Let $\alpha(n) = \max_{|w| \leq n} \kappa(\sqrt{\dc(w)})$. 
\end{definition}

\begin{proposition}
    For all $n \in \mathbb{N}$ we have 
    \begin{itemize}
        \item   $\alpha(n) \leq \alpha(n+1)$,
        \item   $\alpha(n) < \alpha(n+2)$.
    \end{itemize}
  
\end{proposition}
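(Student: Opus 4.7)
The first inequality $\alpha(n) \leq \alpha(n+1)$ is immediate from the definition: every word $w$ with $|w| \leq n$ also satisfies $|w| \leq n+1$, so the maximum on the right dominates the one on the left.

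For the strict inequality $\alpha(n) < \alpha(n+2)$, the plan is constructive. I would start from a word $w$ of length at most $n$ realizing $\alpha(n) = \kappa(\sqrt{\dc(w)})$, and exhibit a word $w'$ of length at most $n+2$ with $\kappa(\sqrt{\dc(w')}) > \kappa(\sqrt{\dc(w)})$. A natural candidate is $w' = w \cdot a \cdot a$, where $a$ is typically the first letter of $w$ (or, more generally, any letter of $\Sigma(w)$; any letter at all if $w = \epsilon$). Then $|w'| = |w| + 2 \leq n+2$, and since $w \subword w'$ we already get $\sqrt{\dc(w)} \subseteq \sqrt{\dc(w')}$ by the monotonicity of the square root on downward-closed languages.

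The intuition behind the gain is that the two extra copies of $a$ raise the host word's $a$-count by $2$, enabling new elements (typically words containing one more occurrence of $a$ than any element of $\sqrt{\dc(w)}$) to appear in $\sqrt{\dc(w')}$. These new elements should force at least one new equivalence class of quotients, which can be exhibited by finding two words that yielded the same quotient in $\sqrt{\dc(w)}$ but are separated in $\sqrt{\dc(w')}$ by a short suffix involving the added $a$'s. Empirical checks on small cases are encouraging: passing from $a^n$ to $a^{n+2}$ takes $\kappa$ from $\lfloor n/2 \rfloor + 2$ to $\lfloor n/2 \rfloor + 3$; passing from $abab$ to $ababaa$ takes $\kappa$ from $4$ to $5$; passing from $W_2 = ababab$ to $abababaa$ takes $\kappa$ from $5$ to $6$.

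The main obstacle is establishing the strict increase uniformly for every optimal $w$, since the appropriate choice of $a$ genuinely depends on the structure of $w$: for instance, when $w = abab$, appending $aa$ strictly increases $\kappa$ but appending $bb$ does not. A rigorous argument will therefore require either a short case analysis on the alphabet and letter multiplicities of $w$, or, more cleanly, a dividing-set argument that takes a maximal dividing set for $\sqrt{\dc(w)}$ and extends it by exactly one new distinguishing word that exploits the two added $a$'s in $w'$ (for example, a word of the form $y a$ where $y \in \sqrt{\dc(w)}$ realises the maximal $a$-count compatible with $y^2 \subword w$).
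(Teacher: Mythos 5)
The first bullet is fine, and your one-line argument is even simpler than the paper's (with $\alpha(n)$ defined as a maximum over all $|w|\leq n$, monotonicity is immediate). The second bullet, however, is where all the content lies, and your proposal does not prove it: you state yourself that the decisive step (``these new elements should force at least one new equivalence class'') is left open, and the final sentence only gestures at ``a short case analysis'' or ``a dividing-set argument'' without supplying either. So as it stands this is a plan, not a proof, and the missing part is exactly the strict increase.

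The gap is not cosmetic, because the construction you chose, $w'=w\cdot a\cdot a$ with $a$ an existing letter appended at the end, is genuinely harder to control than the paper's. The paper picks a word $u$ of maximal length in $\sqrt{\dc(w)}$, fixes a factorization $w=w_1w_2$ with $u\subword w_1$ and $u\subword w_2$, and inserts two occurrences of a \emph{fresh} letter at well-chosen positions: $v=w_1a_{n+1}w_2a_{n+1}$. Freshness is what makes the bookkeeping work: every word of $\sqrt{\dc(v)}\setminus\sqrt{\dc(w)}$ contains exactly one $a_{n+1}$ and has the form $xa_{n+1}$ with $x\in\sqrt{\dc(w)}$, so a maximal dividing set for $\sqrt{\dc(w)}$ (whose separating suffixes use only old letters) still divides $\sqrt{\dc(v)}$, and the single new word $ua_{n+1}$ can be added to it, giving $\alpha(n+2)\geq\alpha(n)+1$. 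With $w'=waa$ neither of these facts is available: new square-root elements mix with the old alphabet, old separating suffixes may stop working, and, as you note with $abab$ versus appending $bb$, a bad choice of $a$ yields no increase at all; so your route additionally requires proving that a good letter always exists for \emph{every} optimal $w$ -- precisely the argument that is missing. Note also that placement, not just the choice of letter, matters: appending both new occurrences at the end of even a fresh letter $c$ only adds the single word $c$ to the square root, and one can check this leaves $\kappa$ unchanged whenever $\sqrt{\dc(w)}\neq\{\epsilon\}$; the paper avoids this by putting one occurrence in the middle, at the split point of $u$'s embedding, so that $ua_{n+1}$ really enters the square root as a new maximal word. Any repaired version of your argument would need a similarly careful choice of where the two added occurrences go.
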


\begin{proof}
	For the first point, let $w$ be a word reaching the maximum for $n$ : $\kappa(\sqrt{\dc(w)}) = \alpha(n)$. 
    Let us consider $wa_{n+1}$, $\sqrt{\dc(wa_{n+1})} = \sqrt{\dc(w)}$. This comes from the fact that a letter appearing once does not appear in the square root ! 
    
    Thus $\kappa(\sqrt{\dc(w)}) = \kappa(\sqrt{\dc(wa_{n+1})}) \leq \alpha(n+1)$. \\ 

    For the second point, let $w$ be a word reaching the maximum for $n$. Let $u$ a word of maximal length in $\sqrt{\dc(w)}$. This means that $u$ is a maximal word such that $uu$ is a subword of $w$. Thus we can factorize $w$ as $w=w_1 w_2$ where $u \subword w_1$ and $u \subword w_2$. Let us now consider the word $v = w_1a_{n+1}w_2a_{n+1}$. We can easily see that $ua_{n+1}$ is a maximal word of $\sqrt{\dc v}$, and that $\sqrt{\dc(w)} \subsetneq \sqrt{\dc(v)}$. \\
    
    It is important to see that a dividing set of $\sqrt{\dc(w)}$ is still a dividing set of $\sqrt{\dc(v)}$. To show this, let us take $x,y$ such that $\sqrt{\dc(w)}/x \neq \sqrt{\dc(w)}/y $. Then there exists a word $z$ made of letter $\{ a_1, ..., a_n\}$ such that $xz \in \sqrt{\dc(w)}$ but $yz \notin \sqrt{\dc(w)}$ (or the other way around, but this case is symmetric and leads to the same conclusion). Seeing that all the new words in  $\sqrt{\dc(v)}$ are of the form $ua_{n+1}$ for $u \in \sqrt{\dc(w)}$, we get that $yz \notin \sqrt{\dc(v)}$. This concludes the validity of the dividing set.\\
    
    Now we can also add a new element to this dividing set of $\sqrt{\dc(v)}$ : $ua_{n+1}$. To see this, we can just observe that if $\sqrt{\dc(w)}/x = \epsilon$ then $\sqrt{\dc(v)}/x = \{ \epsilon, a_{n+1} \} $ and $\sqrt{\dc(v)}/ua_{n+1} = \{ \epsilon \}$. Indeed, if $x \in \sqrt{\dc(w)}$, then $xa_{n+1}xa_{n+1} \subword w_1a_{n+1}w_2a_{n+1} = v$, and the second equality comes from the paragraph above. Thus we can add $ua_{n+1}$ to the dividing set, thus giving that $\alpha(n+2) \geq \alpha(n) +1 > \alpha(n)$. 
\qed    
\end{proof}

\begin{corollary}
    Let $w$ be a word of length $n+2$, if $w$ has $2$ letters appearing only once, then $\kappa(\sqrt{\dc(w)}) < \alpha(n+2)$. 
\end{corollary}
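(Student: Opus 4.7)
The plan is to exploit the observation already used in the proof of $\alpha(n) < \alpha(n+2)$: a letter that appears only once in $w$ cannot appear in any element of $\sqrt{\dc(w)}$. I would then use this to strip both singleton letters from $w$ without changing the square root of the downward closure, reducing to a word of length $n$, and invoke the strict inequality $\alpha(n) < \alpha(n+2)$ from the previous proposition.

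\textbf{Step 1.} I first prove the elimination lemma: if a letter $a$ occurs exactly once in $w$, then every $x \in \sqrt{\dc(w)}$ satisfies $|x|_a = 0$. Indeed, $|x|_a \geq 1$ would force $|xx|_a \geq 2$, contradicting $xx \subword w$ and $|w|_a = 1$.

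\textbf{Step 2.} I then show that if $a$ appears exactly once in $w$, and we factor $w = u_1 a u_2$ (uniquely since $a$ is singleton) and set $w' = u_1 u_2$, then $\sqrt{\dc(w)} = \sqrt{\dc(w')}$. The inclusion $\sqrt{\dc(w')} \subseteq \sqrt{\dc(w)}$ is immediate from $w' \subword w$ and the fact that $\sqrt{\dc(\cdot)}$ is monotonic. For the reverse inclusion, take $x \in \sqrt{\dc(w)}$; by Step 1, $x$ (hence $xx$) has no occurrence of $a$, so any embedding of $xx$ into $w = u_1 a u_2$ can avoid the unique position of $a$, yielding $xx \subword u_1 u_2 = w'$.

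\textbf{Step 3.} Now suppose $w$ has length $n+2$ and contains two letters $a$ and $b$ each occurring exactly once. Applying Step 2 to $a$ gives a word $w_1$ of length $n+1$ with $\sqrt{\dc(w_1)} = \sqrt{\dc(w)}$. In $w_1$, the letter $b$ still occurs exactly once, so Step 2 applies again and produces a word $w_2$ of length $n$ with $\sqrt{\dc(w_2)} = \sqrt{\dc(w_1)} = \sqrt{\dc(w)}$.

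\textbf{Step 4.} Finally, by definition of $\alpha$ and by the strict inequality of the preceding proposition,
\[
\kappa(\sqrt{\dc(w)}) = \kappa(\sqrt{\dc(w_2)}) \leq \alpha(n) < \alpha(n+2),
\]
which is the desired conclusion. There is no real obstacle: the only substantive point is the careful statement of Step 2, and even that follows directly from the single-letter-erasure argument of Step 1.
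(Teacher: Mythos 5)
Your proposal is correct and follows essentially the same route as the paper: both arguments observe that a letter occurring only once in $w$ cannot occur in any element of $\sqrt{\dc(w)}$, strip the two singleton letters to obtain a word of length $n$ with the same square root of downward closure, and conclude via the strict inequality $\alpha(n)<\alpha(n+2)$. Your only difference is removing the two letters one at a time (with a slightly more explicit embedding argument), which is just a more detailed write-up of the paper's one-step removal.
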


\begin{proof}
	Let $a,b$ be the two letters appearing only once in $w$ and $w'$ be the word $w$ without $a$ and $b$, we get $|w'| = n$. Let $x \in \sqrt{\dc(w)}$ then $xx \subword w$. However, as there is only one $a$ and $b$ in $w$, $x$ cannot contain an $a$ or $b$. Thus, $x \subword w \Leftrightarrow x \subword w'$ and $\kappa(\sqrt{\dc(w)}) = \kappa(\sqrt{\dc(w')}) \leq \alpha(n) < \alpha(n+2)$.
\qed    
\end{proof}

Let us introduce the notation $c(n) = \lceil \frac{n}{3} \rceil $. Let $U_n$ be defined as $U_n = V_{c(n)} V_{c(n)} V_{n-2c(n)}$. For example, $U_7 = \mathtt{abcabca}$ and $U_{11}= \mathtt{abcdabcdabc}$. \\

\begin{conjecture}
    
    $\forall \ n \in \mathbb{N}_{>0}, \ \kappa(\sqrt{\dc(U_n)}) = \alpha(n) \leq c(n)^2 -c(n) +3 $.

\end{conjecture}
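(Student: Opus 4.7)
Plan: The plan is to attack the conjecture in three steps: first characterize $\sqrt{\dc(U_n)}$ explicitly, then compute its state complexity, and finally prove extremality of $U_n$ among all words of length $n$.

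Step 1 (Characterization of $\sqrt{\dc(U_n)}$). Write $k = c(n)$ and $m = n - 2k$, so that $U_n = V_k V_k V_m$ with $0 \leq m \leq k$, and let $C_i = a_i a_{i+1} \cdots a_k a_1 \cdots a_{i-1}$ denote the $i$-th cyclic conjugate of $V_k$. The plan is to generalize \Cref{subword_conjugate} and prove
\[
\sqrt{\dc(U_n)} \;=\; \dc\bigl(\{C_1, C_2, \ldots, C_{m+1}\}\bigr).
\]
The inclusion $\supseteq$ is by the explicit staircase embedding that places the first copy of $C_i$ in positions $i, i+1, \ldots, k+i-1$ of $U_n$ and the second copy in positions $k+i, \ldots, 2k+i-1$; the tail $a_1 \cdots a_{i-1}$ of the second copy fits inside the trailing block $V_m$ exactly when $i-1 \leq m$. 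The inclusion $\subseteq$ adapts the case split of \Cref{subword_conjugate}: isolating the first letter $a_j$ of $x = a_j y$, one factors $V_k = v a_j v'$ with no $a_j$ in $v$ and considers which two of the (two or three) occurrences of $a_j$ in $U_n$ are used to embed the two $a_j$'s of $xx$. In every sub-case the bounding block in $U_n$ forces $x$ to be a subword of some $C_i$ with $i \leq m+1$.

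Step 2 (Counting quotients). Set $L = \sqrt{\dc(U_n)}$. Since each $C_i$ is a permutation of $\{a_1, \ldots, a_k\}$, the embedding of any $x \in L$ into $C_i$ is unique when it exists, so
\[
L/x \;=\; \bigcup_{\substack{1 \leq i \leq m+1 \\ x\,\subword\,C_i}} \dc(C_i^{>x}),
\]
where $C_i^{>x}$ denotes the suffix of $C_i$ left after the embedding of $x$. Imitating the computation already carried out for $\kappa(\sqrt{\dc(W_n)})$, this quotient depends only on the pair $(j_1, j_\ell)$ of first and last letter indices of $x$. Enumerating admissible pairs and adding the three trivial quotients $L$, $\{\epsilon\}$ and $\emptyset$ yields $\kappa(\sqrt{\dc(U_n)}) \leq c(n)^2 - c(n) + 3$, with equality when $n \not\equiv 1 \pmod{3}$ (so that all $c(n)$ cyclic conjugates of $V_k$ are valid); when $n \equiv 1 \pmod{3}$, one conjugate is missing and several of the quotients collapse, giving a strictly smaller count that still respects the bound.

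Step 3 (Extremality of $U_n$). The remaining and principal task is to show $\kappa(\sqrt{\dc(w)}) \leq \kappa(\sqrt{\dc(U_n)})$ for every word $w$ of length $n$. The corollary preceding the conjecture allows us to delete letters of frequency one, and an induction using $\alpha(n) \leq \alpha(n+1)$ then lets us restrict to $w$ in which every letter occurs at least twice; a refinement of the "two positions per occurrence" observation further bounds the effective alphabet by $c(n)$. The remaining difficulty is to show that among these words the specific arrangement $V_k V_k V_m$ is optimal. My plan is to introduce surgery operations — adjacent-letter swaps, cyclic block reorderings, and relabelings — that do not decrease $\kappa(\sqrt{\dc(\cdot)})$, and to verify that every admissible $w$ can be transformed into $U_n$ by finitely many of them. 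Finding invariants that make these surgeries monotone is the crux, and this is precisely the main obstacle: the statement appears as a conjecture because no such invariant is currently known.
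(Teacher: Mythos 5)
The statement you are proving is stated in the paper as a \emph{conjecture}: the author offers no proof, only experimental evidence for $n\leq 11$ and the exact computation $\kappa(\sqrt{\dc(W_n)})=n^2-n+3$ for the special case $W_n=V_n^3$ (i.e.\ lengths divisible by~3). Your Steps~1 and~2 are a sensible generalization of that computation (\Cref{subword_conjugate} and the proposition following it) to arbitrary $n$: characterizing $\sqrt{\dc(U_n)}$ as the downward closure of the first $m+1$ cyclic conjugates of $V_{c(n)}$ and counting quotients would, if the details are carried out, give the inequality $\kappa(\sqrt{\dc(U_n)})\leq c(n)^2-c(n)+3$. But that is only the easy half of the statement. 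The conjecture's real content is the equality $\kappa(\sqrt{\dc(U_n)})=\alpha(n)$, i.e.\ that $U_n$ is extremal among \emph{all} words of length at most $n$, which is what makes $\alpha(n)\leq c(n)^2-c(n)+3$ follow. Your Step~3 does not prove this: it proposes surgery operations (swaps, block reorderings, relabelings) that should be monotone for $\kappa(\sqrt{\dc(\cdot)})$, and you explicitly acknowledge that no invariant making them monotone is known. So the proposal has a genuine, admitted gap exactly at the point where the paper itself stops — this is why the statement is a conjecture, and your text leaves it one.

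Two smaller points to watch if you pursue Steps~1--2 rigorously. In Step~2, when only $m+1<c(n)$ conjugates are present, the union $\bigcup_i \dc(C_i)/x$ need not collapse to a single downward closure as it does in the paper's $W_n$ computation (where all conjugates are available and the quotients nest); you must check the nesting of the $\dc(C_i^{>x})$ case by case, otherwise the claim that the quotient depends only on the pair of first/last letter indices is unsupported, and the count could differ. In Step~3, the reduction of the ``effective alphabet'' to $c(n)$ letters is asserted but not justified: the corollary in the paper only lets you delete letters occurring once, which bounds the alphabet by $\lfloor n/2\rfloor$, not by $\lceil n/3\rceil$; closing that gap already requires a nontrivial argument about letters occurring exactly twice.
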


\begin{remark}

	Experimentally, we observed that for $n\leq 11$, $\alpha(n) \leq c(n)^2 -c(n) +3$. The question is now to see if this can extend to larger words and be proven.

\end{remark}

	\section{Substitution operator}

Let us now consider the substitution operator. We consider a fixed alphabet $\Sigma=\{a_1,\ldots,a_n\}$.

\begin{definition}\label{def_substitution}
Let $K_1,\ldots,K_n\subseteq\Gamma^*$ be $n$ languages on an alphabet $\Gamma$ that may be different from $\Sigma$. The $K_i$s  define a substitution $\rho:\Sigma^*\to\mathcal{P}(\Gamma^*)$ that associates a $\Gamma$-language with every word of $\Sigma^*$. Formally $\rho(w)$ is given by
\begin{xalignat*}{2}
\rho(a_{i_1}a_{i_2}\cdots a_{i_m})&=K_{i_1}\cdot K_{i_2}\cdots K_{i_m}\:,
&
\rho(\epsilon)&=\{\epsilon\}\:.
\end{xalignat*}
This is then lifted to $\Sigma$-languages with
\[
\rho(L)=\bigcup_{w\in L}\rho(w)\:.
\]
\end{definition}
We sometimes write $L^{a_1 \leftarrow K_1,...,a_n \leftarrow K_n}$ instead of $\rho(L)$: this notation is convenient, e.g., when $K_i=\{a_i\}$ for most $i$'s. In such cases we may just write, e.g., $x^{a_3\leftarrow K_3}$ with the meaning that letter different from $a_3$ are unchanged by the substitution.

It is well known that if all the $K_i$ are regular then  $\rho(L)$ is regular when $L$ is (and we say that $\rho$ is  a regular substitution).

Actually, the substitution preserves more : if each $K_i + \epsilon$ is downward closed (for $i=1,\ldots,n$)
 then $\rho$ preserves downward closedness (and we say it is a downward closed substitution). By showing this, we can imply in this case the regularity as downward closed languages are regular. 

\begin{proposition}\label{subts_closed}

	If $K_1,...,K_n$ are downward closed then $\rho(L)$ is downward closed (hence also regular). 

\end{proposition}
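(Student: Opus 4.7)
The plan is to unfold the definition of $\rho(L)$ and exploit the fact that the subword relation is compatible with concatenation. Take $y\in\rho(L)$ and $x\subword y$; I want to show $x\in\rho(L)$. By definition of $\rho(L)$, there is some $w\in L$, say $w=a_{i_1}a_{i_2}\cdots a_{i_m}$, such that $y\in\rho(w)=K_{i_1}\cdot K_{i_2}\cdots K_{i_m}$. Hence $y$ admits a factorization $y=y_1 y_2\cdots y_m$ with $y_j\in K_{i_j}$ for every $j$.

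The key intermediate step is the following standard observation on subwords: if $x\subword y_1 y_2\cdots y_m$, then there exist $x_1,\ldots,x_m\in\Sigma^*$ with $x=x_1 x_2\cdots x_m$ and $x_j\subword y_j$ for every $j$. This follows immediately by reading the factorization witnessing $x\subword y$ from \Cref{subword_def}: cut the embedding of $x$ into $y$ at each boundary between consecutive $y_j$'s and let $x_j$ collect the letters of $x$ that embed inside $y_j$.

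Once this factorization of $x$ is in hand, the rest is routine: since each $K_{i_j}$ is downward closed and $x_j\subword y_j\in K_{i_j}$, we have $x_j\in K_{i_j}$. Therefore $x=x_1\cdots x_m\in K_{i_1}\cdots K_{i_m}=\rho(w)\subseteq\rho(L)$, as required. Regularity then follows from Haines' theorem, as already invoked in the preliminaries.

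The only mildly delicate part is the subword-factorization lemma above; it is elementary but must be stated explicitly, since without it the statement does not reduce cleanly to the hypothesis that each $K_i$ is downward closed. Everything else is a direct unfolding of \Cref{def_substitution} and \Cref{down_closure_def}.
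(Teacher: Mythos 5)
Your proof is correct and follows essentially the same route as the paper's: factor the word of $\rho(L)$ according to the letters of the witness $w\in L$, split the subword along that factorization, and use downward closedness of each $K_{i_j}$ componentwise, with regularity via Haines' theorem. Your explicit statement of the subword-factorization step is a slight improvement in rigor over the paper, which asserts it without comment, but it is not a different argument.
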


\begin{proof}
Let $z \in \rho(L) $ then $z\in\rho(x)$ for some $x=a_{i_1}\cdots a_{i_m}\in L$
and $z$ can be factorized as $z=z_1\cdots z_m$ with $z_i\in K_{a_i}$. If now
$y\subword z$ then $y$ is some $y_1\cdots y_m$ with $y_i\subword z_i$ for
$i=1,\ldots,m$, so $y_i\in K_{a_i}$ since each $K_{a_i}$ is
downward closed. Finally $y\in\rho(x)\subseteq \rho(L)$.

Hence,  $\rho(L)$ is downward closed. 
\qed    
\end{proof}

\begin{remark}\label{commutation dc}

We observe that for any substitution where the $(K_i)$ are non-empty languages (not necessarily
downward closed) we have  $\dc\bigl(L^{a_1 \leftarrow K_1,...,a_n \leftarrow K_n}\bigr) = \dc(L)^{a_1 \leftarrow \dc(K_1),...,a_n \leftarrow \dc(K_n)}$. 

\end{remark}

The case of substitutions is well known and studied, and we will see in the next proposition that there is an exponential lower bound for the state complexity of substitutions even in the case of downward closed $L$ and $\rho$.

\begin{example}\label{exp_subst_easy}

Let $L = \{ \epsilon, a_1,...,a_n \} = \dc(\Sigma)$ and, for $i=1,\ldots,n$, $K_i = ( \Sigma\setminus \{a_i\} )^*$. Write $L'$ for $\rho(L) = \bigcup_{i} A_i$. Now a word in $L'$ cannot use all letters of $\Sigma$ so if $x$ is the prefix of some word $xy$ in $L'$ then $x$ can only be continued by some $y$ that does not use all the letters of $\Sigma \setminus \Sigma(x)$. We see that
 the quotient $L'/x$ is $(\Sigma \setminus \Sigma(x) )^*$ so
there is a bijection between the subsets of $\Sigma$ and the quotients of $L'$. 	
Hence, $\kappa(\rho(L)) = 2^n$. 

\end{example}

\begin{proposition}\label{exp_lower_bound_subst}

	Let $\Sigma = \{ a_1,...,a_n\}$, and $A_i = (\Sigma \setminus \{a_i\})^*$. Then, let $A[i,j] = A_i \shuffle \dc(a_i^j)$, the words not having $a_i^{j+1}$ as subword. Let $L = \{ \epsilon, a_1,...,a_n\}$, $(m_i)_{1 \leq i \leq n} \in \mathbb{N}^n$. Let us define $K_i = A[i,m_i]$, then $\kappa(L^{a_1 \leftarrow K_1,...,a_n \leftarrow K_n}) = \prod_{1 \leq i \leq n}(m_i+2) =\prod_{1 \leq i \leq n} \kappa(K_i) $. 

\end{proposition}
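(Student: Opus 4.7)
My plan is to first identify $L':=\rho(L)$ in closed form and then parametrize its quotients by a tuple of truncated letter counts. Unfolding the definition of substitution, and noting that $\epsilon\in K_i$ for every $i$ (since $\epsilon\in A_i\shuffle\dc(a_i^{m_i})$) while $L=\{\epsilon,a_1,\ldots,a_n\}$, I would get
\[
L'=\{\epsilon\}\cup\bigcup_{i=1}^n K_i=\bigcup_{i=1}^n K_i=\{w\in\Sigma^*:\exists i,\ |w|_{a_i}\leq m_i\}.
\]

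Next, for any $x\in\Sigma^*$ I would introduce the truncated counts $k_i(x):=\min(|x|_{a_i},m_i+1)$ and the tuple $\vec k(x):=(k_1(x),\ldots,k_n(x))\in\prod_i\{0,\ldots,m_i+1\}$. A direct calculation from the characterization of $L'$ gives
\[
L'/x=\{y:\exists i,\ k_i(x)\leq m_i\ \text{and}\ |y|_{a_i}\leq m_i-k_i(x)\},
\]
which depends only on $\vec k(x)$. Since there are exactly $\prod_i(m_i+2)$ possible tuples, this immediately yields the upper bound $\kappa(L')\leq\prod_i(m_i+2)$.

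The heart of the proof is showing that distinct tuples yield distinct quotients. Given $\vec k(x)\neq\vec k(x')$, I would pick, without loss of generality, an index $i$ with $k_i(x)<k_i(x')$ and build a distinguishing word $y=a_1^{c_1}\cdots a_n^{c_n}$ with counts $c_j:=\max(0,m_j-k_j(x')+1)$ for every $j$. Every disjunct fails for $x'$ by construction, so $y\notin L'/x'$; conversely, a short case split on whether $k_i(x')\leq m_i$ or $k_i(x')=m_i+1$ shows that $k_i(x)\leq m_i$ and $c_i\leq m_i-k_i(x)$, so the $i$-th disjunct succeeds for $x$ and $y\in L'/x$.

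Finally, every tuple in $\prod_i\{0,\ldots,m_i+1\}$ is realized, for instance by $x=a_1^{k_1}\cdots a_n^{k_n}$, which matches the upper bound and gives $\kappa(L')=\prod_i(m_i+2)$; the second equality $\prod_i(m_i+2)=\prod_i\kappa(K_i)$ follows from the same parametrization applied to each $K_i$ separately (the minimal automaton of $K_i$ is a single counter saturating at $m_i+1$). The only delicate point is the case split in the distinguishability step, but it becomes routine once the right parametrization is in place.
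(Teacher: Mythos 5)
Your proposal is correct and follows essentially the same route as the paper: identify $\rho(L)=\bigcup_i K_i$, observe that the quotient by $u$ is $\bigcup_i A[i,\,m_i-j_i]$ where $j_i$ is the count of $a_i$ in $u$ truncated at $m_i+1$, and count the resulting tuples in $\prod_i\{0,\ldots,m_i+1\}$. Your explicit construction of the distinguishing word $y=a_1^{c_1}\cdots a_n^{c_n}$ with $c_j=\max(0,m_j-k_j(x')+1)$ actually spells out the injectivity step that the paper only asserts, so it is, if anything, a more complete write-up of the same argument.
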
 

\begin{proof}

Let $u \in \Sigma^*$, $L/u = \{ \sum A[i,{m_i-j_i}] \ | \ a_i^{j_i} \mbox{ subword of u, } 1 \leq j_i \leq m_i+1 \}$ with the convention $A[i,j] = \emptyset$ if $j>i$. Indeed, if $u$ contains $j_i$ occurrences of $a_i$ then we can still have $m_i-j_i$ such occurrences in a suffix while staying in $K_i$, else we are not in $K_i$. Thus we can take $u$ such that it has $0 \leq j_i \leq m_i$ occurrences of $a_i$ and for each value in $\{ 0,...,m_i+1\}^n$ we get a different quotient of $K_i$, and we do this for each $a_i$. Thus, there are $\prod_{1 \leq i \leq n} (m_i+2)$ such quotients of $L^{a_1 \leftarrow K_1,...,a_n \leftarrow K_n}$. Finally, $K_i$ has $m_i+2$ quotients : $\{ A[i,j] \ | \ 0 \leq j \leq m_i +1 \} $. Hence $\kappa(\rho(L)) = \prod_{1 \leq i \leq n} \kappa(K_i) $.
\qed    
\end{proof} 

\begin{remark}

	We can actually see that  \Cref{exp_subst_easy} is a particular case of the previous proof with $m_i = 0$ for all $i$.

\end{remark}

As the state complexity of the substitution is exponential in the general case, it can be interesting to know if there exists a restriction for which the state complexity is polynomial. We thus consider singular substitutions, that is to say substitutions $L^{a \leftarrow K}$.\\

Let us start with the easiest case, when there exists words $u,v$ such that $L,K = \dc(u),\dc(v)$.

\begin{proposition}
 Let $L=\dc (u)$  and $K=\dc (v)$ with $u$ and $v$ two words, and $a \in \Sigma$ be a letter. Then $L^{\left\{ a \leftarrow K \right\}}=\dc(u^{\left\{a\leftarrow v\right\}})$ and $\kappa(L^{\left\{a \leftarrow K\right\}})=|u| - |u|_{a}+|u|_{a} |v| +2 \leq \kappa(L)\kappa(K)$. 
\end{proposition}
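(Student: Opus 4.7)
The plan is first to reduce the claim to the identity $L^{\{a\leftarrow K\}}=\dc(u^{\{a\leftarrow v\}})$, from which the state-complexity count follows immediately by \Cref{state_complexity_word} applied to the single word $u^{\{a\leftarrow v\}}$, whose length is $|u|-|u|_a+|u|_a|v|$ (each of the $|u|_a$ letters $a$ is replaced by $v$).

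To prove the identity, I would proceed by double inclusion. The $\supseteq$ direction is the easy one: $u\in L$, so $u^{\{a\leftarrow v\}}\in\rho(u)\subseteq\rho(L)$, and since $\rho(L)$ is downward closed (by \Cref{subts_closed}, noting that $\{b\}+\epsilon=\dc(b)$ is downward closed for each $b\neq a$ so the hypothesis on every $K_i+\epsilon$ is satisfied) we get $\dc(u^{\{a\leftarrow v\}})\subseteq\rho(L)$. For the $\subseteq$ direction, take $z\in\rho(L)$, so $z\in\rho(w)$ for some $w\in L$, i.e.\ $w\subword u$. Decompose $w=w_0aw_1a\cdots aw_p$ and $u=u_0au_1a\cdots au_q$ along their $a$-occurrences, where each $w_j,u_j$ is $a$-free. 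An embedding witnessing $w\subword u$ picks positions $1\le i_1<\cdots<i_p\le q$ such that $w_j\subword u_{i_j}u_{i_j+1}\cdots u_{i_{j+1}-1}$ (with the obvious convention at the ends). Since $z\in\rho(w)$, we can write $z=w_0v'_1w_1v'_2\cdots v'_pw_p$ with each $v'_j\subword v$. Then into $u^{\{a\leftarrow v\}}=u_0vu_1v\cdots vu_q$ we embed each $w_j$ inside the corresponding $u_{i_j}\cdots u_{i_{j+1}-1}$-block and each $v'_j$ into the $i_j$-th copy of $v$; this witnesses $z\subword u^{\{a\leftarrow v\}}$, so $z\in\dc(u^{\{a\leftarrow v\}})$.

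Having the identity, $\kappa(L^{\{a\leftarrow K\}})=|u^{\{a\leftarrow v\}}|+2=|u|-|u|_a+|u|_a|v|+2$ by \Cref{state_complexity_word}. For the inequality $\kappa(L^{\{a\leftarrow K\}})\le\kappa(L)\kappa(K)=(|u|+2)(|v|+2)$, it suffices to observe $|u|_a(|v|-1)\le|u|(|v|-1)\le|u||v|+|u|+2|v|+2$ when $|v|\ge 1$, and the trivial case $|v|=0$ gives $-|u|_a\le\cdots$; rearranging yields the bound.

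The only genuine obstacle is the $\subseteq$ direction of the identity, which reduces to being careful with the bookkeeping of the embedding $w\subword u$ when translating it into an embedding $z\subword u^{\{a\leftarrow v\}}$; the key point that makes this work cleanly is that the $w_j$ are $a$-free, so they embed entirely into the $a$-free pieces of $u$, which leaves all of the inserted copies of $v$ free to receive the $v'_j$ fragments.
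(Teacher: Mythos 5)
Your proof is correct, and its overall shape matches the paper's: reduce everything to the identity $L^{a\leftarrow K}=\dc\bigl(u^{a\leftarrow v}\bigr)$ and then apply \Cref{state_complexity_word} to the word $u^{a\leftarrow v}$ of length $|u|-|u|_a+|u|_a|v|$. The difference is in how the identity is obtained: the paper simply invokes \Cref{commutation dc} (the commutation of $\dc$ with substitution, stated there without proof, with $L=\{u\}$ and $K_a=\{v\}$), whereas you prove the needed instance from scratch by a double inclusion, using \Cref{subts_closed} for the easy direction and an explicit bookkeeping of the embedding $w\subword u$ along the $a$-occurrences for the hard one. That embedding argument is sound (the $a$-free blocks $w_j$ land in the $a$-free blocks of $u$, freeing the inserted copies of $v$ for the fragments $v'_j$), and it is essentially the proof that \Cref{commutation dc} would require anyway, so your write-up is more self-contained than the paper's two-line proof at the cost of some length. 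Your handling of the side condition in \Cref{subts_closed} (the identity substitution on letters $b\neq a$ is covered because $\{b\}+\epsilon$ is downward closed, in line with the paper's discussion preceding that proposition) and your verification of the final inequality $|u|-|u|_a+|u|_a|v|+2\le(|u|+2)(|v|+2)$ are both fine.
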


\begin{proof}
  Using  \Cref{commutation dc} we get that $L^{a \leftarrow K} = \dc(u^{a \leftarrow v})$. We then apply \Cref{state_complexity_word} and conclude.
\qed    
\end{proof}

We now want to study the quotients of a more general substitution $L^{a \leftarrow K}$, for this we explain how to inductively compute  quotients of substitutions. As in \Cref{def_substitution}, we write $\rho(R)$  for $R^{a \leftarrow K}$.

\begin{lemma}\cite{browzo2010}

	Let $L$ be a language, we denote by $L^{\epsilon} = \begin{cases} 					\emptyset , & \text{if}\ \epsilon \notin L\:, \\
      			\{\epsilon\}, & \text{otherwise.}
    			\end{cases}$ \\

	Let us introduce the basic rules of the computation of quotients given in \cite{browzo2010} : 
	\begin{itemize}
		\item 
    			$b/a = \begin{cases} \emptyset , & \text{if}\ b \neq a\:, \\
      			\epsilon, & \text{otherwise.}
    			\end{cases}$  		
  				
		\item $ (L \cup K)/a = L/a \cup K/a $.
		
		\item $(L \cdot K)/a = (L/a).K \cup L^{\epsilon}.(K/a)$.
	\end{itemize}

\end{lemma}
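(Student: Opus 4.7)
The plan is to prove each of the three rules by unfolding the definition of the left quotient, namely $L/a = \{w \in \Sigma^* \mid aw \in L\}$, and then reasoning on words of the languages involved. Throughout, I would argue by double inclusion or via the equivalence $w \in L/a \iff aw \in L$.

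For the first rule, I would fix $w \in \Sigma^*$ and observe that $w \in \{b\}/a$ iff $aw = b$. Since $b$ has length $1$, this forces $w = \epsilon$ and $a = b$. Thus, if $a \neq b$ no such $w$ exists and $\{b\}/a = \emptyset$, whereas if $a = b$ only $w = \epsilon$ works and $\{b\}/a = \{\epsilon\}$. For the second rule, the computation $w \in (L \cup K)/a \iff aw \in L \cup K \iff aw \in L \text{ or } aw \in K \iff w \in L/a \text{ or } w \in K/a$ gives the claim directly.

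The third rule is the only one that requires a genuine case analysis. Let $w \in (L \cdot K)/a$, so that $aw \in L \cdot K$, meaning $aw = uv$ for some $u \in L$ and $v \in K$. I would split on whether $u$ is empty. If $|u| \geq 1$, then $u$ starts with some letter and that letter must be $a$, so $u = au'$ with $u' \in L/a$, and $w = u'v \in (L/a) \cdot K$. If instead $u = \epsilon$, then $\epsilon \in L$ and $aw = v \in K$, so $w \in K/a$, and this case contributes exactly when $\epsilon \in L$, which is precisely encoded by $L^\epsilon \cdot (K/a)$ (equal to $K/a$ when $\epsilon \in L$, and to $\emptyset$ otherwise). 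Hence $(L\cdot K)/a \subseteq (L/a)\cdot K \cup L^\epsilon \cdot (K/a)$. For the reverse inclusion, given $w \in (L/a)\cdot K$, write $w = u'v$ with $au' \in L$ and $v \in K$, so $aw = (au')v \in L \cdot K$; and given $w \in L^\epsilon \cdot (K/a)$, the set is nonempty only when $\epsilon \in L$, in which case $w \in K/a$, hence $aw \in K = \{\epsilon\}\cdot K \subseteq L \cdot K$. Either way $w \in (L\cdot K)/a$.

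There is no real obstacle here: the only delicate point is making sure the $L^\epsilon$ factor in the third rule correctly handles the possibility that $\epsilon \in L$, so that both subcases of the factorization $aw = uv$ are captured without spurious contributions when $\epsilon \notin L$. Once that is handled cleanly via the definition of $L^\epsilon$, the rest is bookkeeping on the definitions.
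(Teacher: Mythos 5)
Your proof is correct: all three identities follow exactly as you argue, by unfolding $w\in L/a \iff aw\in L$, with the only delicate point (the $L^{\epsilon}\cdot(K/a)$ term accounting for the factorization $aw=\epsilon\cdot v$ when $\epsilon\in L$) handled properly. The paper gives no proof of this lemma at all --- it is quoted from the cited reference --- and your direct definitional verification is the standard argument for these quotient rules, so there is nothing further to compare.
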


\begin{remark}

In the following we are going to use downward closed languages and their quotients which are also downward closed. As $L^\epsilon = \{ \epsilon\}$ for any non-empty downward closed language $L$, the concatenation rule becomes in this case : 
\begin{equation}\label{eq_quotient}
(L \cdot K)/a = (L/a)\cdot K + (K/a) \;\;\;\text{if $L$ is non-empty.}
\end{equation}
This property is essential to understand the structure of the quotients. 
\end{remark}

\begin{remark}
One could think that the case $K = \emptyset$ might be pathological. 
However n our setting one can see
that, if $L$ is downward closed, then $L^{a\leftarrow\emptyset}=L^{a\leftarrow\epsilon}$.
\end{remark}

Now that we have those basic rules for computing quotients, we want to see how we can apply them in our case. For this we need to introduce the SRE formalism (for Simple Regular Expressions) and show how the rules for computing quotients behave on what we call products of atoms.

\begin{definition}[\cite{Abdulla2004}]

An atom is a (particular case of) regular expression $\alpha$ that is either a letter-atom $a+\epsilon$ with $a \in \Sigma$, or a star-atom $B^*$ with $B\subseteq \Sigma$.

A product of atoms (or product) $I$ is a finite concatenation of atoms : $I = \prod_{1\leq i \leq n} \alpha_i$ where $\alpha_i$ is an atom. It is a regular expression and the empty product denotes $\epsilon$.

An SRE $E$ is a finite sum of products : $E = \sum_{1 \leq j \leq m } I_j$. The empty sum denotes $\epsilon$. We denote by $\llbracket E \rrbracket $ the language described by $E$. The SREs form a subclass of regular expressions. 

\end{definition}

\begin{remark}
Atoms are included in products, that are included in SRE. Furthermore, 
we might  abuse  notations and write $E$ when what we mean is $\llbracket E \rrbracket $. For example by saying $w \in E$ when $w$ is a word.

\end{remark}

The following theorem linking the SREs and downward closed languages justifies this new representation.
Recall that, in our context, $L$ is \emph{directed} $\equivdef$ 
 $\forall x,y\in L:\exists z\in L:x\subword z\land y\subword z$.

\begin{theorem}\cite{Abdulla2004,wqo}
A language $L$ on a finite alphabet $\Sigma$ is downward closed if and only if it can be defined by an SRE.  
\\

Furthermore $L$ is directed if and only if it can be defined by a product.
\end{theorem}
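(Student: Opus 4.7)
The plan is to prove both equivalences, tackling the easy inclusions first. For $(\Leftarrow)$ in both statements: letter-atoms $a+\epsilon$ and star-atoms $B^*$ manifestly denote downward closed languages, and downward closedness is preserved by union and by concatenation (factorising a dominated word along the chosen cut point), so every SRE is downward closed. For products, I additionally verify directedness: given $u,v$ in a product $\alpha_1\cdots\alpha_n$, factor them componentwise as $u=u_1\cdots u_n$ and $v=v_1\cdots v_n$, and for each atom pick $w_i\in\alpha_i$ dominating both $u_i$ and $v_i$ — take $w_i:=u_i v_i$ when $\alpha_i=B^*$, and $w_i:=a$ as soon as either $u_i$ or $v_i$ equals $a$ when $\alpha_i=a+\epsilon$. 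Then $w:=w_1\cdots w_n$ lies in the product and dominates both $u$ and $v$.

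For the nontrivial direction $(\Rightarrow)$, I would first reduce the SRE statement to the product one using well-quasi-order theory. Higman's lemma gives that $(\Sigma^*,\subword)$ is a well-quasi-ordering, and a general WQO result states that every downward closed subset of a WQO decomposes as a finite union of its maximal directed downward closed subsets (its \emph{ideals}); this is ultimately because the poset of ideals is itself WQO under inclusion, so the antichain of maximal ideals contained in a given downward closed set $L$ is finite. Once every directed downward closed language is shown to be described by a product, summing those finitely many products yields an SRE for $L$.

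The main obstacle is the directed case: showing that any nonempty directed downward closed $L$ is denoted by some product. I would argue by induction on $|\Sigma|$, with a secondary induction on the number of Myhill--Nerode classes of $L$ (finite by Haines). If no letter $a\in\Sigma$ occurs unboundedly in words of $L$, then the counts $|w|_a$ are all uniformly bounded, directedness forces $L$ to be finite with a maximum word $w_{\max}$, and $L=\dc(w_{\max})$ reads off directly as the product of letter-atoms $(a_{i_1}+\epsilon)\cdots(a_{i_k}+\epsilon)$. Otherwise, let $B\subseteq\Sigma$ collect the letters that occur unboundedly in $L$; directedness implies that arbitrarily long $B$-runs can be inserted into a common middle region of words in $L$, so one can factor $L=L_1\cdot B^*\cdot L_2$ where $L_1$ and $L_2$ are directed downward closed languages of strictly smaller complexity. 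Applying the inductive hypothesis and concatenating with the star-atom $B^*$ yields the required product.

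The delicate step will be making the peeling $L=L_1\cdot B^*\cdot L_2$ rigorous: one must argue that the left and right residuals are indeed downward closed, directed, and strictly simpler (fewer unbounded letters, or strictly fewer quotients), so that the induction terminates. This is essentially the Jullien-style structural characterisation of ideals of $(\Sigma^*,\subword)$ that underlies the SRE formalism and forms the technical heart of \cite{Abdulla2004,wqo}; I would either import it directly or reprove it via a careful analysis of the quotient structure of $L$ using the rules of \eqref{eq_quotient}.
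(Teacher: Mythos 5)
First, a point of comparison: the paper does not prove this theorem at all — it is imported verbatim from \cite{Abdulla2004,wqo} — so your proposal can only be measured against the actual content of that structure theorem. Your two $(\Leftarrow)$ directions are correct: atoms denote downward closed languages, downward closedness is preserved by union and concatenation, and your componentwise choice of a dominating word shows every nonempty product is directed. The reduction of the SRE statement to the directed case via finite ideal decomposition is also the standard route, although your justification is off: finiteness of the decomposition follows from Noetherian induction (downward closed subsets of $\Sigma^*$ satisfy the descending chain condition by Higman's lemma; if $D$ is not directed, pick $x,y\in D$ with no common bound in $D$ and write $D=(D\setminus\upc(x))\cup(D\setminus\upc(y))$), not from the claim that the ideals of a wqo are wqo under inclusion, which is a strictly stronger property (false for general wqos) and not needed here.

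The genuine gap is in the heart of the theorem, namely that every directed downward closed language is a product, and your peeling step fails as stated. You let $B$ be the set of \emph{all} letters occurring unboundedly in $L$ and claim $L=L_1\cdot B^*\cdot L_2$. Take $L=\llbracket a^*b^*\rrbracket$: it is directed and downward closed, both $a$ and $b$ occur unboundedly, yet no factorization $L_1\cdot\{a,b\}^*\cdot L_2$ with $L_1,L_2$ nonempty (hence containing $\epsilon$, being downward closed) can equal $L$, since it would contain $ba\notin L$; and if either factor is empty the product is empty. In general the unbounded letters of a directed language split into several star-atoms $B_1^*,\dots,B_k^*$ separated by letter-atoms, so the induction must identify the \emph{leading} atom and prove that the corresponding residual is again directed, downward closed and strictly simpler — this is precisely the Jullien-style analysis that constitutes the theorem, and your fallback of ``importing it directly'' from \cite{Abdulla2004,wqo} would be assuming the statement to be proved. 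So the bounded case and the $(\Leftarrow)$ directions stand, but the main implication is not established by your argument.
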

A corollary is that any downward closed language is a finite union of directed languages.
\\

Now that we introduced the concept of SREs, we need to understand how quotienting behaves on them. For now we will focus on quotients of products of atoms. 

\begin{definition}
Let $I = \prod_{i\leq n} \alpha_i$ be a product of atoms. A suffix product of $I$ is any product $I' = \prod_{i_0 \leq i \leq n} \alpha_i$ for some $i_0$. 
\end{definition}

We are now going to use this notion to describe the structure of the quotients of a product.

\begin{lemma}\label{quotients of products}

Let $I = \prod_{i\leq n} \alpha_i$ be a product, then the quotients of $I$ are exactly the suffix products of $I$, with also the empty set if $\llbracket I \rrbracket \neq \Sigma^*$. 

\end{lemma}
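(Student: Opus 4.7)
The plan is to proceed by induction on the product length, using the basic quotient rule~\eqref{eq_quotient} to unfold one atom at a time. For a single-letter quotient of $I = \alpha_1 \cdot (\alpha_2 \cdots \alpha_n)$ one has
\[ I/a \;=\; (\alpha_1/a)\cdot(\alpha_2\cdots\alpha_n) \;+\; (\alpha_2\cdots\alpha_n)/a, \]
so I would first compute $\alpha/a$ for each atom type: $(b+\epsilon)/a$ is $\{\epsilon\}$ if $a=b$ and $\emptyset$ otherwise, while $B^*/a$ is $B^*$ if $a\in B$ and $\emptyset$ otherwise.

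The key simplification is that products denote downward closed languages, hence $L/a \subseteq L$ for any such $L$, so sums of the form $J + J/a$ collapse to $J$. A case split on $\alpha_1$ then yields four subcases: if $\alpha_1 = B^*$ with $a\in B$ then $I/a = I$; if $\alpha_1 = b+\epsilon$ with $a=b$ then $I/a = \alpha_2\cdots\alpha_n$; in the two remaining cases ($\alpha_1/a = \emptyset$) one gets $I/a = (\alpha_2\cdots\alpha_n)/a$, which by induction on the product length is a suffix product of $\alpha_2\cdots\alpha_n$ or $\emptyset$. An easy induction on $|w|$ then extends this to every quotient $I/w$: each iteration either lands on $\emptyset$ (which is absorbing) or on a suffix product, to which the same lemma is reapplied. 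For the $\emptyset$ clause, I would use that $\llbracket I \rrbracket$ is downward closed, so $w\notin \llbracket I\rrbracket$ forces $I/w = \emptyset$, and such $w$ exists precisely when $\llbracket I\rrbracket\neq \Sigma^*$; conversely, if $\llbracket I\rrbracket = \Sigma^*$ every quotient is $\Sigma^*$ again.

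The main obstacle is the reverse inclusion, namely that every suffix product $\alpha_{i_0}\cdots\alpha_n$ really is achieved as some quotient $I/w$. The natural strategy is to build $w$ atom by atom, picking for each $\alpha_j$ with $j<i_0$ a letter that consumes $\alpha_j$ (the letter $b$ for $\alpha_j=b+\epsilon$, a letter outside $B$ for $\alpha_j = B^*$) without triggering further collapse in the remaining atoms. This is delicate because a chosen letter may also be absorbed by a later star-atom, and two syntactically distinct suffix products can denote the same language; I would therefore argue by downward induction on $i_0$, working from $i_0 = n+1$ back to $i_0 = 1$ and reusing the witness for the next index, identifying along the way the cases where several suffix products collapse to the same quotient.
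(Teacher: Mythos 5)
Your argument for the containment --- every quotient of $I$ is a suffix product of $I$, plus $\emptyset$ exactly when $\llbracket I \rrbracket \neq \Sigma^*$ --- is correct and is essentially the paper's own proof: the paper also inducts on the length of the quotienting word, uses the same two one-letter rules for atoms, and collapses the sum coming from \Cref{eq_quotient} via downward closedness; its only difference is that it jumps directly to the first atom of the current suffix that contains the letter being read, where you peel off the first atom and recurse on the tail, which is a cosmetic variation. The $\emptyset$ clause is handled identically.

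The problem is the reverse inclusion that you single out as the main obstacle. No refinement of your ``downward induction on $i_0$'' can close it, because the claim that every suffix product of $I$ arises as a quotient is simply false. Take $\Sigma=\{a,b\}$ and $I=b^*\cdot(a+\epsilon)$: reading $b$'s leaves $I$ unchanged and reading $a$ jumps past the letter-atom straight to the empty suffix, so $\mathcal{R}(I)=\{\llbracket I\rrbracket,\{\epsilon\},\emptyset\}$, and the suffix product $a+\epsilon$ (the language $\{\epsilon,a\}$) is neither reached nor equal as a language to any quotient; your plan of reusing witnesses from index $i_0+1$ stalls at exactly such skipped suffixes. Note that the paper's proof never attempts this direction either: despite the word ``exactly'' in the statement, what it proves --- and all that is used later, in \Cref{two_quotients_for_ideals} and \Cref{psi for products} --- is the inclusion of $\mathcal{R}(I)$ in the set of suffix products together with $\emptyset$. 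So the right move is to read (or restate) the lemma as that inclusion and drop your final paragraph, not to try to complete it.
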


\begin{proof}

Let $R$ be a quotient of $I$, there is a word $x$ such that $R = I/x$. Let us proceed by induction on the length of $x$. \\

If $|x| = 0$, $I/x = I$ and $I$ is a suffix product of itself. \\	

Let $x = x'b $ with $|x'| \geq 0$, and let $I' = I/x'$ be the suffix product we get by the induction hypothesis. If $I' = \emptyset$, then $I/x = I'/b = I' = \emptyset$ and it is valid. Now if $I'$ is non-empty, using the definition of suffix product, $I' = \prod_{j' \leq i} \alpha_i$. Let $\alpha_{i_0} = \min_{j' \leq i} ( b \in \alpha_i)$. If such an $\alpha_{i_0}$ does not exist, then $I/b = \emptyset$. Using the rules for the computation of quotient, we have: 

\begin{itemize}
\item 
    			$(a+\epsilon)/b = \begin{cases} \emptyset , & \text{if}\ b \neq a \\
      			\{ \epsilon \}, & \text{otherwise}
    			\end{cases}$ 
    			
\item $(B^*)/b = \begin{cases} \emptyset , & \text{if}\ b \notin B \\
      			B^*, & \text{otherwise}
    			\end{cases}$
\end{itemize}

Thus, if $\alpha_{i_0}$ exists, then $I'/b = \alpha_{i_0} \prod_{{i_0} < i} \alpha_i$ if $\alpha_{i_0}$ is a star-atom and $I'/b = \epsilon \cdot \prod_{{i_0} < i} \alpha_i$ if $\alpha_{i_0}$ is a letter atom . Thus $I'/b$ is a suffix product of $I'$, which makes it a suffix product of $I$. \\

Finally, if $\llbracket I \rrbracket \neq \Sigma^*$ then there exists $w \in \Sigma^*$ such that $w \notin \llbracket I \rrbracket$, thus $I/w = \emptyset$. And if $\llbracket I \rrbracket = \Sigma^*$ then for all $w \in \Sigma^*$, $I/w = \Sigma^* = I \neq \emptyset$.  
\qed    
\end{proof}

\begin{corollary}\label{two_quotients_for_ideals}

Let $I$ be a product of atoms, and $R_1,R_2$ two quotients of $I$. Then one of the quotient contains the other. 

\end{corollary}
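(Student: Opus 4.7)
My plan is to derive the corollary directly from \Cref{quotients of products}, which classifies the quotients of a product $I$ as either $\emptyset$ (when $\llbracket I\rrbracket\neq\Sigma^*$) or a suffix product of $I$. First I would handle the degenerate case: if either $R_1$ or $R_2$ equals $\emptyset$, then trivially $\emptyset\subseteq R_j$ and we are done.

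Next, assume both $R_1$ and $R_2$ are non-empty, hence both are suffix products of $I=\prod_{i\leq n}\alpha_i$. Write $R_1 = \prod_{i_1\leq i\leq n}\alpha_i$ and $R_2 = \prod_{i_2\leq i\leq n}\alpha_i$, and assume without loss of generality that $i_1\leq i_2$. Then $R_1$ factors as
\[
R_1 \;=\; \Bigl(\prod_{i_1\leq i<i_2}\alpha_i\Bigr)\cdot R_2\;.
\]
The goal is now to show $R_2\subseteq R_1$.

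The key observation — and the only substantive step — is that every atom contains $\epsilon$: a letter-atom $a+\epsilon$ contains $\epsilon$ by definition, and a star-atom $B^*$ contains $\epsilon$ as the empty concatenation. Therefore the product $\prod_{i_1\leq i<i_2}\alpha_i$ contains $\epsilon$, so $R_2 = \epsilon\cdot R_2 \subseteq \bigl(\prod_{i_1\leq i<i_2}\alpha_i\bigr)\cdot R_2 = R_1$.

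I do not expect any genuine obstacle: the real content is already packaged in \Cref{quotients of products}, and the corollary just exploits the linear ordering of suffix products together with the fact that atoms are $\epsilon$-containing. The only point to be careful about is the boundary case $i_1=i_2$, where the prefix product is empty and denotes $\epsilon$, and the case where one quotient is $\emptyset$, which must be dispatched separately before invoking the suffix-product structure.
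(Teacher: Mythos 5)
Your proof is correct and follows exactly the route the paper intends: the corollary is an immediate consequence of \Cref{quotients of products}, since suffix products of $I$ are linearly ordered by inclusion (each atom contains $\epsilon$, so a longer suffix product contains any shorter one) and the empty quotient is contained in everything. Your explicit treatment of the $\emptyset$ case and the boundary case $i_1=i_2$ just spells out what the paper leaves implicit.
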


Having this result for the quotients of products of atoms will make it easier for us to prove that quotients once we apply the substitution can be expressed in a compact form in the case of downward closed language. For now let us show this result for product for atoms.  

\begin{lemma}\label{quotients of quotient on ideals}

Let $I$ be a product of atoms, $K$ a downward closed language and $a,b \in \Sigma$, then $\rho(I)/b = \begin{cases} \rho(I/b) + (K/b) \cdot \rho(I/a)  & \text{if}\ a \neq b\:, \\
      			(K/a) \cdot \rho(I/a) & \text{otherwise.}
    			\end{cases}$

\end{lemma}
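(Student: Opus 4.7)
The plan is to induct on the number $n$ of atoms in the product $I=\alpha_1\cdots\alpha_n$. For the base case $n=0$ we have $I=\epsilon$, $\rho(I)=\{\epsilon\}$, and both sides of the claim reduce to $\emptyset$ using $I/a=I/b=\emptyset$ and $\rho(\emptyset)=\emptyset$. For the inductive step I factor $I=\alpha_1\cdot I'$ and observe that $\epsilon\in\rho(\alpha_1)$, since every atom contains $\epsilon$ and $\rho(\epsilon)=\{\epsilon\}$. The concatenation rule~\eqref{eq_quotient} then yields
\[
\rho(I)/b=(\rho(\alpha_1)/b)\cdot\rho(I')+\rho(I')/b,
\]
and the induction hypothesis applied to $I'$ rewrites the trailing $\rho(I')/b$ in the form promised by the lemma.

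To match the resulting expression with the target right-hand side, I recall from the proof of \Cref{quotients of products} that $I/b$ equals $I'/b$ when $b$ does not occur in $\alpha_1$, equals $I'$ when $\alpha_1=b+\epsilon$, and equals $I$ itself when $\alpha_1=B^*$ with $b\in B$; the same description holds for $I/a$. I then case-split on the shape of $\alpha_1$ (letter atom vs.\ star atom) and on whether $a$ and $b$ occur in it. The letter-atom cases are immediate: $\rho(c+\epsilon)/b$ is $\{\epsilon\}$ when $c=b$, $K/b$ when $c=a$, and $\emptyset$ otherwise. The star-atom case relies on the identity $\rho(B^*)=((B\setminus\{a\})\cup K)^*$ when $a\in B$, and $\rho(B^*)=B^*$ otherwise, combined with the standard rule $(C^*)/b=(C/b)\cdot C^*$ applied to $C=(B\setminus\{a\})\cup K$.

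In each case the terms collapse thanks to the downward-closure inclusions $I'/a\subseteq I'$ and $I'/b\subseteq I'$, which imply $\rho(I'/a),\rho(I'/b)\subseteq\rho(I')\subseteq\rho(I)$ and allow the smaller summands to be absorbed into the appropriate $\rho(I)$-term. The case $a=b$ is handled in parallel and is somewhat simpler, since the two summands $\rho(I/b)$ and $(K/b)\cdot\rho(I/a)$ of the general formula merge into a single $(K/a)\cdot\rho(I/a)$.

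The main obstacle is the star-atom case $\alpha_1=B^*$ with both $a\in B$ and $b\in B$ (and $a\neq b$). Here $\rho(\alpha_1)=((B\setminus\{a\})\cup K)^*$ and its $b$-quotient splits as $((K/b)+\{\epsilon\})\cdot\rho(\alpha_1)$, which contributes both a $(K/b)\cdot\rho(I)$ term and a bare $\rho(I)$ term after multiplying by $\rho(I')$. One must check that $I/a=I/b=I$ in this configuration so that these two contributions match $(K/b)\cdot\rho(I/a)$ and $\rho(I/b)$ exactly, with the leftover $\rho(I'/b)$ and $(K/b)\cdot\rho(I'/a)$ coming from the inductive hypothesis being absorbed. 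Once this delicate subcase is settled, the remaining verifications follow the same pattern and are routine.
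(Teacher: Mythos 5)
Your proposal is correct and follows essentially the same route as the paper: induction on the number of atoms, peeling off the leading atom via the concatenation rule~\eqref{eq_quotient} (using $\epsilon\in\rho(\alpha_1)$), applying the induction hypothesis to $\rho(I')/b$, and a case analysis on the leading atom with the smaller terms absorbed through downward-closure inclusions. Your treatment of the star atom with $a\in B$ via $\bigl(((B\setminus\{a\})\cup K)^*\bigr)/b=(C/b)\cdot C^*$ is just a mild repackaging of the paper's computation with $B'=\Sigma(K)\cup(B\setminus\{a\})$, and you correctly identify and resolve the delicate subcase $a,b\in B$ by noting $I/a=I/b=I$.
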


\begin{proof}

Let us proceed by induction on the length of $I$. 
If $I$ is the empty product then both terms give $\emptyset$ hence the equality. \\

If $I = \alpha \cdot I'$, then $\rho(\alpha \cdot I')/b = (\rho(\alpha)/b )\cdot \rho(I') + 
 \rho(\alpha)^{\epsilon} \cdot \rho(I')/b$. Now as $\alpha$ and $I'$ are downward closed and $\alpha \neq \emptyset$ we can use \Cref{eq_quotient}. Hence, $\rho(I)/b = (\rho(\alpha)/b )\cdot \rho(I') +  \rho(I')/b$. \\
 
 \begin{itemize}
 	
	\item If $\alpha = (a+\epsilon)$ then $\rho(\alpha)/b = K/b$. If $K/b \neq \emptyset$, then $\rho(I')/b \subseteq (\rho(\alpha)/b )\cdot \rho(I') = (K/b).\rho(I/a)$, thus $\rho(I)/b = (K/b)\cdot \rho(I/a)$. And as $I/b \subseteq I/a$, we get $\rho(I/b) \subseteq \rho(I/a) \subseteq (K/b)\cdot \rho(I/a)$. Hence, $\rho(I)/b = \rho(I/b) + (K/b)\cdot \rho(I/a)$.
	
	If, on the other hand $K/b = \emptyset$, then $\rho(I)/b = \rho(I')/b = \rho(I'/b) + (K/b)\cdot \rho(I'/a)$ by induction. And as $K/b = \emptyset$ we get $\rho(I)/b = \rho(I')/b = \rho(I'/b)$. Thus if $a \neq b$ implies  $I'/b = I/b$ we get $\rho(I'/b) = \rho(I/b) = \rho(I/b) + (K/b)\cdot \rho(I/a) $ as $I = (a +\epsilon)I'$ and $K/b = \emptyset$. On the other hand, if $a=b$, then as $K/a = \emptyset$, $\rho(I)/a = \emptyset = K/a \cdot \rho(I/a)$. \\

	\item If $\alpha = (c+ \epsilon)$ with $c \neq a$, $\rho(I)/b = (\rho(c+\epsilon)\cdot \rho(I')) /b = ((c+\epsilon)\cdot \rho(I'))/b$. 
	Thus if $c = b$, $\rho(I)/b = \rho(I') + \rho(I'/b) = \rho(I') = \rho(I/b)$. And when $c \neq b$ and $b \neq a$, $\rho(I)/b = \rho(I')/b = \rho(I'/b) + (K/b)\cdot \rho(I'/a)$ by induction. However as $I = (c+\epsilon)I'$, $\rho(I'/b) + (K/b)\cdot \rho(I'/a) = \rho(I/b) + (K/b)\cdot \rho(I/a) = \rho(I)/b$. Finally, if $c \neq b$ and $b = a$, $\rho(I)/a = \rho(I')/a = (K/a)\cdot \rho(I'/a) = (K/a)\cdot \rho(I/a) $. \\
	
	\item If $\alpha = B^*$ with $a \notin B$, then if $b \in B$ we have $B^*/b \neq \emptyset$ which implies $(K/b)\cdot \rho(I/a) \subseteq \rho(I')/b \subseteq (\rho(\alpha)/b )\cdot \rho(I') = (B^*)\cdot \rho(I') = \rho(I/b)$.  Thus $\rho(I)/b = \rho(I/b) + (K/b)\cdot \rho(I/a)$. 
	
	When $b \notin B$ and $b\neq a$, $\rho(I)/b = \rho(I')/b = \rho(I'/b) + (K/b)\cdot \rho(I'/a)$ by induction. And as neither $a$ or $b$ is in $B$ it gives $\rho(I'/b) + (K/b)\cdot \rho(I'/a) =\rho(I/b) + (K/b)\cdot \rho(I/a) = \rho(I)/b $. On the other hand, when $b \notin B$ and $b = a$, $\rho(I)/b = \rho(I')/b = (K/a)\cdot \rho(I'/a) = (K/a)\cdot \rho(I/a) $. \\
	
	\item Finally, if $\alpha = B^*$ with $a \in B$, then $\rho(\alpha) = (\Sigma(K) \cup (B \setminus \{ a \}))^* = B'^*$. If $b \in B'$ and $a \neq b$ then $\rho(I)/b = B'^* \rho(I')$. Now if $b \notin \Sigma(K)$ then $B'^* \rho(I') = \rho(I/b) = \rho(I/b)+(K/b)\cdot \rho(I/a)$. If $b \in \Sigma(K)$, $B'^* \rho(I') = \rho(I/a) = (K/b)\cdot \rho(I/a) = \rho(I/b) + (K/b)\cdot \rho(I/a)$. In the case where $b \in B'$ and $a \neq b$, this means that $a \in \Sigma(K)$. Thus $\rho(I)/a = B'^* \rho(I') = \rho(I/a) = K/a \cdot \rho(I/a)$. 
	
	Now if $b \notin B'$ and $b \neq a$, $\rho(I)/b = \rho(I')/b = \rho(I'/b) + (K/b)\cdot \rho(I'/a)$ by induction. However as $b \notin B$, $\rho(I'/b) = \rho(I/b)$. And as $b \notin B'$, we get $K/b = \emptyset$, thus $(K/b)\cdot\rho(I'/a) =(K/b)\cdot\rho(I/a) = \emptyset$. Hence $\rho(I)/b = \rho(I'/b) = \rho(I/b) = \rho(I/b) + (K/b)\cdot \rho(I/a)$. Finally, if $b \notin B'$ and $b = a$ we get $K/a = \emptyset$ since $a \notin \Sigma(K)$. Thus $\rho(I)/a = \emptyset = K/a \cdot \rho(I/a).$\\
	
	Thus, in the case where $a=b$, we get that $\rho(I)/a = K/a\cdot \rho(I/a)$. 
	And when $a \neq b$ we get $\rho(I)/b = \rho(I/b) + K/b \cdot \rho(I/a)$.
\qed    	
 \end{itemize}
\end{proof}

This result is interesting but as we do not want to limit ourselves to product of atoms, we need to extend it. And as the substitution and the quotient distribute over sums, we can easily extend this lemma to general downward closed languages.

\begin{proposition}\label{calcul_residu_subst_down}

	Let $L,K \subset \Sigma^*$ be languages with L,K downward closed and $a\neq b \in \Sigma$, we compute the quotients of $L^{a \leftarrow K}$ as follows : 
	
	\begin{itemize}
	
	\item $\rho(L) / \epsilon = \rho(L)$, 
	
	\item $\rho(L) / b = \rho(L/b) + K/b \cdot \rho(L/a)$,
	
	\item $\rho(L) /a = K/a \cdot \rho(L/a)$.
	
	\end{itemize}	 
\end{proposition}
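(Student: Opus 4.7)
The plan is to reduce the statement to \Cref{quotients of quotient on ideals}, which already establishes the three identities in the restricted setting where the substituted language is a single product of atoms. The reduction will rely on three elementary distributivity facts: the substitution $\rho$ distributes over arbitrary unions directly from \Cref{def_substitution}, left-quotienting distributes over unions, and left-concatenation by a fixed language distributes over unions. Combined with the corollary recalled above that any downward closed $L$ decomposes as a finite union $L = \bigcup_{j} I_{j}$ of products of atoms, this gives a clean path from the product case to the general case.

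The $\epsilon$ case is immediate: $L/\epsilon = L$ for every language, so $\rho(L)/\epsilon = \rho(L)$. For the two letter cases, I will fix a decomposition $L = \bigcup_{j} I_{j}$ into products and observe that $\rho(L)/c = \bigcup_{j} \rho(I_{j})/c$ for any letter $c$, using distributivity of $\rho$ and of quotienting over unions. Applying \Cref{quotients of quotient on ideals} to each summand and pulling the concatenation by $K/b$ (respectively $K/a$) back out of the union, the right-hand side becomes $\bigcup_{j} \rho(I_{j}/b) + (K/b)\cdot \bigcup_{j} \rho(I_{j}/a)$ when $b \neq a$, and $(K/a) \cdot \bigcup_{j} \rho(I_{j}/a)$ when $b = a$. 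A final use of the distributivity of $\rho$ over unions together with $L/x = \bigcup_{j} I_{j}/x$ rewrites these as $\rho(L/b) + (K/b)\cdot \rho(L/a)$ and $(K/a) \cdot \rho(L/a)$, respectively.

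No real obstacle is expected: the inductive work has already been done in \Cref{quotients of quotient on ideals}, and what remains is routine bookkeeping with unions. The only mild point to double-check is that the formulas on the right-hand side do not depend on the particular decomposition of $L$ into products, but this is automatic since they only involve $L/a$, $L/b$, $K/a$ and $K/b$, which are intrinsic to $L$ and $K$.
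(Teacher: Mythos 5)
Your proposal is correct and follows essentially the same route as the paper: decompose the downward closed $L$ into a finite union of products of atoms, apply \Cref{quotients of quotient on ideals} to each summand, and use distributivity of $\rho$, of quotienting, and of left-concatenation over unions to reassemble the formulas (the $\epsilon$ case being trivial). Nothing further is needed.
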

\begin{proof}
Since $L$ is downward closed it can be written as a sum of product $L = \sum_j I_j$  , 

Let $b\neq a$ a letter.
\begin{align*}
    \rho(L)/b    &= \rho\bigl( \sum_j I_j \bigr) /b \\ 
                &= \sum_j \rho(I_j)/b \\
                &= \sum_j \bigl( \rho(I_j/b) + (K/b).\rho(I_j/a) \bigr)  \\
\shortintertext{using \Cref{quotients of quotient on ideals}}
                &= \sum_j \rho(I_j/b) + \sum_j (K/b).\rho(I_j/a) \\
                &= \rho(L/b) + (K/b).\rho(L/a)\:.
\end{align*}
A similar computation shows $\rho(L)/a = K/a \cdot \rho(L/a)$. 
\qed    
\end{proof}

Now that we know the structure of the quotients by a letter of the substitution in the case of downward closed languages, we can study the state complexity (i.e., quotients by words).

\begin{proposition}\label{Psi_well_defined}

Let $L,K$ be downward closed languages such that $\Sigma(L) \cap \Sigma(K) = \emptyset$, with $K \neq \emptyset$. For every $R$ quotient of $\rho(L)$, there exists a pair $(P,Q)$ where $Q \in \mathcal{R}(L)$ and $P \in \mathcal{R}(K)\cup \{ \{ \epsilon \} \}$, such that $R = P . \rho(Q) $. 

\end{proposition}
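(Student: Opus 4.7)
I would prove this by induction on the length of a word $w$ chosen so that $R = \rho(L)/w$. The base case $w = \epsilon$ is immediate: $R = \rho(L) = \{\epsilon\} \cdot \rho(L)$, so $(P,Q) = (\{\epsilon\}, L)$ works. For the inductive step, I would assume $R = P \cdot \rho(Q)$ has the announced form, and consider $R/c = \rho(L)/(wc)$ for an arbitrary letter $c$. The standard product rule gives $R/c = (P/c) \cdot \rho(Q) + P^\epsilon \cdot (\rho(Q)/c)$, and \Cref{calcul_residu_subst_down} controls $\rho(Q)/c$. Two consequences of the disjoint-alphabet hypothesis will drive every simplification: (i) $P$ is a downward-closed subset of $\Sigma(K)^*$, so $P/c = \emptyset$ for any $c \in \Sigma(L)$; and (ii) $Q$ is a downward-closed subset of $\Sigma(L)^*$, so $Q/c = \emptyset$ (and hence $\rho(Q/c) = \emptyset$) for any $c \in \Sigma(K)$.

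With (i) and (ii) in hand, a case analysis on $c$ will dispatch most subcases. When $P = \{\epsilon\}$ we have $R/c = \rho(Q)/c$, which \Cref{calcul_residu_subst_down} puts directly in the target form: $\rho(Q/c)$ if $c \in \Sigma(L) \setminus \{a\}$, $K/a \cdot \rho(Q/a)$ if $c = a$, and $K/c \cdot \rho(Q/a)$ if $c \in \Sigma(K)$. When $P \in \mathcal{R}(K)$ is non-empty (hence downward-closed and containing $\epsilon$, so $P^\epsilon = \{\epsilon\}$) and $c \in \Sigma(L)$, fact (i) kills the first summand and we again get $R/c = \rho(Q)/c$, already handled. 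The only remaining subcase is $P \in \mathcal{R}(K) \setminus \{\emptyset\}$ with $c \in \Sigma(K)$, which produces $R/c = (P/c) \cdot \rho(Q) + K/c \cdot \rho(Q/a)$.

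The main obstacle will be showing that this last sum always collapses to one of its summands. If $P/c = \emptyset$, the first summand vanishes and $R/c = K/c \cdot \rho(Q/a)$ already has the required shape with $P' = K/c$ and $Q' = Q/a$. Otherwise $P/c$ is downward-closed and non-empty, so $\epsilon \in P/c$, and I would prove $K/c \cdot \rho(Q/a) \subseteq (P/c) \cdot \rho(Q)$ as follows: given $y_1 y_2$ with $y_1 \in K/c$ and $y_2 \in \rho(Q/a)$, one has $y_1 \in K$ (since $K/c \subseteq K$ by downward closure of $K$) and $y_2 \in \rho(x')$ for some $x' \in Q/a$; then $a x' \in Q$, so $y_1 y_2 \in K \cdot \rho(x') = \rho(a x') \subseteq \rho(Q)$, and writing $y_1 y_2 = \epsilon \cdot y_1 y_2$ places it in $(P/c) \cdot \rho(Q)$, yielding $R/c = (P/c) \cdot \rho(Q)$ with $P' = P/c$ and $Q' = Q$. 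This absorption is the only non-routine step and is exactly where the disjoint-alphabet hypothesis is used essentially: it makes the ``start a fresh $K$-block'' interpretation (second summand) subsumed by the ``continue the current $K$-block'' interpretation (first summand) as soon as the latter is live. Degenerate cases ($P = \emptyset$ making $R = \emptyset$, or $Q/a = \emptyset$ which already kills the second summand) are dispatched by inspection.
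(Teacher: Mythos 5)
Your proposal is correct and follows essentially the same route as the paper: induction on the word defining the quotient, the concatenation quotient rule, \Cref{calcul_residu_subst_down}, and the disjoint-alphabet hypothesis to kill one summand, with an absorption step when the $K$-part quotient is non-empty. The only difference is organizational — the paper absorbs $\rho(Q')/b$ wholesale via $\rho(Q')/b \subseteq \rho(Q') \subseteq (P'/b)\cdot\rho(Q')$ before invoking \Cref{calcul_residu_subst_down}, whereas you expand first and absorb the term $K/c\cdot\rho(Q/a)$, which amounts to the same argument.
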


\begin{proof}

Let us write $\Psi$ for the function associating, with every quotient $R$ of $\rho(L)$, a pair $(P,Q)$  such that $R = P . \rho(Q)$: we want to show that this function is well defined. For a quotient $R$ of $\rho(L)$, there exists $x \in \Sigma^*$ such that $R = \rho(L) / x$. Let us prove the proposition by induction on the length of $x$. \\

If $|x| = 0$, $R = \rho(L) = \{ \epsilon \}.\rho(L)$, thus the pair $(L,\{ \epsilon \})$ fulfills the claim. 

If $x = x'b$ with $|x'| \geq 0$ and $b$ a letter, let us write $R' = \rho(L) / x'$, by induction hypothesis we have that $ R' = P' . \rho(Q')$. Using \Cref{calcul_residu_subst_down} we get that $ R = R'/b = (P' \cdot \rho(Q'))/b = P'/b \cdot \rho(Q') + \epsilon \cdot \rho(Q')/b$. However, if $P'/b \neq \emptyset$, as $\rho(Q')/b \subseteq \rho(Q')$ because $L$ is downward closed and $\epsilon \in P'/b$, then $\rho(Q')/b \subseteq P'/b \cdot \rho(Q')$. Thus, if $P'/b \neq \emptyset$, then $R = P'/b \cdot \rho(Q')$.

If $P'/b = \emptyset$ we have $R = \rho(Q')/b $. If $a=b$ then $R = K/a \cdot \rho(Q'/a)$ and this fulfills the claim. On the other hand, if $b \neq a$, $ R = \rho(Q'/b) + K/b . \rho(Q'/a) $. This is not of the expected form. However, using the hypothesis that the alphabets for $L$ and $K$ are disjoints, we see that only one of the terms is non-empty. In both cases we have quotients of $K$ or $\{ \epsilon\}$ and of $L$.
\qed    
\end{proof}

\begin{remark}\label{remark_plus}

	If $L$ and $K$ are on the same alphabet and are downward closed, then the function $\Psi$ is not well defined. For example, with $L = \dc \mathtt{ab} \ +\dc \mathtt{ba}$ and $K = \dc \mathtt{bbc}$, we have that $L^{a \leftarrow K}/b = \dc \mathtt{bcb} \ + \dc \mathtt{bbc}$. 

\end{remark}

\begin{remark}\label{remark_non_empty}

Let $L,K$ be downward closed languages, if $R$ quotient of $\rho(L)$ is non-empty, then for all pairs $P,Q \in (\mathcal{R}(K)\cup \{ \{ \epsilon \} \}) \times \mathcal{R}(L) $ such that $R = P \cdot \rho(Q)$ we have  $P\neq \emptyset$ and $Q \neq \emptyset$. 

\end{remark}

\begin{theorem}\label{upperbound_replace_diff}

	Let $L,K$ be downward closed languages based on disjoints alphabets. Then $\kappa\bigl(L^{a \leftarrow K}\bigr) \leq \kappa(L)\kappa(K)$. 
	
\end{theorem}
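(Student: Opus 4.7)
The plan is to turn \Cref{Psi_well_defined} into a size bound on $\mathcal{R}(\rho(L))$. For each quotient $R$ of $\rho(L)$ the proposition supplies a pair $(P,Q)\in(\mathcal{R}(K)\cup\{\{\epsilon\}\})\times\mathcal{R}(L)$ with $R=P\cdot\rho(Q)$; fixing one such pair per $R$ defines a map $\Psi:\mathcal{R}(\rho(L))\to(\mathcal{R}(K)\cup\{\{\epsilon\}\})\times\mathcal{R}(L)$. This map is automatically injective, because $(P,Q)$ determines the language $P\cdot\rho(Q)=R$, so $\kappa(\rho(L))$ is bounded by the size of the image of $\Psi$.

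The next step is to refine the count using the disjoint-alphabet hypothesis. After disposing of the easy cases $\Sigma(L)=\emptyset$ or $\Sigma(K)=\emptyset$ separately, the hypothesis ensures $\emptyset\in\mathcal{R}(K)$: any letter $a_0\in\Sigma(L)$ occurs in no word of $K$, so $K/a_0=\emptyset$, and by symmetry $\emptyset\in\mathcal{R}(L)$. Now by \Cref{remark_non_empty}, every pair $\Psi(R)$ corresponding to a non-empty quotient $R$ must have $P\neq\emptyset$ and $Q\neq\emptyset$. The set $\mathcal{R}(L)$ has exactly $\kappa(L)-1$ non-empty elements, and $\mathcal{R}(K)\cup\{\{\epsilon\}\}$ has at most $\kappa(K)$ non-empty elements: if $\{\epsilon\}\in\mathcal{R}(K)$ the union has size $\kappa(K)$ of which $\kappa(K)-1$ are non-empty, whereas if $\{\epsilon\}\notin\mathcal{R}(K)$ the union has size $\kappa(K)+1$ and exactly $\kappa(K)$ non-empty elements.

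Putting these counts together, and adding $R=\emptyset$ as at most one extra quotient, gives $\kappa(\rho(L))\leq 1+\kappa(K)(\kappa(L)-1)=\kappa(K)\kappa(L)-\kappa(K)+1\leq\kappa(K)\kappa(L)$, where the last inequality uses $\kappa(K)\geq 1$. The main subtle point is the bookkeeping of the extra $\{\epsilon\}$-option in the first coordinate: the naive bound $|(\mathcal{R}(K)\cup\{\{\epsilon\}\})\times\mathcal{R}(L)|$ is too generous by roughly $\kappa(L)$, and the disjoint-alphabet hypothesis is precisely what lets one recover the sharper $\kappa(K)\kappa(L)$ by absorbing this excess into the $-\kappa(K)+1$ slack produced by isolating the $R=\emptyset$ case.
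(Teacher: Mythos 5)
Your overall strategy is the same as the paper's (use \Cref{Psi_well_defined} to inject $\mathcal{R}(\rho(L))$ into pairs, then sharpen the count via \Cref{remark_non_empty}), but there is a genuine gap in the sharpening step. You claim that the disjoint-alphabet hypothesis forces $\emptyset\in\mathcal{R}(K)$ (via $K/a_0=\emptyset$ for $a_0\in\Sigma(L)$) and, symmetrically, $\emptyset\in\mathcal{R}(L)$. That argument quotients $K$ by a letter outside $\Sigma(K)$, i.e.\ it works in $\mathcal{R}_{\Sigma(L)\cup\Sigma(K)}(K)$, whereas the quantities $\kappa(K)$ and $\kappa(L)$ in the statement count quotients of each language over its \emph{own} alphabet. (This convention is forced by the theorem itself: for $K=\Sigma(K)^*$ one must have $\kappa(K)=1$, otherwise the paper's separate treatment of that case would be pointless; and earlier the paper records that $\emptyset$ is a quotient of $L$ iff $L\neq\Sigma^*$.) Consequently your counts ``$\mathcal{R}(K)\cup\{\{\epsilon\}\}$ has at most $\kappa(K)$ non-empty elements'' and ``$\mathcal{R}(L)$ has exactly $\kappa(L)-1$ non-empty elements'' fail exactly when $K=\Sigma(K)^*$ or $L=\Sigma(L)^*$ with non-empty alphabets, and these cases are not excluded by your ``easy cases'' $\Sigma(L)=\emptyset$ or $\Sigma(K)=\emptyset$. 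Concretely, for $K=\{b\}^*$ one has $\kappa(K)=1$, yet $\mathcal{R}(K)\cup\{\{\epsilon\}\}=\{K,\{\epsilon\}\}$ contains two non-empty elements, so your bookkeeping only yields $\kappa(\rho(L))\leq 2(\kappa(L)-1)+1$, which exceeds the claimed bound $\kappa(L)\kappa(K)=\kappa(L)$ as soon as $\kappa(L)\geq 2$.

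To close the gap you must treat the full-language cases separately, as the paper does: if $L=\Sigma(L)^*$ then $\rho(L)=\bigl(\Sigma(L)\setminus\{a\}\cup\Sigma(K)\bigr)^*$, so $\kappa(\rho(L))=1$; and if $K=\Sigma(K)^*$ one uses \Cref{calcul_residu_subst_down} to show that every quotient of $\rho(L)$ by a word is of the form $\rho(L/x)$ or $K\cdot\rho(L/x)$ with the two forms collapsing appropriately, giving $\kappa(\rho(L))\leq\kappa(L)=\kappa(L)\kappa(K)$. With those two cases handled separately, your counting argument (which then legitimately has $\emptyset\in\mathcal{R}(L)$ and $\emptyset\in\mathcal{R}(K)$) goes through and coincides with the paper's proof.
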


\begin{proof}

Using \Cref{Psi_well_defined} we know that the function $\Psi$ is well defined when $L,K$ are downward closed languages on disjoints alphabets. 

Let us show that $\Psi$ is injective. Take any two quotients $R \neq R'$ of $\rho(L)$, and assume $\Psi(R)= \Psi(R')= (P,Q)$. Then $R = P\cdot \rho(Q) = R'$ which is absurd. Thus $\Psi$ is injective. \\ 

The injectivity of $\Psi$ gives the upper bound $\kappa(\rho(L)) \leq |(\mathcal{R}(K)\cup \{ \{ \epsilon \} \}) \times \mathcal{R}(L)| \leq (\kappa(K)+1)(\kappa(L))$. However, if $L \neq \Sigma(L)^*$ and $K \neq \Sigma(K)^*$, then they both have an empty quotients and the \Cref{remark_non_empty} applies. Hence all quotients of $\rho(L)$ have their image by $\Psi$ in a set having $(\kappa(K))(\kappa(L)-1)+1$ elements. Thus $\kappa(L^{a \leftarrow K}) \leq (\kappa(L)-1)\kappa(K)+1$. \\

We now need to deal with the cases where we do not have $\emptyset$ as a quotient in one of the languages. Those cases are easier but the methods are different to reach the upper bounds. 

\begin{itemize}

\item If $L = \Sigma(L)^*$, $\rho(L) = ( \Sigma(L)\setminus \{a\} \cup \Sigma(K) )^*$, which is equal to ${\Sigma'}^*$ on the result alphabet  $\Sigma'=\Sigma(L)\setminus \{a\} \cup \Sigma(K)$. Thus, in this case $\kappa(L^{a \leftarrow K})=1 \leq \kappa(L)\kappa(K)$.

\item If $K = \Sigma(K)^*$, using \Cref{calcul_residu_subst_down} when $b \in \Sigma(K)$  gives $\rho(L)/b = K/b \cdot \rho(L/a) = K \cdot \rho(L/a)$, and if $c \notin \Sigma(K)$, $\rho(L)/bc = (K \cdot \rho(L/a))/c = \rho(L/a)/c = \rho(L/ac)$. Thus a word $w \in (\Sigma(L) \cup \Sigma(K))^*$ can be factorized as $w=k_1 \cdot l_1 \cdots k_n \cdot l_n$ with $k_i \in \Sigma(K)^*$ and $l_i \in \Sigma(L)^*$. 
This gives \fbox{$\rho(L)/w = \rho(L/(a l_1 \cdots a l_n))$} if $l_n \neq \epsilon$ and \fbox{$\rho(L)/w = K \cdot \rho(L/(a l_1 \cdots l_{n-1} a))$}.
Thus $\kappa(\rho(L))\leq \kappa(L) = \kappa(L)\kappa(K)$.

\end{itemize}

Thus, in all those cases, $\kappa(L^{a \leftarrow K}) \leq \kappa(L)\kappa(K)$.
\qed    
\end{proof}

\begin{remark}

One can find in \Cref{substitution_automaton} the construction of the automaton for $\rho(L)$ based on one for $L$ and one for $K$. This can give an insight, a more visual argument for the definition of $\Psi$. 

\end{remark}

The following corollary is the extension of \Cref{upperbound_replace_diff} to a regular substitution.	

\begin{corollary}

Let $L$ and $(K_{a_i})_{a_i \in \Sigma}$ downward closed languages such that all $|\Sigma|+1$ languages have pairwise disjoints alphabets, then we have that
\[ \kappa(L^{a_1 \leftarrow K_{a_1}, ... ,a_n \leftarrow K_{a_n} }) \leq \kappa(L) \prod_{1 \leq i \leq n} \kappa(K_{a_i}) \]

\end{corollary}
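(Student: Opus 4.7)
The plan is to prove the corollary by induction on the number $n$ of substituted letters, using the key identity
\[
L^{a_1 \leftarrow K_{a_1},\ldots, a_n \leftarrow K_{a_n}} = \bigl(L^{a_1 \leftarrow K_{a_1},\ldots, a_{n-1} \leftarrow K_{a_{n-1}}}\bigr)^{a_n \leftarrow K_{a_n}},
\]
which is immediate from \Cref{def_substitution} since the alphabets $\Sigma(K_{a_i})$ and $\{a_n\}$ are disjoint (so $a_n$ is never introduced by the inner substitutions and the two operations commute in this nested sense).

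The base case $n=1$ is exactly \Cref{upperbound_replace_diff}. For the inductive step, write $L' = L^{a_1 \leftarrow K_{a_1},\ldots, a_{n-1} \leftarrow K_{a_{n-1}}}$. By \Cref{subts_closed} the language $L'$ is downward closed, and by the induction hypothesis
\[
\kappa(L') \;\leq\; \kappa(L) \prod_{1 \leq i \leq n-1} \kappa(K_{a_i}).
\]
It then remains to apply \Cref{upperbound_replace_diff} to the single substitution $(L')^{a_n \leftarrow K_{a_n}}$ to conclude
\[
\kappa\bigl((L')^{a_n \leftarrow K_{a_n}}\bigr) \;\leq\; \kappa(L')\,\kappa(K_{a_n}) \;\leq\; \kappa(L) \prod_{1 \leq i \leq n} \kappa(K_{a_i}).
\]

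The main obstacle I anticipate is verifying the alphabet-disjointness hypothesis needed when invoking \Cref{upperbound_replace_diff} in the inductive step, namely $\Sigma(L') \cap \Sigma(K_{a_n}) = \emptyset$. This follows from the inclusion
\[
\Sigma(L') \;\subseteq\; \bigl(\Sigma(L)\setminus\{a_1,\ldots,a_{n-1}\}\bigr) \cup \bigcup_{1 \leq i \leq n-1} \Sigma(K_{a_i}),
\]
and the assumption that the $|\Sigma|+1$ alphabets are pairwise disjoint, which ensures $\Sigma(K_{a_n})$ is disjoint from $\Sigma(L)$ and from each $\Sigma(K_{a_i})$ with $i<n$. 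Once this is written down cleanly, the proof is a short induction, with no additional combinatorics beyond what is packaged in \Cref{upperbound_replace_diff}.
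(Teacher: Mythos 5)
Your proof is correct and follows essentially the same route as the paper: decompose the simultaneous substitution into a composition of singular substitutions (justified by the pairwise-disjoint alphabets) and iterate the bound of \Cref{upperbound_replace_diff}. The only difference is that you explicitly verify the hypotheses at each inductive step (downward closedness of the intermediate language via \Cref{subts_closed} and disjointness of $\Sigma(L')$ from $\Sigma(K_{a_n})$), details the paper leaves implicit.
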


\begin{proof}

As the alphabets are pairwise disjoints, we can decompose the substitution as a composition of singular substitutions : $L^{a_1 \leftarrow K_{a_1}, ... ,a_n \leftarrow K_{a_n} } = (...( L^{a_1 \leftarrow K_{a_1} } )^{a_2 \leftarrow  K_{a_2}} ... )^{a_n \leftarrow  K_{a_n}} $. 
Using the iteration of the bound from \Cref{upperbound_replace_diff} we have that $\kappa(L^{a_1 \leftarrow K_{a_1}, ... ,a_n \leftarrow K_{a_n} }) \leq \kappa(L) \prod_{1 \leq i \leq n} \kappa(K_{a_i})$.  
\qed    
\end{proof}

This result gives an exponential bound for the substitution, which hints that even under the restrictive hypothesis of pairwise disjoint alphabets it seems hard to do better than exponential. We proved in \Cref{exp_lower_bound_subst} an example that reached this bound without this hypothesis, it could be interesting to try to reach it with the hypothesis on the alphabets for arbitrary $\kappa(K_i)$.\\

Furthermore, while researching on the singular substitution we could not find a family of downward closed languages $(L_i,K_i)_i$ whose substitutions $(\rho_i(L_i))$ had an exponentially growing state complexity. We thus make the following conjecture :

\begin{conjecture}\label{conjecture_finale}
    If $L$ and $K$ are downward closed languages, then $\kappa(L^{a\leftarrow K}) \leq \kappa(L)\kappa(K)$. 
\end{conjecture}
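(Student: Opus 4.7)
The approach is to extend the proof of \Cref{upperbound_replace_diff} to arbitrary (possibly overlapping) alphabets. Under disjoint alphabets, every quotient $R$ of $\rho(L)$ was shown to admit a unique representation $R = P\cdot \rho(Q)$ with $P \in \mathcal{R}(K)\cup\{\{\epsilon\}\}$ and $Q\in\mathcal{R}(L)$, and injectivity of the induced map $\Psi$ yielded the bound. In the general case, the obstruction noted in \Cref{remark_plus} is that the formula $\rho(L)/b = \rho(L/b) + K/b\cdot\rho(L/a)$ can have both summands non-trivial, so a quotient may genuinely fail to be a single product. I aim to show that a canonical pair $(P,Q)$ can nevertheless be extracted from each quotient, preserving injectivity.

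My plan would proceed in four steps. First, decompose $L = \bigcup_j I_j$ into finitely many directed languages via the SRE normal form; the directed case (already proven) gives, for each $j$, an injective map $\Psi_j$ on $\mathcal{R}(\rho(I_j))$. Second, observe that $\rho(L)/x = \bigcup_j \rho(I_j)/x$ is always a finite union of products $P_j\cdot\rho(Q_j)$, where each $P_j \in \mathcal{R}(K)\cup\{\{\epsilon\}\}$ and each $Q_j \in \mathcal{R}(L)$. Third, prove a \emph{collapsing lemma}: this union can be canonically represented by a single pair $(P^\ast,Q^\ast) \in (\mathcal{R}(K)\cup\{\{\epsilon\}\})\times\mathcal{R}(L)$, where intuitively $Q^\ast$ reflects the maximal $L$-quotient realised by a parsing of $x$ and $P^\ast$ the state of a pending $K$-factor. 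Fourth, verify that $\Psi(R) := (P^\ast,Q^\ast)$ is injective, yielding $\kappa(\rho(L)) \leq \kappa(L)\kappa(K)$.

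The main obstacle is the collapsing lemma. Even when $\bigcup_j \rho(I_j)/x$ is not itself a single product, we must read off a canonical pair without collision across distinct quotients. The example in \Cref{remark_plus} shows that $\dc(\mathtt{bcb}) + \dc(\mathtt{bbc})$ genuinely arises as a quotient, and the chosen $(P^\ast,Q^\ast)$ must distinguish it from, say, $\dc(\mathtt{bcb})$ alone, which is itself a plausible quotient of another substitution. I expect the right invariant to combine the subword structures of $L$ and $K$: for instance, one may take $Q^\ast$ to be the smallest quotient of $L$ such that $\rho(Q^\ast)$ already contains the full quotient, and $P^\ast$ to encode the longest common $K$-prefix of all valid parsings of $x$. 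Making such a definition work uniformly across cases — and proving the resulting $\Psi$ is injective — is where the real content lies.

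If the algebraic route resists, I would fall back on an automaton construction: build an NFA for $\rho(L)$ of size at most $\kappa(L) + \kappa(L)(\kappa(K)-1)$, whose states track the current quotient of $L$ together with an optional quotient of $K$ representing a pending substitution window, and then argue that, because $L$ and $K$ are downward closed, every accessible subset arising in the determinization admits a unique $\subword$-maximum element. This would collapse the subset construction and bound the reachable DFA states by $\kappa(L)\kappa(K)$, parallel to well-known phenomena where downward-closedness tames nondeterminism.
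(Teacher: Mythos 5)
You are addressing \Cref{conjecture_finale}, which the paper states as a conjecture and does not prove: the bound is established only under extra hypotheses, namely disjoint alphabets (\Cref{upperbound_replace_diff}) and $L$ directed (\Cref{thm_productstate}), plus the special case $K=\dc(\Sigma(L))$ with $L=\dc(w)$. Your text must therefore stand as a proof of an open statement, and it does not: the ``collapsing lemma'' of your third step is precisely the open content of the conjecture, and nothing in the sketch establishes it. Your candidate definition of $Q^\ast$ (the smallest quotient of $L$ with $\rho(Q^\ast)$ containing the given quotient $R$) need not be well defined, because the quotients of a non-directed downward closed $L$ are not totally ordered by inclusion (\Cref{two_quotients_for_ideals} is specific to products), so a smallest such quotient may fail to exist or to be unique; and even granting a well-defined pair $(P^\ast,Q^\ast)\in(\mathcal{R}(K)\cup\{\{\epsilon\}\})\times\mathcal{R}(L)$, injectivity is exactly the difficulty: once both summands of $\rho(L)/b=\rho(L/b)+K/b\cdot\rho(L/a)$ can be non-trivial (\Cref{remark_plus}), a quotient such as $\dc(\mathtt{bcb})+\dc(\mathtt{bbc})$ is no longer determined by one such pair, and you give no argument that distinct quotients receive distinct tags. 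There is also a bookkeeping slip in your second step: by \Cref{psi for products} each component $\rho(I_j)/x$ is of the form $P_j\cdot\rho(Q_j)$ with $Q_j$ a quotient of $I_j$ (a suffix product), not a quotient of $L$; the quotients of $L$ are the unions $\bigcup_j I_j/x$, so even the raw decomposition requires more care than stated.

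The fallback route fails as stated as well. In the determinization of the substitution automaton (\Cref{substitution_automaton}), your key claim is that every accessible subset admits a unique maximal element, which would collapse the subset construction; but \Cref{remark_plus} exhibits a reachable quotient that is a union of two inclusion-incomparable directed languages, $\dc(\mathtt{bcb})$ and $\dc(\mathtt{bbc})$, so accessible subsets can contain several maximal, mutually incomparable components and no such collapse occurs. Downward closedness does guarantee that each reachable quotient is a finite union of products, but bounding the number of these unions by $\kappa(L)\kappa(K)$ is again the conjecture itself. In short, you have correctly located the obstruction and the two proved special cases, but neither route closes the gap; the statement remains open after your argument.
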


In order to prove \Cref{conjecture_finale} we need to avoid any assumption on the alphabets. Let us first prove that if $L$ is a product of atoms then the quadratic bound holds. 

\begin{lemma}\label{psi for products}
Let $I$ be a product and $K$ a downward closed language, then any quotient of $ I ^{a \leftarrow K}$ can be written as $P_K . \rho(P_I)$ where $P_K$ is either a quotient of $K$ or $\{ \epsilon \}$ and $P_I$ is a quotient of $ I  $.

\end{lemma}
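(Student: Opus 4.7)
The plan is to mimic the induction of \Cref{Psi_well_defined}, replacing the disjoint-alphabets argument (which was used to ensure that exactly one term in the sum $\rho(L/b)+K/b\cdot\rho(L/a)$ survives) by the ``linearly ordered suffixes'' property \Cref{two_quotients_for_ideals}. Concretely, I prove by induction on $|x|$ that every $\rho(I)/x$ has the form $P_K\cdot\rho(P_I)$. For $|x|=0$ we take $P_K=\{\epsilon\}$ and $P_I=I$. For $x=x'b$, by induction $\rho(I)/x'=P'_K\cdot\rho(P'_I)$ with $P'_K$ either $\{\epsilon\}$ or a quotient of $K$ (so in either case $\epsilon\in P'_K$), and $P'_I$ a suffix product of $I$ (hence itself a product). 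Applying the product rule for the quotient of a concatenation and using $\epsilon\in P'_K$ yields
\[
\rho(I)/x \;=\; P'_K/b\cdot\rho(P'_I)\;+\;\rho(P'_I)/b\:.
\]

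The easy case is when $P'_K/b\neq\emptyset$: then $\epsilon\in P'_K/b$, so $\rho(P'_I)\subseteq P'_K/b\cdot\rho(P'_I)$; since $\rho(P'_I)$ is downward closed we also have $\rho(P'_I)/b\subseteq\rho(P'_I)$, so the right summand is absorbed and we conclude $\rho(I)/x=P'_K/b\cdot\rho(P'_I)$ with $P_K=P'_K/b$ (a quotient of $K$) and $P_I=P'_I$. When $P'_K/b=\emptyset$ the equation reduces to $\rho(P'_I)/b$, which by \Cref{calcul_residu_subst_down} is either $K/a\cdot\rho(P'_I/a)$ (if $b=a$, giving directly the desired form) or $\rho(P'_I/b)+K/b\cdot\rho(P'_I/a)$ (if $b\neq a$). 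In this last situation I invoke \Cref{two_quotients_for_ideals}: since $P'_I/a$ and $P'_I/b$ are both quotients of the product $P'_I$, they are comparable as languages.

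If $P'_I/b\subseteq P'_I/a$, then $\rho(P'_I/b)\subseteq\rho(P'_I/a)$, and when $K/b\neq\emptyset$ we have $\epsilon\in K/b$ so $\rho(P'_I/a)\subseteq K/b\cdot\rho(P'_I/a)$; hence the sum collapses to $K/b\cdot\rho(P'_I/a)$ (and to $\rho(P'_I/b)$ if $K/b=\emptyset$), both of the required form. The main obstacle is the opposite inclusion $P'_I/a\subseteq P'_I/b$, where one must show $K/b\cdot\rho(P'_I/a)\subseteq\rho(P'_I/b)$ so that the sum collapses to $\rho(P'_I/b)$. I would prove the stronger statement $K\cdot\rho(P'_I/a)\subseteq\rho(P'_I/b)$ by unpacking the structure of the product: writing $P'_I=\alpha_1\cdots\alpha_n$, the suffix $P'_I/b$ starts at some position $i$ and $P'_I/a$ starts at some position $j\geq i$; since every atom contains $\epsilon$, we may contribute $\epsilon$ from $\alpha_i,\ldots,\alpha_{j-1}$; at the first $a$-matching atom $\alpha_j$ (letter-atom $(a+\epsilon)$ or star-atom $B^*$ with $a\in B$), the substitution $\rho(\alpha_j)$ contains all of $K$; from $\alpha_{j+1}\cdots\alpha_n$ (or $\alpha_j\cdots\alpha_n$ in the star case, using closure under concatenation of the star) we recover $\rho(P'_I/a)$. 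Thus any $k\cdot v$ with $k\in K$ and $v\in\rho(P'_I/a)$ lies in $\rho(P'_I/b)$, which concludes the induction. The empty-quotient case is handled separately by $\emptyset=\{\epsilon\}\cdot\rho(\emptyset)$.
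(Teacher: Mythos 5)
Your induction has the same skeleton as the paper's proof (base case, absorption of the right summand when $P'_K/b\neq\emptyset$, then the dichotomy of \Cref{two_quotients_for_ideals}), and in the decisive sub-case you take a genuinely different route: where the paper observes that strict containment forces $P'_I/a=(P'_I/b)/a$ and then reuses \Cref{quotients of quotient on ideals} on $P'_I/b$ to get $K/b\cdot\rho(P'_I/a)\subseteq\rho(P'_I/b)/b\subseteq\rho(P'_I/b)$, you prove the inclusion directly from the atom structure. That route can be made to work, but not as you state it: the ``stronger statement'' $K\cdot\rho(P'_I/a)\subseteq\rho(P'_I/b)$ is false for the non-strict inclusion $P'_I/a\subseteq P'_I/b$. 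Take $P'_I=(a+\epsilon)\mathtt{b}^*$ and $K=\dc(\mathtt{bd})=\{\epsilon,\mathtt{b},\mathtt{d},\mathtt{bd}\}$: here $P'_I/a=P'_I/b=\mathtt{b}^*$, while $K\cdot \mathtt{b}^*\not\subseteq \mathtt{b}^*$; even the weaker inclusion you actually need, $K/b\cdot\rho(P'_I/a)=\{\epsilon,\mathtt{d}\}\mathtt{b}^*\subseteq\mathtt{b}^*$, fails, so in this configuration the sum genuinely collapses the other way, to $K/b\cdot\rho(P'_I/a)$, not to $\rho(P'_I/b)$. The slip is hidden by an indexing conflation: you let $j$ denote both the starting position of the suffix $P'_I/a$ and the position of the first $a$-matching atom; these differ by one when that atom is a letter-atom, and then $j\geq i$ does not guarantee that the $a$-atom lies inside $P'_I/b$ --- in the example it sits just in front of it, and $P'_I/b$ contains no $a$ at all.

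The repair is easy and preserves your argument: make the case analysis exclusive. Handle $P'_I/b\subseteq P'_I/a$ first (your first branch already covers equality, including the $K/b=\emptyset$ subcase, which the paper's own write-up glosses over); in the remaining case the containment $P'_I/a\subsetneq P'_I/b$ is strict, so the suffix $P'_I/a$ starts strictly later than $P'_I/b$, which forces the first $a$-containing atom of $P'_I$ to lie among the atoms of $P'_I/b$. With that atom located correctly, your construction (pad with $\epsilon$'s from the preceding atoms, emit $k\in K$ at the $a$-atom, recover $\rho(P'_I/a)$ from the remaining atoms, using idempotence of the star in the star-atom case) is sound and yields $K\cdot\rho(P'_I/a)\subseteq\rho(P'_I/b)$, so the sum collapses to $\{\epsilon\}\cdot\rho(P'_I/b)$ as required; the rest of your induction, including the $b=a$ case and the empty quotients, then goes through.
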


\begin{proof}

Let $R$ be a quotient of $I^{a \leftarrow K}$, $R = I^{a \leftarrow K}	/x$. Let us proceed by induction on the length of $x$. \\

If $|x| = 0$, $R = I^{a \leftarrow K} = \{ \epsilon \} . \rho(I)$ which verifies the claim. \\ 

If $x = x'b$, we write $R' = I^{a \leftarrow K}/x' = P'_K.\rho(P'_I)$. 
We have that $R = R'/b = P'_K/b . \rho(P'_I) + \rho(P'_I)/b$ as $P'_K/b$ is either empty or downward closed thus containing $\epsilon$. If $P'_K/b  \neq \emptyset$, we have $R = P'_K/b . \rho(P'_I) $ as $\rho(P'_I)/b \subseteq \rho(P'_I)$ . \\

Now if $P'_K/b = \emptyset$, $R = \rho(P'_I)/b = \rho(P'_I/b) + K/b.\rho(P'_I/a)$. Now as showed in \Cref{quotients of products}, there are $I_1,I_2 $ suffix products of the product of $P'_I$ such that $P'_I/b = I_1$ and $P'_I/a = I_2$. And as showed in \Cref{two_quotients_for_ideals}, one of the quotient contains another. 
\begin{itemize}

\item If $I_1 \subseteq I_2$, then $\rho(I_1) \subseteq K/b.\rho(I_2)$, thus $R = K/b . \rho(P'_I/a)$. 

\item If $I_2 \subseteq I_1$ and $I_2 \neq I_1$, then we have that $I_2 = I_1/a $. In fact we have that $I_1 = \prod_{j_1 \leq i}  \alpha_i$ and $I_2 = \prod_{j_2 \leq i}  \alpha_i$ with $j_1 < j_2$. And the property is $\alpha_{j_2} = \min_i ( a \in \alpha_i ) $ thus as $j_1 < j_2$, we have that $\alpha_{j_2}  = \min_{j_1 \leq i} ( a \in \alpha_i ) $. Hence, $I_1/a = I_2$. Now $\rho(I_1)/b = \rho(I_1/b) + K/b.\rho(I_1/a) =  \rho(I_1/b) + K/b.\rho(I_2)$. Thus $K/b.\rho(I_2) \subseteq \rho(I_1)/b \subseteq \rho(I_1) $. Thus, $R = \rho(I_1) = \{ \epsilon \} . \rho(P'_I/b)$. 
\qed    
\end{itemize}
\end{proof}

\begin{theorem}\label{thm_productstate}

Let $I$ be a product of atoms and $K$ a downward closed language, then $\kappa(I^{a \leftarrow K}) \leq \kappa(K)\kappa(I) $.

\end{theorem}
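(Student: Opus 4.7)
The plan is to adapt the argument of \Cref{upperbound_replace_diff}, using \Cref{psi for products} in place of \Cref{Psi_well_defined} to relax the disjoint-alphabet hypothesis at the cost of restricting $L$ to be a product. Concretely, by \Cref{psi for products} every quotient $R$ of $\rho(I) = I^{a \leftarrow K}$ admits a decomposition $R = P_K \cdot \rho(P_I)$ with $P_K \in \mathcal{R}(K) \cup \{\{\epsilon\}\}$ and $P_I \in \mathcal{R}(I)$, so choosing one such decomposition for each $R$ defines a map $\Psi \colon \mathcal{R}(\rho(I)) \to (\mathcal{R}(K) \cup \{\{\epsilon\}\}) \times \mathcal{R}(I)$. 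As in \Cref{upperbound_replace_diff}, this $\Psi$ is injective: if $\Psi(R) = \Psi(R') = (P_K, P_I)$, then $R = P_K \cdot \rho(P_I) = R'$ by construction.

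This yields the immediate bound $\kappa(\rho(I)) \leq (\kappa(K) + 1)\kappa(I)$; to sharpen it to $\kappa(K)\kappa(I)$ I would handle three cases in direct analogy with the earlier proof. In the generic case, both $K \neq \Sigma(K)^*$ and $\llbracket I \rrbracket \neq \Sigma^*$, so $\emptyset$ is a quotient of both $K$ and $I$. \Cref{remark_non_empty} (whose proof does not rely on the alphabets being disjoint) then ensures that every non-empty quotient of $\rho(I)$ has a $\Psi$-image with both components non-empty, so these are indexed by at most $\kappa(K)(\kappa(I) - 1)$ pairs; adding the single empty quotient yields $\kappa(\rho(I)) \leq \kappa(K)(\kappa(I) - 1) + 1 \leq \kappa(K)\kappa(I)$.

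For the corner cases I would argue directly rather than through $\Psi$. If $\llbracket I \rrbracket = \Sigma^*$, then $\rho(I) = (\Sigma \setminus \{a\} \cup \Sigma(K))^*$ has a single quotient, so the bound is trivial. If $K = \Sigma(K)^*$, then $\kappa(K) = 1$ and each atom of $I$ substitutes into an atom of the same flavour (letter-atoms $(a+\epsilon)$ become the star-atom $\Sigma(K)^*$, star-atoms $B^*$ with $a \in B$ become $(B \setminus \{a\} \cup \Sigma(K))^*$, and the others are unchanged). Hence $\rho(I)$ is again a product whose suffix products are in natural correspondence with those of $I$, giving $\kappa(\rho(I)) \leq \kappa(I) = \kappa(K)\kappa(I)$. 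The step I expect to be the main obstacle is precisely this last case: the factorization argument used in \Cref{upperbound_replace_diff} exploits the disjoint-alphabet assumption and is no longer available, so one has to verify carefully that the atom-by-atom rewriting does not increase the count of distinct suffix products even when $a \in \Sigma(K)$ and substitution can reintroduce copies of the original letter into $\rho(I)$.
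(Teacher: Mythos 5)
Your proposal is correct and follows essentially the same route as the paper: the generic case via \Cref{psi for products} with the $\kappa(K)(\kappa(I)-1)+1$ count, and the two corner cases $\llbracket I \rrbracket = \Sigma^*$ and $K = \Sigma(K)^*$ handled directly. Your atom-by-atom rewriting for $K=\Sigma(K)^*$ is just a more detailed version of the paper's terse remark that one simply replaces $a$ by $\Sigma(K)$ in the atoms, so the anticipated obstacle is not a real one.
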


\begin{proof}

If we assume that $I,K$ are not $\Sigma^*$,
using \Cref{psi for products} we have that $R = P_K . \rho(P_I) = \emptyset$ iff $P_K = \emptyset$ or $P_I = \emptyset$. Thus, $\kappa(I^{a \leftarrow K}) \leq \kappa(K)(\kappa(I)-1)+1 \leq \kappa(K)\kappa(I)$. \\ 

Let us now consider the case where $I$ or $K$ is $\Sigma^*$. 

\begin{itemize}

\item If $I = \Sigma(I)^*$ then $I^{a \leftarrow K} = (\Sigma(I)\setminus \{a\} \cup \Sigma(K))^*$. Hence, $\kappa(I^{a \leftarrow K}) = 1 \leq \kappa(I)\kappa(K)$.

\item If $K = \Sigma(K)^*$ then if $I/a = \emptyset$, $\kappa(I^{a \leftarrow K}) = \kappa(I)$. Otherwise we are just replacing the $a$ in the atoms by $\Sigma(K)$. Hence $\kappa(I^{a \leftarrow K}) \leq \kappa(I) = \kappa(I)\kappa(K)$. 
\qed    
\end{itemize}
\end{proof}

\subsection{Singular substitution when $K = \dc (\Sigma(L))$}

As showed in \Cref{remark_plus}, there exists singular substitutions having quotients that cannot be written as products of quotients. To understand their apparition, we can study what happens when $K = \dc (\Sigma(L))$.

\begin{lemma}\label{lemme_R_rho}
	
	Let $I = \alpha_0 I'$ be a product that does not contain atoms of the type $B^*$ with $a \in B$, and $\rho$ the substitution of $a$ by $K = \Sigma(I) \cup \epsilon$. 
	 
	Then $\mathcal{R}(I) = \{ \alpha_0 I' \} \cup \mathcal{R}(I')$.
	And $\mathcal{R}(\rho(I)) = \{ \rho(\alpha_0) \rho(I') \} \cup \mathcal{R}(\rho(I'))$. 

\end{lemma}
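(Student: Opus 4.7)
The plan is to prove the two equalities separately. The first one is essentially a corollary of Lemma~\ref{quotients of products}, while the second one requires a one-step quotient computation via the substitution rule of Lemma~\ref{calcul_residu_subst_down}, followed by a reachability argument.

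For the first equality $\mathcal{R}(I) = \{\alpha_0 I'\} \cup \mathcal{R}(I')$, I would apply Lemma~\ref{quotients of products}: the quotients of a product are exactly its suffix products (plus $\emptyset$ when $\llbracket I\rrbracket \neq \Sigma^*$). The suffix products of $I = \alpha_0 I'$ consist of $I$ itself together with the suffix products of $I'$, and a second application of Lemma~\ref{quotients of products} to $I'$ identifies the latter set with $\mathcal{R}(I')$, yielding the claimed equality.

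For the second equality I would proceed by double inclusion. To show $\mathcal{R}(\rho(I)) \subseteq \{\rho(I)\} \cup \mathcal{R}(\rho(I'))$ it suffices to check that for every letter $b$, the one-step quotient $\rho(I)/b$ lies in $\{\rho(I)\} \cup \mathcal{R}(\rho(I'))$, since that set is then closed under further quotienting. Using Lemma~\ref{calcul_residu_subst_down} to unfold $\rho(I)/b$, I would split on the form of $\alpha_0$ (the three cases allowed by the hypothesis: $(a+\epsilon)$, $(c+\epsilon)$ with $c\neq a$, and $B^*$ with $a\notin B$). The key computational ingredient is that $K = \Sigma(I)\cup\{\epsilon\}$ has only the trivial quotients $\{\epsilon\}$ (for $b\in\Sigma(I)$) and $\emptyset$ (otherwise), which collapses most sums appearing in the formula. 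Combined with the inclusion $\rho(I'/b) \subseteq \rho(I')$, the letter-atom cases give $\rho(I)/b = \rho(I')$ for an appropriate letter and $\rho(I'/b) = \rho(I')/b$ otherwise, while the star case gives $\rho(I)/b = \rho(I)$ for $b \in B$ and $\rho(I'/b)$ for $b \notin B$.

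For the reverse inclusion $\mathcal{R}(\rho(I')) \subseteq \mathcal{R}(\rho(I))$, I would exhibit, for every $x$, a word $y$ with $\rho(I)/y = \rho(I')/x$. The crucial intermediate step is to realize $\rho(I')$ itself as some $\rho(I)/w$; in the letter-atom cases the computation above gives $\rho(I)/a = \rho(I')$ (or $\rho(I)/c = \rho(I')$), after which one may take $y = wx$. I expect the main obstacle to arise in the $B^*$ case, where no single letter sends $\rho(I)$ onto $\rho(I')$: here one must combine a prefix of letters in $B$ (which keeps us at $\rho(I)$) with a carefully chosen transition letter, and the hypothesis that $I$ contains no star-atom $B^*$ with $a\in B$ has to be exploited so that the substitution never reintroduces $a$ as a loop and the reachability argument works uniformly. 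Turning this informal reasoning into a clean induction on the length of $I$ is the technically delicate step.
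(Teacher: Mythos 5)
Your first equality and the forward half of the second are fine and essentially follow the paper: the paper proves the second equality by writing $\mathcal{R}(\rho(I))=\mathcal{R}(\rho(\alpha_0))\,\rho(I')\cup\mathcal{R}(\rho(I'))$ and then collapsing $\mathcal{R}(\rho(\alpha_0))$ using the special shape of $K$, whereas you compute the one-step quotients $\rho(I)/b$ via \Cref{calcul_residu_subst_down} and use closure under further quotienting; both routes give the inclusion $\mathcal{R}(\rho(I))\subseteq\{\rho(\alpha_0)\rho(I')\}\cup\mathcal{R}(\rho(I'))$. One small correction in your case analysis: what you obtain in the "otherwise'' cases is $\rho(I)/b=\rho(I'/b)+(K/b)\cdot\rho(I'/a)=\rho(I')/b$, not $\rho(I'/b)$ alone (substitution and quotient do not commute, as the paper itself remarks); the conclusion you need, membership in $\mathcal{R}(\rho(I'))$, is exactly the identification with $\rho(I')/b$.

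The obstacle you flag in the reverse inclusion for $\alpha_0=B^*$ (with $a\notin B$) is not merely delicate: it cannot be overcome, because $\rho(I')$ need not be a quotient of $\rho(I)$ at all. Quotienting by a letter of $B$ leaves $\rho(I)$ unchanged, and quotienting by any other letter lands directly on $\rho(I')/b$, bypassing $\rho(I')$. Concretely, take $I=\mathtt{b}^*(\mathtt{c}+\epsilon)(\mathtt{a}+\epsilon)$, so $K=(\mathtt{a}+\mathtt{b}+\mathtt{c}+\epsilon)$: then $\mathcal{R}(\rho(I))=\{\rho(I),\,(\mathtt{a}+\mathtt{b}+\mathtt{c}+\epsilon),\,\{\epsilon\},\,\emptyset\}$, which does not contain $\rho(I')=(\mathtt{c}+\epsilon)(\mathtt{a}+\mathtt{b}+\mathtt{c}+\epsilon)$; likewise $I'=(\mathtt{c}+\epsilon)(\mathtt{a}+\epsilon)$ is not a quotient of $I$, so both stated equalities fail for star-atom $\alpha_0$ (they do hold when $\alpha_0$ is a letter-atom, where your plan $\rho(I)/a=\rho(I')$ or $\rho(I)/c=\rho(I')$ works). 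The root cause is the "exactly'' in \Cref{quotients of products}, which you and the paper both invoke: its proof only establishes that quotients are among the suffix products, and a suffix product following a star-atom need not be reachable. The paper's own argument papers over the same point with the unjustified assertion that $\mathcal{R}(\rho(I'))\subseteq\mathcal{R}(\rho(I))$ ``as $\rho$ preserves downward closedness'', so your instinct about where the difficulty sits is accurate; but the honest fix is to weaken both claims to the inclusions $\mathcal{R}(I)\subseteq\{\alpha_0 I'\}\cup\mathcal{R}(I')$ and $\mathcal{R}(\rho(I))\subseteq\{\rho(\alpha_0)\rho(I')\}\cup\mathcal{R}(\rho(I'))$ (which your forward argument proves, and which suffice for the counting used downstream, e.g.\ in \Cref{prop_commmmmute} read as an upper bound), rather than to look for a word realizing $\rho(I')$ as a quotient of $\rho(I)$.
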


\begin{proof}

Let us write $I = \prod_{0 \leq i \leq n} \alpha_i$. Then $\mathcal{R}(I) = \{ \prod_{j \leq i \leq n} \alpha_i  | 0 \leq j \leq n  \}$ as the quotients of a product are its suffix products as proved in \Cref{quotients of products}. Thus, $\mathcal{R}(I) = \{ \alpha_0 I' \} \cup \mathcal{R}(I') $. \\

Let us abuse notations and denote by $\mathcal{R}(L)\cdot K $ the set $\{ P \cdot K \ | \ P \in \mathcal{R}(L) \} $. Let us now prove $\mathcal{R}(\rho(I)) =  \mathcal{R}(\rho(\alpha_0)) \rho(I') \cup \mathcal{R}(\rho(I'))$. Let us proceed by double inclusion. 

Let $R$ be a quotient in $\mathcal{R}(\rho(I))$, then $R = \rho(I)/x$ with $x \in \Sigma^*$. Let $x'$ be the longest prefix of $x$ in $\rho(\alpha_0)$, such that $x = x' y$. If $x \neq x'$, $R = \rho(I')/y \in \mathcal{R}(I')$. Otherwise $R = \rho(\alpha_0)/x \ \rho(I') \in \mathcal{R}(\rho(\alpha_0)) \rho(I')$. \\

For the other direction, let $R \in \mathcal{R}(\rho(\alpha_0)) \rho(I')$, we have $R = \rho(\alpha_0)/x \  \rho(I')$. Using the proof above, $R = \rho(I)/x \in \mathcal{R}(\rho(I))$. Then, as $\rho$ preserves  downward closedness, $\mathcal{R}(\rho(I')) \subseteq \mathcal{R}(\rho(I))$. \\ 

Now, let us prove that in our case, $\mathcal{R}(\rho(\alpha_0)) \rho(I') \cup \mathcal{R}(\rho(I')) = \rho(\alpha_0) \rho(I') \cup \mathcal{R}(\rho(I'))$. 

\begin{itemize}

\item If $\alpha_0 = (b+\epsilon)$ with $b \neq a $, then $\mathcal{R}(\rho(\alpha_0)) \rho(I') \cup \mathcal{R}(\rho(I')) = \{ \emptyset , \rho(I'), (b+\epsilon) \rho(I')\} \cup \mathcal{R}(\rho(I')) = \{ (b+\epsilon) \rho(I')\} \cup \mathcal{R}(\rho(I')) = \rho(\alpha_0) \rho(I') \cup \mathcal{R}(\rho(I'))$ as $\emptyset$ and $ \rho(I')$ are already in $\mathcal{R}(\rho(I'))$ if $\rho(I')\neq \Sigma^*$, which is verified as we do not have atoms of the form $B^*$ with $a \in B$. \\

\item If $\alpha_0 = (a+\epsilon)$. Then $\mathcal{R}(\rho(\alpha_0)) \rho(I') \cup \mathcal{R}(\rho(I')) = \{ \emptyset , \rho(I'), K\rho(I')\} \cup \mathcal{R}(\rho(I')) = \{ K \rho(I')\} \cup \mathcal{R}(\rho(I')) = \rho(\alpha_0) \rho(I') \cup \mathcal{R}(\rho(I'))$. 

\item If $\alpha_0 = B^*$ with $a \notin B $, then $\mathcal{R}(\rho(\alpha_0)) \rho(I') \cup \mathcal{R}(\rho(I')) = \{ \emptyset , B^* \rho(I')\} \cup \mathcal{R}(\rho(I')) = \{ B^* \rho(I')\} \cup \mathcal{R}(\rho(I')) = \rho(\alpha_0) \rho(I') \cup \mathcal{R}(\rho(I'))$. 
\qed    
\end{itemize}
\end{proof}

\begin{proposition}\label{prop_commmmmute}

Let $I$ be a product that does not have atoms $B^*$ where $a \in B$, and $K = \Sigma(I) \cup \{ \epsilon \} $, then $\mathcal{R}(\rho(I)) = \rho(\mathcal{R}(I))$, meaning that $\rho$ and $\mathcal{R}$ commute. 

\end{proposition}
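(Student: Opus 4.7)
The plan is to prove the statement by induction on the number of atoms of $I$, using \Cref{lemme_R_rho} as the main workhorse and exploiting the fact that substitution is a semiring homomorphism with respect to concatenation and union.

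For the base case, I take $I$ to be the empty product, so $\llbracket I \rrbracket = \{\epsilon\}$. Then $\mathcal{R}(I) = \{\{\epsilon\}, \emptyset\}$, and $\rho(I) = \{\epsilon\}$ as well, so $\mathcal{R}(\rho(I))$ also equals $\{\{\epsilon\}, \emptyset\}$. Since $\rho(\{\epsilon\}) = \{\epsilon\}$ and $\rho(\emptyset) = \emptyset$, both sides coincide. The inductive step is where \Cref{lemme_R_rho} does most of the work. Writing $I = \alpha_0 I'$ (and observing that $I'$ still contains no $B^*$ atoms with $a \in B$, so the induction hypothesis applies), the lemma yields
\begin{align*}
\mathcal{R}(I) &= \{\alpha_0 I'\} \cup \mathcal{R}(I')\:,\\
\mathcal{R}(\rho(I)) &= \{\rho(\alpha_0)\rho(I')\} \cup \mathcal{R}(\rho(I'))\:.
\end{align*}
Applying the induction hypothesis $\mathcal{R}(\rho(I')) = \rho(\mathcal{R}(I'))$ and the identity $\rho(\alpha_0)\rho(I') = \rho(\alpha_0 I')$ (which holds because $\rho$ distributes over concatenation by \Cref{def_substitution}), I can chain:
\begin{align*}
\mathcal{R}(\rho(I)) &= \{\rho(\alpha_0 I')\} \cup \rho(\mathcal{R}(I')) \\
&= \rho\bigl(\{\alpha_0 I'\}\bigr) \cup \rho\bigl(\mathcal{R}(I')\bigr) \\
&= \rho\bigl(\{\alpha_0 I'\} \cup \mathcal{R}(I')\bigr) \\
&= \rho(\mathcal{R}(I))\:,
\end{align*}
where the next-to-last step uses that applying $\rho$ elementwise to a set of languages commutes with set union. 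This closes the induction.

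The main obstacle is really just making sure \Cref{lemme_R_rho} is applicable at each step of the recursion, which reduces to verifying that the hypothesis "no atom of the form $B^*$ with $a \in B$" is preserved when passing from $I$ to its suffix $I'$ — this is immediate. A secondary concern is keeping track of the empty quotient $\emptyset$: when $\llbracket I \rrbracket \neq \Sigma^*$ (which is guaranteed by the no-$B^*$-containing-$a$ restriction, unless every atom is a full star, but in any case) one has $\emptyset \in \mathcal{R}(I)$ and $\rho(\emptyset) = \emptyset \in \mathcal{R}(\rho(I))$, so it appears on both sides and causes no trouble. No additional case analysis on the shape of $\alpha_0$ is needed beyond what \Cref{lemme_R_rho} already handled.
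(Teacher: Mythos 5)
Your proof is correct and follows essentially the same route as the paper: induction on the number of atoms, with \Cref{lemme_R_rho} supplying both decompositions of the quotient sets and the homomorphism properties of $\rho$ (over concatenation and union) closing the chain. The only differences are cosmetic — you treat the empty-product base case a bit more carefully, and you omit the paper's (redundant) preliminary verification for single atoms.
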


\begin{proof}
As $K$ and $I$ are based on the same alphabet, let us write $\Sigma = \Sigma(I)=\Sigma(K)$. 
First let us show that the proposition is true for atoms $\alpha$. 
\begin{itemize}

\item Let $\alpha = (b+\epsilon)$, when $b \neq a$ or $\alpha = B^*$ when $a \notin B$, then as $\rho$ is the identity on those atoms, the claim ensues. To prove this, we can observe that if an atom does not contain the letter $a$ then its quotients will not contain it either, thus $\rho$ will be the identity on the quotients also.

\item Finally, when $\alpha = (a + \epsilon)$, $\mathcal{R}(\rho(a+\epsilon)) = \mathcal{R}(K) = \{ K,\epsilon,\emptyset \} $. 

And $\rho(\mathcal{R}(a+\epsilon)) = \rho(\{a+\epsilon, \epsilon , \emptyset \}) = \{ K,\epsilon,\emptyset \} $.

\end{itemize}
Hence, the commutation works on those atoms.  \\

Now let us consider  products and prove the proposition by induction on the number of atoms in a product : 

If $I$ is the empty product then $\rho(\mathcal{R}(I)) = \emptyset = \mathcal{R}(\rho(I))$.

If $I = \alpha I'$, then we have : 

\begin{align*}
\mathcal{R}(\rho(\alpha I')) &=  \rho(\alpha) \rho(I')  \cup \mathcal{R}(\rho(I')) & \mbox{ using \Cref{lemme_R_rho} } \\
							&= \rho(\alpha I' ) \cup \mathcal{R}(\rho(I')) &  \mbox{ using the definition of } \ \rho \\
							&= \rho(\alpha I' ) \cup \rho(\mathcal{R}(I')) & \mbox{ by induction hypothesis } \\ 
							&= \rho( \alpha I' \cup \mathcal{R}(I') ) & \mbox{ using the properties of } \ \rho \\
							&= \rho(\mathcal{R}(\alpha I)) & \mbox{ using \Cref{lemme_R_rho} \qed}
\end{align*}
\end{proof}

\begin{remark}

The equality $\rho(I/x) = \rho(I)/x$ is false in general, the equality from \Cref{prop_commmmmute} only happens when considering all the quotients together. For example take $I = \dc (\mathtt{abc})$. We have $\rho(I/b) = \dc(c)$ and $\rho(I)/b = \dc(bc)$. However 
\[
\mathcal{R}(\rho(I)) = \{ \dc (\mathtt{(a+b+c)bc}),\dc (\mathtt{bc}) , \dc (\mathtt{c}) , \epsilon, \emptyset \} = \{ \rho(\dc ( \mathtt{abc})), \dc(\mathtt{bc}) , \dc (\mathtt{c}) , \epsilon, \emptyset \} = \rho(\mathcal{R}(I)).\] 
\end{remark}

\begin{remark}	
Even if the commutation is not true in the general case, what we really are interested in is the cardinality. Hence, it would be interesting to know if the following inequality holds : let $L$ be a downward closed language, and $K = \Sigma(L) \cup \{ \epsilon \} $, then $|\mathcal{R}(\rho(L))| \leq |\rho(\mathcal{R}(L))|$. 

\end{remark}

In the case where $L$ is a product and $K = \dc(\Sigma(L))$ we are under the assumptions of \Cref{thm_productstate}, but we can show that we do not increase the state complexity by another way that follow what we did in this section.

\begin{proposition}

Let $w \in \Sigma^*$, then $\kappa\bigl(\dc(w)^{a \leftarrow \Sigma(w)}\bigr) \leq \kappa(\dc(w))$.

\end{proposition}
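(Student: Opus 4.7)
The plan is to reduce this proposition directly to \Cref{prop_commmmmute}, which gives $\mathcal{R}(\rho(I)) = \rho(\mathcal{R}(I))$ whenever $I$ is a product with no star-atom containing~$a$. Since $\dc(w)$ can be written as the product of letter-atoms $I = (w_1+\epsilon)(w_2+\epsilon)\cdots(w_{|w|}+\epsilon)$, the product $I$ contains no star-atoms at all, so in particular no star-atom $B^*$ with $a \in B$. Hence the hypothesis of \Cref{prop_commmmmute} is met with $K = \Sigma(w) \cup \{\epsilon\} = \dc(\Sigma(w))$.

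Applying the proposition yields the set equality $\mathcal{R}(\rho(\dc(w))) = \rho(\mathcal{R}(\dc(w)))$. Passing to cardinalities, observe that the substitution $\rho$, seen as a function on languages, induces a map $\mathcal{R}(\dc(w)) \to \rho(\mathcal{R}(\dc(w)))$ that is surjective by definition of the image. Therefore
\[
\kappa\bigl(\rho(\dc(w))\bigr) \;=\; |\mathcal{R}(\rho(\dc(w)))| \;=\; |\rho(\mathcal{R}(\dc(w)))| \;\leq\; |\mathcal{R}(\dc(w))| \;=\; \kappa(\dc(w)),
\]
which is exactly the claim.

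The only thing to be careful with is that the map $R \mapsto \rho(R)$ on quotients may fail to be injective: two distinct quotients of $\dc(w)$ could in principle collapse after substitution. But this can only \emph{decrease} the cardinality, which is fine for proving the inequality. So there is no real obstacle here; the whole content of the proposition is already packaged inside \Cref{prop_commmmmute}, and the proof amounts to observing that the hypothesis applies trivially to the product representation of $\dc(w)$.
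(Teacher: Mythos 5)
Your proof is correct, but it takes a different route from the paper's. You treat the statement as an immediate corollary of \Cref{prop_commmmmute}: since $\dc(w)$ is the product of letter-atoms $(w_1+\epsilon)(w_2+\epsilon)\cdots(w_{|w|}+\epsilon)$, it contains no star-atom at all, so the commutation $\mathcal{R}(\rho(\dc(w)))=\rho(\mathcal{R}(\dc(w)))$ applies with $K=\Sigma(w)\cup\{\epsilon\}$, and the bound follows because the right-hand side is the image of $\mathcal{R}(\dc(w))$ under $P\mapsto\rho(P)$, hence of cardinality at most $\kappa(\dc(w))$; as you note, only the inclusion $\mathcal{R}(\rho(\dc(w)))\subseteq\rho(\mathcal{R}(\dc(w)))$ is needed, and possible non-injectivity of the map only helps. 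The paper instead reproves exactly that inclusion from scratch in this special case: it inducts on the quotienting word, uses \Cref{calcul_residu_subst_down} to write $R=\rho(P'/b)+K/b\cdot\rho(P'/a)$, and exploits the fact that the quotients of $\dc(w)$ are the $\dc(v)$ for suffixes $v$ of $w$, hence totally ordered, so one of the two terms absorbs the other and $R=\rho(P)$ for an explicit quotient $P$. Your version buys brevity and makes explicit that the proposition is a specialization of the general commutation result (and it generalizes verbatim to any product with no star-atom containing $a$); the paper's version is self-contained and exhibits concretely which quotient realizes $R$. One small point worth a sentence in your write-up: the statement substitutes $a$ by $\Sigma(w)$ whereas \Cref{prop_commmmmute} takes $K=\Sigma(w)\cup\{\epsilon\}$; since $\dc(w)$ is downward closed the two substitutions yield the same language (deleting an $a$ is simulated by first passing to the subword of $\dc(w)$ with that occurrence removed), so the identification is harmless, and it is also the reading the paper's own proof adopts via \Cref{calcul_residu_subst_down}.
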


\begin{proof}

Let us show that every quotient $R$ of $\dc(w)^{a \leftarrow \Sigma(w)}$ can be written as $\rho(P)$ with $P$ quotient of $\dc(w)$.\\

Let $R$ be a quotient of $\dc(w)^{a \leftarrow \Sigma(w)}$, there exists $x \in \Sigma^*$ such that $R = \dc(w)^{a \leftarrow \Sigma(w)}/x$. Let us proceed by induction on the length of $x$. If $x = \epsilon$, $R = \dc(w)^{a \leftarrow \Sigma(w)} = \rho(\dc(w)) $.\\

If $x = x'b$ then let $R' = \dc(w)^{a \leftarrow \Sigma(w)}/x'$. We have that $R' = \rho(P')$ by induction. Thus $R = R'/b = \rho(P')/b$. Using \Cref{calcul_residu_subst_down}, we get that $R = \rho(P'/b) + \epsilon . \rho(P'/a)$. However as $P'$ is a quotient of $\dc(w)$ it can be written as $P' = \dc(v)$ with $v$ a suffix of $w$. Thus if $b$ appears before $a$ in $v$ then $\rho(P'/a) \subseteq \rho(P'/b)$, else $\rho(P'/b) \subseteq \rho(P'/a)$. Thus there exists $P$ a quotient of $\dc(w)$ such that $R = \rho(P)$. \\ 

Finally, we also have to consider the empty quotient, $\emptyset = \rho(\emptyset)$. \\

Hence $\kappa(\dc(w)^{a \leftarrow \Sigma(w)}) \leq \kappa(\dc(w)) $.
\qed    
\end{proof}

\section{Conclusion}

In this study we considered the state complexity of the $n$-th root and substitution operations when restricted to upward closed and downward closed languages. These questions have not yet (to the best of our knowledge) been addressed. This can be seen as trying to better understand these two subregular classes of languages. We found that in the case of upward closed languages, the root operations still have an exponential lower bound. We did not conclude for downward closed languages, but conjecture that if the language is made of subwords of a word, then the state complexity will be quadratic for the square root. \\

We also studied the state complexity of the substitution in the case of downward closed languages. We showed an exponential upper bound in the general case, and when restricting to singular substitutions, we proved a quadratic upper bound when the two languages are on disjoint alphabets and conjecture a quadratic bound in the general case of two downward closed languages. Furthermore, when the language we make the substitution on is directed, we showed a quadratic upper bound. It might be interesting in future studies to answer the conjecture on the singular substitution of downward closed languages and define/study the substitution for upward closed languages.\\ 

This works takes part in the more general goal of better understanding upward closed and downward closed languages
that appear in practical applications such as automatic verification based on well-quasi-orderings. In this way, it is important to understand better how these classes of languages are structured and how they behave to come up with better data structures allowing to manipulate them more efficiently. \\

I warmly thank Philippe Schnoebelen again for his advice, feed-backs, suggestions and help in general.

\bibliographystyle{splncs04} 
\bibliography{sample} 

\begin{thebibliography}{10}
\providecommand{\url}[1]{\texttt{#1}}
\providecommand{\urlprefix}{URL }
\providecommand{\doi}[1]{https://doi.org/#1}

\bibitem{Abdulla2000AlgorithmicAO}
Abdulla, P.A., Cerāns, K., Jonsson, B., Tsay, Y.K.: Algorithmic analysis of
  programs with well quasi-ordered domains. Information \& Computation
  \textbf{160},  109--127 (2000)

\bibitem{Abdulla2004}
Abdulla, P.A., Collomb-Annichini, A., Bouajjani, A., Jonsson, B.: Using forward
  reachability analysis for verification of lossy channel systems. Formal
  Methods in System Design  \textbf{25}(1),  39--65 (2004)

\bibitem{bachmeier}
Bachmeier, G., Luttenberger, M., Schlund, M.: Finite automata for the sub- and
  superword closure of {CFLs}: Descriptional and computational complexity. In:
  Proc.\ 9th Int.\ Conf.\ Language and Automata Theory and Applications (LATA
  2015). pp. 473--485. Springer (2015)

\bibitem{bassino2010}
Bassino, F., Giambruno, L., Nicaud, C.: Complexity of operations on cofinite
  languages. In: Proc.\ 9th Latin American Symp.\ Theoretical Informatics
  (LATIN 2010). Lecture Notes in Computer Science, vol.~6034, pp. 222--233.
  Springer (2010)

\bibitem{Bertrand2013}
Bertrand, N., Schnoebelen, {\relax Ph}.: Computable fixpoints in
  well-structured symbolic model checking. Formal Methods in System Design
  \textbf{43}(2),  233--267 (2013)

\bibitem{brzozowski2014}
Brzozowski, J., Li, B.: Syntactic complexity of {R}- and {J}-trivial regular
  languages. Int.\ J.\ Foundations of Computer Science  \textbf{25}(07),
  807--821 (2014)

\bibitem{JALC-2010-071}
Brzozowski, J.: Quotient complexity of regular languages. Journal of Automata,
  Languages and Combinatorics  \textbf{15}(1--2),  71--89 (2010)

\bibitem{browzo2010}
Brzozowski, J., Jir{\'a}skov{\'a}, G., Li, B.: Quotient complexity of ideal
  languages. In: Proc.\ 9th Latin American Symp.\ Theoretical Informatics
  (LATIN 2010). pp. 208--221. Springer (2010)

\bibitem{quotient_closed}
Brzozowski, J., Jir{\'a}skov{\'a}, G., Zou, C.: Quotient complexity of closed
  languages. Theory of Computing Systems  \textbf{54}(2),  277--292 (2014)

\bibitem{ideals}
Brzozowski, J., Jirásková, G., Li, B.: Quotient complexity of ideal
  languages. Theoretical Computer Science  \textbf{470},  36--52 (2013)

\bibitem{brzozowski2012c}
Brzozowski, J.A., Li, B., Liu, D.: Syntactic complexities of six classes of
  star-free languages. Journal of Automata, Languages and Combinatorics
  \textbf{17}(2--4),  83--105 (2012)

\bibitem{state_comp_finite_lang}
C{\^a}mpeanu, C., Culik, K., Salomaa, K., Yu, S.: State complexity of basic
  operations on finite languages. In: Proc.\ 4th Int.\ Workshop Implementing
  Automata (WIA '99). Lecture Notes in Computer Science, vol.~2214, pp. 60--70.
  Springer (2001)

\bibitem{finkel}
Finkel, A., Schnoebelen, {\relax Ph}.: Well-structured transition systems
  everywhere! Theoretical Computer Science  \textbf{256}(1–2),  63–92
  (2001)

\bibitem{DBLP:journals/corr/GaoMRY15}
Gao, Y., Moreira, N., Reis, R., Yu, S.: A survey on operational state
  complexity. Journal of Automata, Languages and Combinatorics  \textbf{21}(4),
   251--310 (2017)

\bibitem{wqo}
Goubault-Larrecq, J., Halfon, S., Karandikar, P., {Narayan Kumar}, K.,
  Schnoebelen, {\relax Ph}.: The ideal approach to computing closed subsets in
  well-quasi-orderings. In: Well Quasi-Orders in Computation, Logic, Language
  and Reasoning, Trends in Logic, vol.~53, chap.~3, pp. 55--105. Springer
  (2020)

\bibitem{gruber2007b}
Gruber, H., Holzer, M., Kutrib, M.: The size of {Higman}-{Haines} sets.
  Theoretical Computer Science  \textbf{387}(2),  167--176 (2007)

\bibitem{gruber2009}
Gruber, H., Holzer, M., Kutrib, M.: More on the size of {Higman-Haines} sets:
  Effective constructions. Fundamenta Informaticae  \textbf{91}(1),  105--121
  (2009)

\bibitem{heam}
H\'eam, P.C.: On shuffle ideals. RAIRO - Theoretical Informatics and
  Applications - Informatique Th\'eorique et Applications  \textbf{36}(4),
  359--384 (2002)

\bibitem{heinz2009}
Heinz, J.: On the role of locality in learning stress patterns. Phonology
  \textbf{26}(2),  303--351 (2009)

\bibitem{hospod}
Hospod{\'a}r, M.: Descriptional complexity of power and positive closure on
  convex languages. In: Proc.\ 24th Int.\ Conf.\ Implementation and Application
  of Automata (CIAA 2019). pp. 158--170. Springer (2019)

\bibitem{philippe}
Karandikar, P., Niewerth, M., Schnoebelen, {\relax Ph}.: On the state
  complexity of closures and interiors of regular languages with subwords and
  superwords. Theoretical Computer Science  \textbf{610},  91--107 (2016)

\bibitem{krawetz2005}
Krawetz, B., Lawrence, J., Shallit, J.: State complexity and the monoid of
  transformations of a finite set. Int.\ J.\ Foundations of Computer Science
  \textbf{16}(3),  547--563 (2005)

\bibitem{scattered}
Okhotin, A.: On the state complexity of scattered substrings and superstrings.
  Fundamenta Informaticae  \textbf{99}(3),  325–338 (2010)

\bibitem{sakarovitch83}
Sakarovitch, J., Simon, I.: Subwords. In: Lothaire, M. (ed.) Combinatorics on
  Words, Encyclopedia of Mathematics and Its Applications, vol.~17, chap.~6,
  pp. 105--142. Cambridge Univ.\ Press (1983)

\end{thebibliography}

\newpage
\appendix

\section{Missing proofs on root operators}

An extremely useful tool to obtain lower bounds on the state complexity of languages is the use of dividing sets. 

\begin{definition}

	Let $L \subseteq \Sigma^*$, then $\mathcal{F} = \{ x_1,...,x_n \}$ is a dividing set for $L$ if and only if for all $\ x_i \neq x_j$,  there exists $z_{i,j} \in \Sigma^*$ such that $\ x_iz_{i,j}\in L \ $ and $ \ x_jz_{i,j} \notin L $ or $\ x_iz_{i,j}\notin L \ $ and $ \ x_jz_{i,j} \in L $.

\end{definition}

The following fact explains why those dividing sets are useful when studying the state complexity of a language. 

\begin{fact}

Let $\mathcal{F}$ be a dividing set of $L$. Then any DFA recognizing $L$ must have at least $|\mathcal{F}|$ states.

\end{fact}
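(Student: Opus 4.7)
The plan is to give a standard Myhill--Nerode style pigeonhole argument. Fix an arbitrary DFA $\mathcal{A} = (Q, \Sigma, \delta, q_0, F)$ recognizing $L$ and consider the map $\phi : \mathcal{F} \to Q$ sending each $x_i \in \mathcal{F}$ to the state $\delta^*(q_0, x_i)$ reached after reading $x_i$ from the initial state. I will show that $\phi$ is injective, which immediately yields $|\mathcal{F}| = |\phi(\mathcal{F})| \leq |Q|$, proving the bound.

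For the injectivity, suppose for contradiction that $\phi(x_i) = \phi(x_j)$ for some $x_i \neq x_j$ in $\mathcal{F}$. Then for every word $z \in \Sigma^*$ we have $\delta^*(q_0, x_i z) = \delta^*(\phi(x_i), z) = \delta^*(\phi(x_j), z) = \delta^*(q_0, x_j z)$, so $x_i z$ and $x_j z$ are either both accepted or both rejected by $\mathcal{A}$. Since $\mathcal{A}$ recognizes $L$, this means $x_i z \in L \iff x_j z \in L$ for every $z \in \Sigma^*$. This directly contradicts the defining property of a dividing set, which provides some $z_{i,j}$ separating $x_i$ and $x_j$ with respect to $L$.

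There is essentially no obstacle here: the argument is the classical proof that pairwise distinguishable words force distinct states, and the definition of a dividing set is precisely the notion of pairwise distinguishability. The only point worth stating explicitly is that the argument applies to any DFA recognizing $L$, not just the canonical (minimal) one, so the lower bound is genuine on the number of states, i.e.\ $sc(L) \geq |\mathcal{F}|$, matching the remark already made earlier in the paper that $\kappa(L) \geq |\mathcal{F}|$.
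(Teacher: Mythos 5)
Your proof is correct: it is the standard Myhill--Nerode distinguishability argument (map each element of $\mathcal{F}$ to the state it reaches; two elements reaching the same state would agree on all continuations, contradicting the existence of the separating word $z_{i,j}$), which is exactly the argument the paper leaves implicit, since it states this Fact without proof. Your closing remark that the bound holds for an arbitrary DFA, hence in particular $sc(L)=\kappa(L)\geq|\mathcal{F}|$, matches the remark made after the definition of dividing sets in the main text.
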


\begin{proposition}\label{quotient_diff_alph_sqrt}

Let $v,w$ be two words with different alphabets: $\Sigma(v)\neq\Sigma(w)$. Then $\sqrt{\upc(V_n)}/v \neq \sqrt{\upc(V_n)}/w$. 

\end{proposition}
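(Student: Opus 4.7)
The plan is to exhibit, by symmetry, a letter $a_i\in\Sigma(v)\setminus\Sigma(w)$ (drawn from the support $\{a_1,\ldots,a_n\}$ of $V_n$; letters outside this set cannot help or hurt the subword relation with $V_n$, so the nontrivial case is when the alphabets already differ on this set) and to construct a single word $z$ that lies in one quotient but not the other. The natural candidate, inspired by the proof of \Cref{prop-2-n-quotients}, is to take $z$ to be $V_n$ with the letter $a_i$ deleted, i.e.
\[
z \;=\; a_1 \cdots a_{i-1}\, a_{i+1} \cdots a_n .
\]
I would then verify two claims: $vz\in \sqrt{\upc(V_n)}$, which puts $z$ in $\sqrt{\upc(V_n)}/v$; and $wz\notin \sqrt{\upc(V_n)}$, which keeps $z$ out of $\sqrt{\upc(V_n)}/w$.

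For the first claim, the essential point is that $v$ has at least one occurrence of $a_i$, so write $v = v'\,a_i\,v''$ for some $v',v''$. Looking at $vzvz$, one can read off a copy of $V_n = a_1\cdots a_n$ as a subword in three blocks: the prefix $a_1\cdots a_{i-1}$ of $V_n$ sits in the first copy of $z$, the middle letter $a_i$ sits inside the second copy of $v$, and the suffix $a_{i+1}\cdots a_n$ sits in the second copy of $z$. Hence $V_n \subword (vz)^2$, so $vz \in \sqrt{\upc(V_n)}$.

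For the second claim, the key observation is much simpler: by hypothesis $a_i\notin\Sigma(w)$, and by construction $a_i\notin\Sigma(z)$, so $a_i$ does not occur at all in $wzwz$. Since $a_i$ is a letter of $V_n$, we cannot have $V_n\subword (wz)^2$, so $wz\notin \sqrt{\upc(V_n)}$. Combining the two claims yields $\sqrt{\upc(V_n)}/v \neq \sqrt{\upc(V_n)}/w$, as required.

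The proof is essentially a one-liner once the witness $z$ is chosen well; there is no real obstacle, only a mild care in the ``WLOG'' step: if the asymmetry goes the other way (a letter of $\Sigma(w)$ missing from $\Sigma(v)$), simply swap the roles of $v$ and $w$ in the construction above. The genuinely substantive ingredient is the alignment observation that $a_i$ appears in the second copy of $v$ rather than the first, which is what makes one occurrence of $a_i$ in $v$ enough to rebuild $V_n$ across the $vz\,vz$ boundary.
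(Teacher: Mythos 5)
Your proof is correct and follows essentially the same route as the paper: pick (WLOG) a letter of $V_n$ occurring in $v$ but not in $w$, take $z$ to be $V_n$ with that letter deleted, and observe that $V_n\subword (vz)^2$ while $V_n\not\subword (wz)^2$, so $z$ separates the two quotients. Your explicit alignment of $a_1\cdots a_{i-1}$, $a_i$, $a_{i+1}\cdots a_n$ across $vz\,vz$ just spells out the embedding the paper asserts, and your tacit restriction to letters from $\{a_1,\ldots,a_n\}$ matches the paper's own implicit assumption that $v,w$ are words over the alphabet of $V_n$.
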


\begin{proof}

    As $v$ and $w$ are based on different alphabets, there is a letter of $V_n$ that is either in $v$ and not in $w$ or in $w$ and not in $v$. Without loss of generality, let us assume the letter $a$ is in $v$ but not in $w$.

    Then let $x$ be $V_n$ with $a$ removed. 
    Thus, $V_n$ is a subword of $(vx)^k$ but not of $(wx)^k$. Thus, $vx \in \sqrt[k]{\upc (V_n)}$ but $wx \notin \sqrt[k]{\upc (V_n)}$. Hence, $\sqrt[k]{\upc (V_n)}/v \neq \sqrt[k]{\upc (V_n)}/w$ since one contains $x$ but not the other.
\qed        
\end{proof}

\begin{proposition}\label{better_dividing_set_sqrt_down}

	Let $\mathcal{F}_n$ be a dividing set of maximal size for $\sqrt{\upc(V_n)}$. 
	
	Then $\mathcal{F}_n \bigcup \mathcal{F}_n a_{n+1} \bigcup a_{n+1}(\mathcal{F}_{n-1} \setminus \{ V_{n-1} \}) a_n$ is a dividing set for $\sqrt{\upc(V_{n+1})}$.

\end{proposition}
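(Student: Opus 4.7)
I will prove that the three-way union is a dividing set for $L_{n+1} := \sqrt{\upc(V_{n+1})}$ by exhibiting, for every pair of distinct words $x \neq y$ in it, a separating suffix $z$ such that exactly one of $xz, yz$ lies in $L_{n+1}$. The central tool is the split characterization obtained by analyzing $V_m$-embeddings in $w^2$: $w \in L_m$ iff there exists $k \in \{0, \ldots, m\}$ with $a_1 \cdots a_k \subword w$ and $a_{k+1} \cdots a_m \subword w$, where $k$ records where the embedding crosses between the two copies of $w$. All separating suffixes will be built from the $L_n$- and $L_{n-1}$-separators supplied by $\mathcal{F}_n$ and $\mathcal{F}_{n-1}$ by decorating with additional occurrences of $a_n$ and $a_{n+1}$.

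First, a case analysis on $k$ yields two reduction lemmas: (H1) for $u$ avoiding $a_{n+1}$, $u a_{n+1} \in L_{n+1}$ iff $u \in L_n$; and (H2) for $u, z$ both avoiding $a_{n+1}$, $u a_{n+1} z a_{n+1} \in L_{n+1}$ iff $uz \in L_n$. The intuition is that each $a_{n+1}$ in the word is a candidate position for the terminal letter of $V_{n+1}$ across the two copies in $w^2$, and the lemmas express the resulting combinatorial freedom. They settle the separations within the first two subsets: if $z$ is an $L_n$-separator of $x, y \in \mathcal{F}_n$, then $z a_{n+1}$ separates $x$ from $y$ in $L_{n+1}$ via (H1) applied to $u \in \{xz, yz\}$; and the same $z a_{n+1}$ separates $x a_{n+1}$ from $y a_{n+1}$ in $L_{n+1}$ via (H2) with $u \in \{x, y\}$.

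For the third subset, the key specialized equivalence is $a_{n+1} u a_n \in L_{n+1}$ iff $V_{n-1} \subword u$ (for $u$ over $\{a_1, \ldots, a_{n-1}\}$): the terminal $a_{n+1}$ of $V_{n+1}$ must be matched by the $a_{n+1}$ that begins the second copy of $w$, which forces $a_n$ onto the middle $a_n$ and $a_1 \cdots a_{n-1}$ into $u$. The exclusion of $V_{n-1}$ from $\mathcal{F}_{n-1}$ ensures every element of the third subset falls outside $L_{n+1}$ under the empty suffix, so a nontrivial separator is needed: given the $L_{n-1}$-separator $z$ of $x, y \in \mathcal{F}_{n-1} \setminus \{V_{n-1}\}$, I would append a suffix introducing an additional $a_{n+1}$ which unlocks new split points $k \leq n-1$ whose conditions reduce to an $L_{n-1}$-membership test on $xz$ versus $yz$. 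Cross-subset separations are then handled by exploiting the three distinct affix shapes (no $a_{n+1}$, trailing $a_{n+1}$, or $a_{n+1} \cdots a_n$ sandwich) together with (H1), (H2), and the specialized equivalence. The main obstacle will be this third subset and its cross-interactions: tracking where each $a_n$ and $a_{n+1}$ lands across the two copies of $w$ in $w^2$ is delicate, and the removal of $V_{n-1}$ is precisely what avoids the degeneracy that would otherwise make the separator construction collapse.
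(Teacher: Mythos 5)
Your split characterization of membership in $\sqrt{\upc(V_m)}$ and the reductions (H1), (H2) are correct, and they settle the pairs inside $\mathcal{F}_n$ and inside $\mathcal{F}_n a_{n+1}$ essentially as the paper does (the paper also uses the separator $za_{n+1}$ built from an $L_n$-separator $z$). The gap is that everything involving the third subset --- which is where the proposition actually lives, and the only place the exclusion of $V_{n-1}$ plays a role --- is left as an intention rather than proved. For a pair $a_{n+1}x''a_n\neq a_{n+1}y''a_n$ you only say you ``would append a suffix introducing an additional $a_{n+1}$''. As stated this is underspecified, and the natural choice fails: with suffix $za_{n+1}$, membership of $a_{n+1}x''a_n\,za_{n+1}$ in $\sqrt{\upc(V_{n+1})}$ amounts to $V_{n-1}\subword x''zx''$, not to $V_{n-1}\subword x''zx''z$, so an $L_{n-1}$-separator $z$ for $x'',y''$ need not separate (both memberships can fail). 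The paper's suffix is $za_na_{n+1}$; the extra $a_n$ is precisely what makes the condition collapse to $x''z\in\sqrt{\upc(V_{n-1})}$, and this is the kind of detail your ``delicate tracking'' remark defers instead of resolving.

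The cross pairs are in worse shape. Against $\mathcal{F}_n$ the alphabet argument (as in \Cref{quotient_diff_alph_sqrt}) does the job, but against $\mathcal{F}_na_{n+1}$ both words contain $a_{n+1}$, so no ``affix shape'' or alphabet argument applies. The paper splits this case in two: when the cores differ ($x''a_n\neq y'$) it produces a separator of the form $za_{n+1}$ from a separator at level $n$, and when they coincide ($y=x''a_na_{n+1}$ versus $x=a_{n+1}x''a_n$) it exhibits the concrete separator $z=V_{n-1}$, showing $V_{n+1}\subword yzy$ while $V_{n+1}\not\subword xzxz$ --- and it is exactly here that $x''\neq V_{n-1}$ is needed. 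You correctly sense that removing $V_{n-1}$ ``avoids a degeneracy'', but you never identify the offending pair nor the separator that exploits the exclusion; naming the obstacle is not the same as overcoming it. As submitted, the proposal proves the dividing-set property only for the first two summands and their mutual pairs, so it does not establish the proposition.
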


\begin{proof}

Let $u,v$ in $\mathcal{F}_n \bigcup \mathcal{F}_n a_{n+1}$, 
    If they do not come from the same summand, then their alphabet is not the same as one uses $a_{n+1}$ while the
    other does not. Thus by \Cref{quotient_diff_alph_sqrt}, their quotients are different.\\ 

    If they come from the same summand, first let us assume they come from $\mathcal{F}_n$. Then there is a word $x \in \Sigma^*$ such that $ux \in L_n$ and $vx \notin L_n$. Thus $V_n$ is a subword of $uxux$ but not of $vxvx	$, hence $u_{n+1}$ is a subword of $uxa_{n+1}uxa_{n+1}$ and not of $vxa_{n+1}vxa_{n+1}$. Thus, $L_{n+1}/u \neq L_{n+1}/v$. \\

    If they both are from $\mathcal{F}_n a_{n+1}$. Let's write $u = u'a_{n+1}$ and $v = v'a_{n+1}$. There is $x'
    \in \Sigma^*$ such that $V_n$ is subword of $u'x'u'x'$ but not of $v'x'v'x'$ or the other way around. Thus $u_{n+1}$ is a subword of $u'x'a_{n+1}u'x'a_{n+1}$ but not of $v'x'a_{n+1}v'x'a_{n+1}$ as $V_n$ is not a subword of it and $V_n$ is a subword of $V_{n+1}$.\\

Let us consider $x\neq y \in a_{n+1}(\mathcal{F}_{n-1} \setminus \{ V_{n-1} \}) a_n$. Let us write $x$ and $y$ as $x = a_{n+1}x''a_n$ and $y = a_{n+1}y''a_n$. As $x'',y'' \in \mathcal{F}_{n-1}$, there exists $z \in \Sigma^*$ such that $x''zx''z \in \upc (V_{n-1})$ and $y''zy''z \notin \upc (V_{n-1})$. Thus $V_{n+1} \subword a_{n+1}x''a_nz a_n a_{n+1} a_{n+1}x''a_n z a_n a_{n+1}$ but $V_{n+1}$ is not a subword of $a_{n+1}y''a_n z a_n a_{n+1}a_{n+1}y''a_n za_n a_{n+1}$. Thus, $x$ and $y$ give different quotients. \\

If $x \in a_{n+1}(\mathcal{F}_{n-1} \setminus \{ V_{n-1} \}) a_n$ and $y \in \mathcal{F}_n $, as $x$ contains
$a_{n+1}$ but not $y$,  the proof of \Cref{prop-2-n-quotients} entails that their quotients are different. \\

If $x \in a_{n+1}(\mathcal{F}_{n-1} \setminus \{ V_{n-1} \}) a_n$ and $y \in \mathcal{F}_n a_{n+1}$, let us write $x = a_{n+1}x''a_n$ and $y = y'a_{n+1}$. 

As $x''a_n \in \mathcal{F}_n$, if $x''a_n \neq y'$, there exists $z \in \Sigma^*$ such that $x'zx'z \in \upc (V_n)$ and $y'zy'z \notin \upc (V_n)$ where $x' = x''a_n$, as $x'\neq y'$. Thus, $V_{n+1} \subword a_{n+1}x'za_{n+1} a_{n+1}x'za_{n+1} = (xza_{n+1})^2$ but $V_{n+1}$ is not a subword of $y'a_{n+1}za_{n+1}  y'a_{n+1}za_{n+1} = (yza_{n+1})^2$. Thus $x$ and $y$ give different quotients. \\

If $x' = y'$, then $x = a_{n+1}x''a_n$, the nice thing is that by taking $z = V_{n-1}$ we have $ V_n \subword x''a_n z x'' a_n = x' z x' $ which gives $V_{n+1} \subword x''a_n a_{n+1} z x'' a_n a_{n+1} = y z y $ but $V_{n+1}$ is not a subword of $xzx = a_{n+1} x''a_n  z a_{n+1} x'' a_n $, as long as $x'' \neq V_{n-1}$. Thus $x,y$ give different quotients. \\

Thus, $a_{n+1}(\mathcal{F}_{n-1}\setminus V_{n-1})a_n \bigcup \mathcal{F}_{n} \bigcup \mathcal{F}_{n}a_{n+1}$ is a dividing set for $\sqrt{\upc(V_{n+1})}$.
\qed    
\end{proof}

\begin{proposition}
	
	For $n \geq 1$,  $\kappa(\sqrt{\upc(V_{n})}) \geq (\frac{1}{\sqrt{2}}-\frac{1}{4})(\sqrt{2}+1)^n \approx 0.46 \times 2.41^n$.

\end{proposition}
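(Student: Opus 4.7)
The plan is to construct inductively a family of dividing sets $\mathcal{G}_n$ of $\sqrt{\upc(V_n)}$, starting from $\mathcal{G}_1 = \{\epsilon, a_1\}$ and $\mathcal{G}_2 = \{\epsilon, a_1, a_2, a_1 a_2\}$ (which are dividing sets by \Cref{prop-2-n-quotients}), and applying the construction of the previous proposition to define $\mathcal{G}_{n+1}$ from $\mathcal{G}_n$ and $\mathcal{G}_{n-1}$. A straightforward induction shows that $\mathcal{G}_n \subseteq \{a_1,\ldots,a_n\}^*$ and that $V_n \in \mathcal{G}_n$ (since $V_{n+1} = V_n \cdot a_{n+1} \in \mathcal{G}_n \cdot a_{n+1} \subseteq \mathcal{G}_{n+1}$). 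The three pieces of the construction are therefore pairwise disjoint: the first has no occurrence of $a_{n+1}$, the second has $a_{n+1}$ only as its last letter, and the third begins with $a_{n+1}$ but ends with $a_n \neq a_{n+1}$. Writing $g_n = |\mathcal{G}_n|$ and summing sizes,
\[
g_{n+1} = 2 g_n + g_{n-1} - 1 \qquad (n \geq 2)\:.
\]

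Since $\kappa(\sqrt{\upc(V_n)}) \geq g_n$, it suffices to lower-bound this recurrence. Its characteristic polynomial $x^2 - 2x - 1$ has dominant root $\gamma = 1 + \sqrt{2}$, and $\tfrac{1}{2}$ is a constant particular solution since $2 \cdot \tfrac{1}{2} + \tfrac{1}{2} - 1 = \tfrac{1}{2}$. Setting $A = \tfrac{1}{\sqrt{2}} - \tfrac{1}{4}$, the plan is to prove by induction on $n \geq 1$ that $g_n \geq A\gamma^n + \tfrac{1}{2}$, which is strictly stronger than the announced bound.

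The base cases $g_1 = 2$ and $g_2 = 4$ exceed, respectively, $A\gamma + \tfrac{1}{2} = \tfrac{\sqrt{2}+5}{4} \approx 1.60$ and $A\gamma^2 + \tfrac{1}{2} = \tfrac{4\sqrt{2}+7}{4} \approx 3.16$, a direct numerical verification. For the inductive step, assuming the bound at $n-1$ and $n$,
\[
g_{n+1} = 2 g_n + g_{n-1} - 1 \geq A\gamma^{n-1}(2\gamma + 1) + \bigl(1 + \tfrac{1}{2} - 1\bigr) = A\gamma^{n+1} + \tfrac{1}{2}\:,
\]
where I use the characteristic identity $\gamma^2 = 2\gamma + 1$.

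The only delicate point is the $-1$ inhomogeneity in the recurrence, which I absorb by carrying the additive constant $\tfrac{1}{2}$ through the induction; the disjointness verification of the three pieces and the arithmetic with $\gamma$ are then elementary.
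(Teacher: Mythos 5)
Your proof is correct and takes essentially the same route as the paper: both iterate the construction of \Cref{better_dividing_set_sqrt_down} to get the recurrence $x_{n+1}\geq 2x_n+x_{n-1}-1$, homogenize it with the $\tfrac12$ shift, and extract the dominant root $1+\sqrt2$ with the constant $\tfrac{1}{\sqrt2}-\tfrac14$. The differences are only bookkeeping --- the paper applies the construction to maximal dividing sets and solves the order-2 recurrence in closed form before appealing to asymptotics and first values, whereas you iterate on explicitly constructed sets (legitimate, since the proof of \Cref{better_dividing_set_sqrt_down} never uses maximality, and your observations that $\mathcal{G}_n\subseteq\{a_1,\dots,a_n\}^*$ and $V_n\in\mathcal{G}_n$ supply the disjointness and the exact $-1$ that the size count needs) and finish with a clean induction on the bound $g_n\geq A\gamma^n+\tfrac12$, which is if anything tidier than the paper's ending.
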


\begin{proof}

Using \Cref{better_dividing_set_sqrt_down} we have that $a_{n+1}(\mathcal{F}_{n-1}\setminus V_{n-1})a_n \bigcup \mathcal{F}_{n} \bigcup \mathcal{F}_{n}a_{n+1}$ is a dividing set for $\sqrt{\upc(V_{n+1})}$. Thus for all $n \in \mathbb{N}_{>0}$ we can create dividing sets $\mathcal{F}_n$ of $\sqrt{\upc(V_{n})}$ verifying $|\mathcal{F}_{n+2}| \geq 2|\mathcal{F}_{n+1}| + |\mathcal{F}_{n}|-1$. \\

Let $v_n = |\mathcal{F}_n| - \frac{1}{2}$, we have the equation : $ v_{n+2} \geq 2v_{n+1} + v_n $. Thus $v_n \geq w_n$ where $w_0 = v_0=-\frac{1}{2}$, $w_1 = v_1=\frac{3}{2}$ and $ w_{n+2} = 2w_{n+1} + w_n  $. And using the formula for recurrent sequences of order $2$, we get $w_n = (\frac{1}{\sqrt{2}}-\frac{1}{4})(\sqrt{2}+1)^n - (\frac{1}{\sqrt{2}}+\frac{1}{4})(1 - \sqrt{2})^n$. Thus asymptotically, $w_n \sim (\frac{1}{\sqrt{2}}-\frac{1}{4})(\sqrt{2}+1)^n$ and  $|\mathcal{F}_n|\geq 0.46(2.41)^n$. In fact, by computing the first values and then using the asymptotic analysis, we get $|\mathcal{F}_n|\geq 0.46(2.41)^n$ for $n\geq 1$. 
\qed    
\end{proof}

\begin{definition}\label{cut_shuffle}
    For $w \in \Sigma^*$ we denote with $CS(w)$ the set $\bigcup_{w=tv} t \shuffle v$. This is the set of words obtained from $w$ by one ``cut and shuffle'' move. 
\end{definition}

\begin{proposition}\label{prop-minimal-sqrt-Ln}
    Let $V_n$ be a $n$-letter word of length $n$, $CS(V_n) = \{ \mbox{minimal words of } \sqrt{\upc (V_n)} \} $ and 
    $\upc (CS(V_n)) = \sqrt{\upc (V_n)}$.
\end{proposition}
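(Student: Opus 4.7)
The plan is to establish the second equality $\upc(CS(V_n)) = \sqrt{\upc(V_n)}$ first by unfolding the definitions, and then deduce the characterization of minimal elements from it. The key observation is that $x \in \sqrt{\upc(V_n)}$ iff $V_n \subword xx$, which is equivalent to the existence of a factorization $V_n = tv$ such that $t \subword x$ and $v \subword x$ (the first occurrence embeds $t$, the second embeds $v$). Now, being a superword of both $t$ and $v$ is the same as being a superword of some element of the shuffle $t\shuffle v$, so
\[
\sqrt{\upc(V_n)} \;=\; \bigcup_{V_n = tv} \upc(t\shuffle v) \;=\; \upc\Bigl(\bigcup_{V_n = tv} t\shuffle v\Bigr) \;=\; \upc(CS(V_n)).
\]
This immediately yields the second claim.

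For the minimality characterization, the two inclusions go as follows. First, every minimal element $m$ of $\sqrt{\upc(V_n)}$ lies in $CS(V_n)$: by the first part $m$ is a superword of some $c \in CS(V_n)$, and since $CS(V_n) \subseteq \upc(CS(V_n)) = \sqrt{\upc(V_n)}$, the minimality of $m$ forces $c = m$. Conversely, every $w \in CS(V_n)$ is minimal in $\sqrt{\upc(V_n)}$: such a $w$ belongs to $t \shuffle v$ for some factorization $V_n = tv$, so $|w| = |t| + |v| = n$, and because $V_n$ has pairwise distinct letters the sets $\Sigma(t)$ and $\Sigma(v)$ are disjoint, hence every letter of $V_n$ occurs exactly once in $w$. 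Any strict subword $w' \subsetneq w$ therefore misses some letter $a_i$ of $V_n$; then $w'w'$ also misses $a_i$, so $V_n \not\subword w'w'$, i.e.\ $w' \notin \sqrt{\upc(V_n)}$, proving $w$ is minimal.

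There is no real obstacle here: the only slightly delicate point is the equivalence between $V_n \subword xx$ and the existence of a factorization $V_n = tv$ with $t,v \subword x$, which should be stated carefully (using the fact that any embedding of $V_n$ into $xx$ splits $V_n$ according to which positions land in the first vs.\ second copy of $x$). Once that equivalence is written down, the proof is essentially a one-line computation plus an elementary minimality argument based on letter counts.
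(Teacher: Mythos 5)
Your plan is workable and, once one step is patched, gives a correct proof; but as written there is a concrete gap in the central computation. The claim ``being a superword of both $t$ and $v$ is the same as being a superword of some element of $t\shuffle v$'' is not a general fact: for $t=v=a$ and $x=a$ we have $t\subword x$ and $v\subword x$, yet $t\shuffle v=\{aa\}$ and $aa\not\subword a$. In general one only has $\upc(t\shuffle v)\subseteq\upc(t)\cap\upc(v)$, and it is the reverse inclusion that your first displayed equality needs. It does hold in the present situation, but only because $t$ and $v$ are a prefix and the complementary suffix of $V_n$, whose letters are pairwise distinct, so $\Sigma(t)\cap\Sigma(v)=\emptyset$: the positions of $x$ used by an embedding of $t$ and those used by an embedding of $v$ are then disjoint, and reading the union of these positions in order yields a word of $t\shuffle v$ that is a subword of $x$ of length $|t|+|v|$. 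You invoke the distinct-letter property of $V_n$ only in the minimality half; it is equally needed, and should be stated, here. Note that the point you flag as delicate (splitting an embedding $V_n\subword xx$ into $t\subword x$ and $v\subword x$) is actually unproblematic and holds for arbitrary words, whereas the shuffle step is where the hypothesis on $V_n$ genuinely enters. With that one sentence added, your argument is complete; the rest (minimal elements of $\sqrt{\upc(V_n)}$ lie in $CS(V_n)$ by minimality, and each $w\in CS(V_n)$ is minimal because it uses each of the $n$ letters exactly once, so any strict subword $w'$ misses a letter and $V_n\not\subword w'w'$) is correct.

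Your route also differs slightly from the paper's, in an interesting way. The paper proves $CS(V_n)\subseteq\sqrt{\upc(V_n)}$ by the algebraic computation $u^2\in(t\shuffle v)(v\shuffle t)\subseteq V_n\shuffle vt\subseteq\upc(V_n)$, and proves the converse only for words $u$ of length $n$: it takes the longest prefix $u_1$ of $V_n$ embedding in $u$, observes that the complementary suffix $u_2$ must embed in $u$ as well, and concludes $u\in u_1\shuffle u_2$ by counting letters, before appealing to the fact that an upward closed language is the upward closure of its minimal elements. Your version characterizes the whole language at once as $\bigcup_{V_n=tv}\upc(t\shuffle v)$ and then reads off both statements; this buys a slightly cleaner deduction of the second equality and makes explicit that every minimal word has length exactly $n$, a point the paper leaves implicit.
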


\begin{proof}
    Let $u$ be a word that can be obtained by a cut and shuffle of $V_n$, $u \in t \shuffle v $ where $ V_n = tv$.

    Let us take $r = vt$, we have 
\[ 
u^2 \in ( t \shuffle v )( v \shuffle t ) \subseteq V_n \shuffle r \subseteq \upc(V_n)\:.
\] 

    Hence, $u \in \sqrt{\upc (V_n)}$, and as $\sqrt{\upc (V_n)}$ is upward closed,
    $\upc (CS(V_n)) \subseteq \sqrt{\upc (V_n)}$. \newline 

    Let $u \in \sqrt{\upc (V_n)}$ such that $|u| = n$, which implies that $u$ is minimal in $\sqrt{\upc (V_n)}$. We have $u^2 \in \upc (V_n)$. 
    Let us call $u_1$ the longest prefix of $V_n$ that is a subword of $u$. 
Then as $V_n$ us a subword of $u^2$, we have that the suffix $u_2$ such that $V_n = u_1u_2$ is a subword of $u$ too. Finally, as $|u| = |V_n| = |u_1| + |u_2|$ and $u_1,u_2$ have different letters, we have that $u \in u_1 \shuffle u_2$. 

Hence $u \in CS(V_n)$, and since any upward closed $L$ is $\bigcup_{u\text{ minimal in }L}\upc(u)$,
we get $ \upc (CS(V_n)) = \sqrt{\upc (V_n)}$.
\qed    
\end{proof}

\begin{proposition}\label{inversion_counting}
    Let $V_n=a_1a_2\cdots a_n$ be word of length $n$ where all letters are different. Then $|CS(V_n)| = 2^n -n$.
\end{proposition}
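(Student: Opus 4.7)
The plan is to use inclusion--exclusion on the natural decomposition $CS(V_n) = \bigcup_{k=0}^n S_k$, where $S_k = a_1\cdots a_k \shuffle a_{k+1}\cdots a_n$. First I would observe that since $V_n$ uses $n$ distinct letters, every factorization $V_n = tv$ splits the alphabet disjointly, and every word in $t\shuffle v$ has length exactly $n$ and uses each letter of $V_n$ exactly once. Hence every element of $CS(V_n)$ is a permutation of $a_1,\ldots,a_n$, and a classical count gives $|S_k| = \binom{n}{k}$ (a shuffle is determined by choosing which $k$ of the $n$ positions host the letters of $t$, in order).

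The key step is then to determine the pairwise intersections. The characterisation I would use is: a permutation $u$ of $a_1,\ldots,a_n$ lies in $S_k$ if and only if, reading $u$ left to right, the letters $a_1,\ldots,a_k$ appear in their natural order and the letters $a_{k+1},\ldots,a_n$ appear in their natural order. Suppose now that $u \in S_k \cap S_{k'}$ with $0 \leq k < k' \leq n$. The case $k'=n$ is immediate since $S_n = \{V_n\}$, so assume $k' < n$. Membership in $S_k$ gives the order on $\{a_{k+1},\ldots,a_n\}$, in particular $a_{k'}$ occurs before $a_{k'+1}$; membership in $S_{k'}$ gives the orders on $\{a_1,\ldots,a_{k'}\}$ and on $\{a_{k'+1},\ldots,a_n\}$. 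Chaining these three orderings at the split $k'$ yields that all of $a_1,\ldots,a_n$ appear in order, i.e. $u = V_n$.

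I then conclude by counting. The word $V_n$ itself lies in every $S_k$ (use the trivial shuffle $V_n \in t \shuffle v$), accounting for $n+1$ memberships, while every other element of $CS(V_n)$ lies in exactly one $S_k$ by the intersection argument. Therefore
\[
  |CS(V_n)| \;=\; 1 + \sum_{k=0}^n \bigl(|S_k|-1\bigr) \;=\; \sum_{k=0}^n \binom{n}{k} - (n+1) + 1 \;=\; 2^n - n,
\]
as desired.

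The only delicate step is the intersection argument; the rest is bookkeeping. The subtlety there is that from $u \in S_k \cap S_{k'}$ one gets two overlapping order constraints, and one must verify that they can be glued at the cut $k'$ using the fact that $a_{k'}$ and $a_{k'+1}$ both lie in the ``long'' second block $\{a_{k+1},\ldots,a_n\}$ of the $S_k$-decomposition; the boundary cases $k=0$ or $k'=n$ are handled separately since then $S_k$ or $S_{k'}$ is already the singleton $\{V_n\}$.
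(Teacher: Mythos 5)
Your proof is correct and follows essentially the same route as the paper: decompose $CS(V_n)$ over the cuts $V_n=tv$, count $\binom{n}{|t|}$ words in each shuffle, and correct for the fact that $V_n$ is the only word counted more than once. The only difference is that you explicitly justify the key point that distinct cuts share no word other than $V_n$ (via the relative-order characterization of $S_k\cap S_{k'}$), a step the paper dismisses as ``quite easy to see.''
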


\begin{proof}
	It is quite easy to see that for $w = tv$ there are $\binom{|w|}{|t|}$ different words in $t \shuffle v$, but one of them is $w$. Thus $|CS(V_n)| = \sum_{0\leq |t| \leq n-1} \binom{n}{|t|} - (n-1) = 2^n -n$.
\qed    
\end{proof}

\section{Missing proofs on the substitution operator}

\begin{proposition}\label{substitution_automaton}

The substitution operator preserves regularity. 

\end{proposition}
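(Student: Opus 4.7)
The plan is to give an explicit automaton construction for $\rho(L)$, starting from a DFA $\mathcal{A}_L = (Q_L, \Sigma, \delta_L, q_L^0, F_L)$ recognizing $L$ and, for each $i = 1, \ldots, n$, an NFA $\mathcal{A}_i = (Q_i, \Gamma, \delta_i, q_i^0, F_i)$ recognizing $K_i$. I assume without loss of generality that the state sets $Q_L, Q_1, \ldots, Q_n$ are pairwise disjoint. Since the class of regular languages is closed under $\epsilon$-NFA recognition, it will suffice to produce a finite $\epsilon$-NFA whose language is $\rho(L)$.

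The new $\epsilon$-NFA $\mathcal{A}$ has state set $Q = Q_L \cup \bigcup_i Q_i$, initial state $q_L^0$ and accepting set $F_L$. Its transitions are defined by: keeping every internal $\Gamma$-transition of each $\mathcal{A}_i$; and, for every transition $\delta_L(p, a_i) = q$ of $\mathcal{A}_L$, removing this $a_i$-labelled edge and adding instead an $\epsilon$-transition from $p$ to $q_i^0$ together with an $\epsilon$-transition from every $f \in F_i$ to $q$. Intuitively, each $a_i$-step of $\mathcal{A}_L$ has been replaced by a detour through $\mathcal{A}_i$ that reads an arbitrary word of $K_i$.

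The correctness claim is that the language recognized by $\mathcal{A}$ is exactly $\rho(L)$. For the inclusion from left to right, any accepting run on $w \in \Gamma^*$ decomposes into a sequence of segments that alternate between $\epsilon$-jumps from a $Q_L$-state $p$ into some $q_{i_j}^0$, a path inside $\mathcal{A}_{i_j}$ that reads a subword $w_j \in K_{i_j}$, and an $\epsilon$-jump back into $Q_L$; reading off the indices in order yields an accepting run $q_L^0 \xrightarrow{a_{i_1}} \cdots \xrightarrow{a_{i_m}} q \in F_L$ in $\mathcal{A}_L$, witnessing $a_{i_1}\cdots a_{i_m} \in L$, and $w = w_1 \cdots w_m \in \rho(a_{i_1}\cdots a_{i_m}) \subseteq \rho(L)$. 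Conversely, given $w \in \rho(L)$ with witness $a_{i_1}\cdots a_{i_m} \in L$ and factorization $w = w_1 \cdots w_m$ with $w_j \in K_{i_j}$, one lifts the accepting run of $\mathcal{A}_L$ on $a_{i_1}\cdots a_{i_m}$ to an accepting run of $\mathcal{A}$ on $w$ by splicing in, between each pair of consecutive $Q_L$-states, an accepting run of $\mathcal{A}_{i_j}$ on $w_j$, linked via the $\epsilon$-edges.

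There is no real obstacle here: the construction is routine and both directions of the correctness proof are straightforward inductions on the length of the run (or equivalently on $m$). The only mild subtlety is the case $w_j = \epsilon$, which forces $\epsilon \in K_{i_j}$ and hence $q_{i_j}^0 \in F_{i_j}$, so that the detour from $p$ to $q$ consists of the two $\epsilon$-transitions $p \to q_{i_j}^0 \to q$ and is indeed available in $\mathcal{A}$. Since $\mathcal{A}$ has finitely many states and a finite transition table, the classical removal of $\epsilon$-transitions and the subset construction give a finite DFA for $\rho(L)$, establishing regularity. \qed
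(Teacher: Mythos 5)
There is a genuine flaw in your construction, not just in the write-up: you use a \emph{single shared copy} of each $\mathcal{A}_i$ (your state set is $Q_L\cup\bigcup_i Q_i$), with entry $\epsilon$-edges from every source of an $a_i$-transition and exit $\epsilon$-edges to every target of an $a_i$-transition. This permits ``cross-over'' runs that enter the copy of $\mathcal{A}_i$ from a state $p$ and leave it towards a state $q$ such that $(p,a_i,q)$ is \emph{not} a transition of $\mathcal{A}_L$, so your claim that reading off the segments of an accepting run of $\mathcal{A}$ yields an accepting run of $\mathcal{A}_L$ is exactly where the argument breaks. Concretely, take $L=\{ab\}$ with its DFA $s_0\xrightarrow{a}s_1\xrightarrow{b}s_2$ (plus a sink $s_3$, with $s_1\xrightarrow{a}s_3$), and the substitution $a\leftarrow\{c\}$, $b\leftarrow\{b\}$, so $\rho(L)=\{cb\}$. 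In your automaton the copy of $\mathcal{A}_a$ has entry edges from both $s_0$ and $s_1$ and an exit edge to $s_1$ (coming from the transition $s_0\xrightarrow{a}s_1$); hence the run $s_0\to k_0\xrightarrow{c}k_1\to s_1\to k_0\xrightarrow{c}k_1\to s_1\to m_0\xrightarrow{b}m_1\to s_2$ accepts $ccb\notin\rho(L)$. So the left-to-right inclusion of your correctness claim is false as stated.

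The fix is the one the paper uses: insert a \emph{fresh} copy of $\mathcal{A}_K$ (resp.\ $\mathcal{A}_i$) for each individual transition $q\step{a}q'$ of $\mathcal{A}_L$, with the entry edge only from $q$ and the exit edges only to $q'$; then every excursion through a copy is tied to one specific transition of $\mathcal{A}_L$, the segment decomposition does produce a genuine run of $\mathcal{A}_L$, and both inclusions go through as you sketched (an equally valid repair is one copy per pair (source state, letter), exiting only to $\delta_L(p,a_i)$, using determinism of $\mathcal{A}_L$). Apart from this, your overall strategy --- build an $\epsilon$-NFA, prove $L(\mathcal{A})=\rho(L)$ by lifting runs in both directions, then remove $\epsilon$-transitions and determinize --- matches the paper's proof, and your remark about the case $w_j=\epsilon$ is fine for plain NFAs; but as written the construction recognizes a strictly larger language, so the proof does not establish the proposition.
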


\begin{proof}
We know this since we proved it preserves the closedness and upward closed/downward closed languages are regular languages, but this is an occasion to explain the classical construction of the automaton for the substitution, which might help to visualize better the situation.\\ 

Let us consider the classical construction for the substitution. Let us call $\mathcal{A}_L,\mathcal{A}_K$ the
automata for $L,K$. We modify $\mathcal{A}_L$ in the following way: For every transition $q \step{a} q'$ in
$\mathcal{A}_L$ that reads the
letter $a$, we insert a copy of $\mathcal{A}_K$ without final states, create an $\epsilon$-transition from $q$ to the
initial state of (that copy of) $\mathcal{A}_K$, and from each final state of  $\mathcal{A}_K$ we  add an
$\epsilon$-transitions to $q'$. Finally, we remove the original $q\step{a}q'$ transition,
determinize the resulting automaton, and obtain a DFA for the substitution $L^{a\leftarrow K}$. \\

In order to prove that the construction is correct, let us call $\mathcal{A}$ the automaton before the determinization, and let us show that $L(\mathcal{A}) =  L^{a \leftarrow K}$. Since the determinization process preserves the language, we would have the validity of the automaton. \\

Let $x \in L^{a \leftarrow K}$, there exists $x_L \in L$ such that  
$x_L = x_1 a x_2 ... x_{n-1} a x_n $ with $x_i \in \Sigma^*$ and $x = x_1 y1 x_2 ... x_{n-1} y_{n-1} x_n $ with $y_i \in K$. Let $q_0 , ... q_r $ be the path of $x_L$ in $\mathcal{A}_L$. Then by inserting the path for $y_i$ in $\mathcal{A}_K$ where there are transitions labeled $a$ we get a path for $x$ in $\mathcal{A}$ using the $\epsilon$-transitions. Thus $L^{a \leftarrow K} \subseteq L(\mathcal{A})$. \\

Let $x \in L(\mathcal{A})$ then by studying the path that $x$ follows in the automaton, we factorize $x$ under the form $x = x_1 y1 x_2 ... x_{n-1} y_{n-1} x_n $ with $x_i \in \Sigma^*$ and $y_i \in K$. Furthermore, the construction implies that $x_L = x_1 a x_2 ... x_{n-1} a x_n  \in L$ as we inserted $\mathcal{A}_K$ on edges labeled by $a$. Thus $L(\mathcal{A}) \subseteq L^{a \leftarrow K}$.  
\qed    
\end{proof}

\section{Missing proofs on iteration operator}

\begin{proposition}\label{borne_L_k}
In the case of $L$ upward closed, with state complexity $n$, the upper bound $k(n-1) + 1$ on the state complexity of $L^k$ is tight for $|\Sigma | \geq 1$.

\end{proposition}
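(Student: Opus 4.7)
The plan is to exhibit a single unary example that saturates the upper bound $k(n-1)+1$, so that tightness already holds in the smallest possible alphabet case $|\Sigma| \geq 1$.

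I would take $\Sigma = \{a\}$ and set $L = a^{n-1} \cdot a^* = \{a^i \mid i \geq n-1\}$. First I would check that $L$ is upward closed (immediate from the definition, since any superword of some $a^i$ with $i \geq n-1$ is an $a^j$ with $j \geq i \geq n-1$) and compute its quotients: for $0 \leq j \leq n-1$ one has $L/a^j = a^{n-1-j} \cdot a^*$, and for $j \geq n-1$ one has $L/a^j = a^*$. These give exactly $n$ distinct quotients $a^* , a \cdot a^* , \ldots , a^{n-1} \cdot a^*$ (no $\emptyset$ quotient appears because $L$ is cofinite in $a^*$). Hence $\kappa(L)=n$.

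Next, I would compute $L^k$. Since $L = a^{n-1}\cdot a^*$, a direct induction on $k$ gives
\[
L^k = a^{k(n-1)} \cdot a^* = \{a^i \mid i \geq k(n-1)\}\:.
\]
Applying the same quotient computation to this upward closed unary language yields $\kappa(L^k) = k(n-1) + 1$, matching the claimed bound. Since we already know (from the iterated use of the concatenation bound $m+n-1$ of \cite{ideals} Theorem 9) that $k(n-1)+1$ is an upper bound on $\kappa(L^k)$ for every upward closed $L$ of state complexity $n$, this computation establishes tightness.

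There is no real obstacle here: the whole argument reduces to a one-line computation in the unary case, where upward closed languages are completely described by a single threshold, and concatenation simply adds thresholds. The only care needed is to be consistent about whether the empty quotient is counted (it is not, since $L$ is cofinite and $L^k$ is as well), so that both $\kappa(L)$ and $\kappa(L^k)$ are read off correctly as $n$ and $k(n-1)+1$ respectively.

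\qed
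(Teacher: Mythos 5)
Your proposal is correct and follows essentially the same route as the paper: the same witness $L = \mathtt{a}^{n-1}\mathtt{a}^*$ over a unary alphabet, the same quotient count giving $\kappa(L)=n$ and $\kappa(L^k)=k(n-1)+1$, with the upper bound coming from iterating the concatenation bound. Your version is if anything slightly more explicit about why no empty quotient arises, which is a fine addition but not a different argument.
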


\begin{proof}

First of all the bound $k(n-1) + 1$ corresponds to the concatenation upper bound applied $k$ times. Thus is it a valid upper bound for $L^k$. We now have to show that it is actually tight. \\

Let $L_n = \texttt{a}^{n-1} \texttt{a}^*$ be the language of words having at least $n-1$ \texttt{a}, this gives $L_n^k =\texttt{a}^{(n-1)k}\texttt{a}^*$. Let $w \in L_n$, and $x$ a word such that $w \subword x$, then $x$ has more \texttt{a}'s than $w$, implying $x \in L_n$. 
The quotients of $L_n$ are $\{ \texttt{a}^i\texttt{a}^* \ | \ 0 \leq i \leq n-1 \}$. Hence, $L_n$ has state complexity $n$. Finally, using the same idea, $L_n^k$ has state complexity $(n-1)k+1$. 
\qed    
\end{proof}

\end{document}